\makeatletter\@addtoreset{equation}{section}\makeatother
\begin{document}

\theoremstyle{plain}
\newtheorem{theorem}{Theorem}[section]
\newtheorem{proposition}[theorem]{Proposition}
\newtheorem{corollary}[theorem]{Corollary}
\newtheorem{condition}[theorem]{Condition}
\newtheorem{example}[theorem]{Example}
\newtheorem{lemma}[theorem]{Lemma}
\theoremstyle{definition}
\newtheorem{definition}[theorem]{Definition}
\theoremstyle{remark}
\newtheorem{remark}[theorem]{Remark}
\newtheorem{notation}[theorem]{Notation}

\newcommand{\grad}{\nabla}
\newcommand{\D}{\partial}
\newcommand{\E}{\mathcal{E}}
\newcommand{\N}{\mathbb{N}}
\newcommand{\dom}{\mathcal{D}}
\newcommand{\ess}{\operatorname{ess~inf}}

\title[An Invariance Principle]{An Invariance Principle for the Tagged Particle Process in Continuum with Singular Interaction Potential}

\author{Florian Conrad, Torben Fattler and Martin Grothaus}

\address{
Florian Conrad, Department of Mathematics, University of Kaiserslautern, P.O.Box 3049, 67653
Kaiserslautern, Germany and Mathematics Department, Bielefeld University, P.O.Box 100131, 33501 Bielefeld, Germany.
{\rm \texttt{Email: fconrad@math.uni-bielefeld.de}}\newline
\indent Torben Fattler, Department of Mathematics, University of Kaiserslautern, P.O.Box 3049, 67653
Kaiserslautern, Germany.
{\rm \texttt{Email: fattler@mathematik.uni-kl.de}}\newline
\indent Martin Grothaus, Department of Mathematics, University of Kaiserslautern, P.O.Box 3049, 67653
Kaiserslautern, Germany.
{\rm \texttt{Email: grothaus@mathematik.uni-kl.de}}
}

\begin{abstract}
We consider the dynamics of a tagged particle in an infinite particle environment moving according to a stochastic gradient dynamics. For singular interaction potentials this tagged particle dynamics was constructed first in \cite{FaGr08}, using closures of pre-Dirichlet forms which were already proposed in \cite{GP85} and \cite{Os98}. The environment dynamics and the coupled dynamics of the tagged particle and the environment were constructed separately. Here we continue the analysis of these processes: Proving an essential m-dissipativity result for the generator of the coupled dynamics from \cite{FaGr08}, we show that this dynamics does not only contain the environment dynamics (as one component), but is, given the latter, the only possible choice for being the coupled process. Moreover, we identify the uniform motion of the environment as the reversed motion of the tagged particle. (Since the dynamics are constructed as martingale solutions on configuration space, this is not immediate.) Furthermore, we prove ergodicity of the environment dynamics, whenever the underlying reference measure is a pure phase of the system. Finally, we show that these considerations are sufficient to apply \cite{DeM89} for proving an invariance principle for the tagged particle process. We remark that such an invariance principle was studied before in \cite{GP85} for smooth potentials, and shown by abstract Dirichlet form methods in \cite{Os98} for singular potentials.

Our results apply for a general class of Ruelle measures corresponding to potentials possibly having infinite range, a non-integrable singularity at $0$ and a nontrivial negative part, and fulfill merely a weak differentiability condition on $\mathbb R^d\setminus\{0\}$.
\end{abstract}

\thanks{\textit{2010 AMS Mathematics Subject Classification}: 60J60, 82C22, 60K37, 60F17.}
\keywords{Diffusion processes, interacting continuous particle systems, random
environment, invariance principles.}

\maketitle

\newcommand{\maggi}[1]{{\textcolor[rgb]{1,0,0}{*}}\marginpar{\tiny{#1}}}
\newcommand{\maggiI}[1]{\maggi{#1}}
\newcommand{\R}{{\mathbb R}}
\newcommand{\C}{{\mathbb C}}
\newcommand{\K}{{\mathbb K}}
\renewcommand{\N}{{\mathbb N}}
\newcommand{\Z}{{\mathbb Z}}
\newcommand{\Q}{{\mathbb Q}}

\section{Introduction}

In \cite{DeM89}, de Masi, Goldstein, Ferrari and Wick developed, based on ideas by Kipnis and Varadhan (see \cite{KV86}), a general method for proving an invariance principle for antisymmetric functions of reversible Markov processes and gave, among other examples, a sketch of an application of this method to the tagged particle process corresponding to a stochastic gradient dynamics, which describes a system of stochastically perturbed particles interacting via a pair potential $\phi$ (which they assumed to be $C^2$ with compact support). 

The stochastic gradient dynamics of particles at positions $x^i_t\in\R^d$, $i\in\N\cup\{0\}$, $d\in\N$, at time $t\geq 0$, from which one starts is described by the following equations:
\begin{equation}\label{eqn:sgd}
dx^i_t=-\sum_{\stackrel{\scriptstyle{j=1}}{j\neq i}}^\infty \nabla\phi(x^i_t-x^j_t)\,dt+\sqrt{2} dB_t^i,\quad i\in\N\cup\{0\},
\end{equation}
where $(B_t^i)_{t\geq 0}$, $i\in\N$, are independent $\R^d$-valued Brownian motions. It is a non-trivial question whether a solution in the sense of a sequence of random paths $(x^i_t)_{t\geq 0}$ solving the above equations can be constructed for a large class of initial configurations. For constructions of solutions in this sense see e.g.~\cite{La77a}, \cite{Fr87}. However, these constructions require either $d\leq 2$ or rather restrictive assumptions on $\phi$ (including compact support and the absence of singularities). On the other hand, the existence of Dirichlet form solutions or even martingale solutions on the space of locally finite configurations on $\R^d$ has been shown in great generality, see e.g.~\cite{Yo96}, \cite{Os96}, \cite{AKR98b}. Considering the tagged particle process means to focus on the trajectory of one particular particle and considering the rest of the system as its environment. Ignoring the question of existence of solutions of \eqref{eqn:sgd}, this can be done by denoting the position of the tagged particle at time $t$ by $\xi_t:=x_t^0$, and the environment system by $y_t^i:=x_t^i-x_t^0$, $i\in\N$. The thereby obtained joint equations of the tagged particle and its environment read as follows:
\begin{align}
&d\xi_t=\sum_{i=1}^\infty\nabla\phi(y^i_t)\,dt+\sqrt{2}\,dB^0_t,\label{eqn:tag}\\
&\left.
\begin{matrix}
dy^i_t=-\sum_{\stackrel{\scriptstyle{j=1}}{j\not=i}}^\infty\nabla\phi(y^i_t-y^j_t)-\nabla\phi(y^i_t)-\sum_{j=1}^\infty\nabla\phi(y^j_t)\,dt\\
+\sqrt{2}\,d(B^{i}_t-B^0_t),   
\end{matrix}\right\},\quad i\in\N,\label{eqn:env}
\end{align}
$t\geq 0$. Again, the construction schemes from \cite{La77a}, \cite{Fr87} may be applied to construct directly solutions to this system, as is mentioned in \cite{GP85}, \cite{DeM89}. However, of course, the same restrictions apply as in the case of the system \eqref{eqn:sgd}. For a physically realistic setting (singular potentials, e.g.~of Lennard-Jones type, arbitrary dimension, e.g.~$d=3$) the first construction of solutions in the sense of the martingale problem on the space of locally finite configurations was given in \cite{FaGr08}. Let us make the solution concept more precise. The process constructed in \cite{FaGr08} has values in $\R^d\times\Gamma$, where
$$
\Gamma:=\{\gamma\subset\R^d\,|\,\sharp(\gamma\cap \Lambda)<\infty\mbox{ for all bounded $\Lambda\subset\R^d$}\},
$$
is the space of simple locally finite configurations in $\R^d$ (for $d=1$ the state space is slightly larger). Here $\R^d$ represents the position of the tagged particle and the elements of $\Gamma$ replace sequences $(y^i)_{i\in\N}$ of positions of the particles in the environment. That the process solves \eqref{eqn:tag}, \eqref{eqn:env} in the sense of the martingale problem means that it solves the martingale problem for the generator corresponding to these equations which is (at first) defined on $C_0^\infty(\R^d)\otimes \mathcal FC_b^\infty(C_0^\infty(\R^d),\Gamma)$, i.e.~functions which are $C_0^\infty$ in the component of the tagged particle and smooth functions depending on the particles from a bounded region in space. (In fact, the martingale problem is solved for all functions in the domain of the Friedrichs extension of this generator in a suitable $L^2$-space. For more details see Section \ref{sec:existence}.) It is not clear whether these solutions lead to weak solutions in the ordinary sense (it is e.g. not clear whether one can distinguish and enumerate the particles or whether particles enter from infinity etc.). In particular, any result one may obtain for this type of solutions has to be derived without using the equations \eqref{eqn:tag}, \eqref{eqn:env}, but instead the semimartingale decompositions for functions from the domain of the generator. The construction in \cite{FaGr08} is done with the help of a quasi-regular Dirichlet form defined on $L^2(\R^d\times\Gamma;dx\otimes \mu)$, where $\mu\in\mathcal G_{ibp}^{gc}(\Phi_\phi,ze^{-\phi}dx)$, the set of grand canonical Gibbs measures for the interaction potential $\phi$ with intensity measure given by $z e^{-\phi}dx$ for some $z>0$, such that the correct integration by parts formulae are valid; roughly spoken the latter means that elements of $\mathcal G_{ibp}^{gc}(\Phi_\phi,ze^{-\phi}dx)$ are ``translation invariant except of the nonconstant intensity'' (again see the discussion in Section \ref{sec:existence}). In addition to this ``coupled process'' (consisting of the joint dynamics of the tagged particle and the environment) in \cite{FaGr08} an ``environment process'' is constructed from another Dirichlet form on $L^2(\Gamma;\mu)$, which gives a martingale solution only of \eqref{eqn:env} in the above sense. (The connection between both Dirichlet forms and processes is a priori not clear, but can be established, see below.)

\vspace{2ex}Our aim is to apply the general result on invariance principles from \cite{DeM89} to the martingale solutions from \cite{FaGr08} by adapting it to this type of solutions. A detailed understanding of the dynamics, in particular, the precise relation between the path of the tagged particle and its environment, is essential for doing so. 

We obtain the applicability of \cite{DeM89} by answering a number of natural questions about solutions constructed in \cite{FaGr08}. Let us at first explain these questions, which are interesting in themselves, and then give the connection to what we need for the invariance principle.

The first question is the following: We already mentioned above that it seems at least difficult to get from martingale solutions to weak solutions of the above equations. However, it should be possible to verify at least the validity of the $d$-dimensional equation \eqref{eqn:tag}. This consists of two tasks: First, one has to verify that this stochastic equation is indeed fulfilled with \emph{some} Brownian motion $(B_t^0)_{t\geq 0}$. Then one has to make sure that this Brownian motion ``fulfills'' \eqref{eqn:env}, which means here that it should coincide with the stochastic part of the uniform motion of the particles in the environment. As is mentioned in \cite{DeM89}, if one knows that \eqref{eqn:env} and \eqref{eqn:tag} are fulfilled, one can recover $B_t^0$ from the environment process by computing $-\lim_{N\to\infty} \frac{1}{N}\sum_{i=1}^N (B_t^i-B_t^0)$, showing in particular that the displacement of the tagged particle is a functional of the environment. In Section \ref{sec:displacement} we show that this can also be done when starting with semimartingale decompositions of functions from $C_0^\infty(\R^d)\times\mathcal FC_b^\infty(C_0^\infty(\R^d),\Gamma)$ and thus one does not need to have a solution of the above equations in a stronger sense than in the sense of the martingale problem.
Recently, we became aware of the article \cite{Os11}. Therein the question of the validity of equation (\ref{eqn:tag}) is discussed. Osada develops a new approach to solve infinite dimensional stochastic differential equations for a general class of potentials. The author focuses in the application on 2D Coulomb potentials, but interaction potentials from the Ruelle class might also be treated, see \cite[Example 21]{Os11}.

The second question is the question how the environment process and the environment part (second component) of the coupled process from \cite{FaGr08} are related. Since the construction of the environment process is done from another Dirichlet form (the connection of which with the Dirichlet form of the coupled process is not clear) at first sight the environment part of the coupled process does not need to coincide with this environment process. That the processes are indeed the same, is verified in \cite{Os98}, see the proof of Lemma 2.3 therein. Nevertheless, we give another proof of this fact by means of a stronger result: We prove an $L^q$-uniqueness property (i.e.~essential m-dissipativity result) of the generator of the coupled process on the set $C_0^\infty(\R^d)\otimes D(L_{env})_b$, where $D(L_{env})_b$ denotes the set of bounded functions from the domain of the generator of the environment process. One should mention that this domain is rather large, since it includes somehow ``full'' knowledge of the environment part of the process. Nevertheless, it is not trivial to prove it, because on a technical level (in terms of generators) one sees that the coupling between the tagged particle and the environment as well as the drift acting on the tagged particle is not a small perturbation of the stochastic motion of the environment and the tagged particle. The $L^q$-uniqueness property does not only give sufficient knowledge of the semigroup corresponding to the coupled process in order to identify the environment process as the environment component of the coupled process (and thus essentially reprocudes the mentioned result from \cite{Os98}). It also shows that for the environment process from \cite{FaGr08} there exists exactly one possible tagged particle process: it is not possible to construct another ``coupled process'' having the same environment part and solving the martingale problem for the generator corresponding to \eqref{eqn:tag}, \eqref{eqn:env}. The $L^q$-uniqueness result and the identification of the environment component of the coupled process are contained in Section \ref{sec:uniqueness}.

The third question is simply the question whether the environment process is ergodic. For the stochastic gradient dynamics an answer is given in \cite{AKR98b}: If the grand canonical reference measure $\mu$ for the Dirichlet form is an extremal canonical Gibbs measure, ergodicity is shown there in great generality. Although the approach given in that paper should also work in the present setting, it seems more effective to make use of the fact that contrary to the case of the stochastic gradient dynamics, the environment process contains also a non-degenerate uniform stochastic motion. In fact, we find that under assumptions which (except possibly in very exceptional cases and in the one-dimensional setting with a singular potential) contain the ones from \cite{AKR98b}, we obtain ergodicity from this uniform motion, if $\mu$ is extremal in $\mathcal G_{ibp}^{gc}(\Phi_\phi,z e^{-\phi}\,dx)$. In particular, we do not need to check whether this measure is also extremal as a canonical Gibbs measure. Therefore the assumption on $\mu$ is always nonvoid. The result on ergodicity is given in Section \ref{sec:ergodicity}.

Let us now explain the connection to the approach from \cite{DeM89} for proving the invariance principle. One of the main results in that paper states, roughly spoken, that if one has an antisymmetric integrable (plus some more properties) adapted functional of an ergodic reversible Markov process, then under diffusive scaling this functional converges to a Brownian motion scaled by a constant diffusion matrix, if has a semimartingale decomposition (the ``standard decomposition'') with a square integrable martingale and a square integrable mean forward velocity. 

If one wants to apply this to the tagged particle process from \cite{FaGr08}, the environment process plays the role of the reversible ergodic (question 3) Markov process, and the displacement of the tagged particle is the antisymmetric functional. Therefore, one needs to prove that indeed the displacement $\xi_t-\xi_0$ of the tagged particle can be described as an adapted functional on the probability space of the environment process. For this one needs to show that $\xi_t-\xi_0$ is a functional of the environment part of the coupled process (question 1) and that this environment part is distributed as the environment process (question 2).

In the case of smooth compactly supported potentials the invariance principle for the tagged particle process was studied in \cite{GP85}. Another derivation of an invariance principle, in the present setting, i.e.~with singular potentials, has been done in \cite{Os98} by means of a general theorem which is shown by abstract Dirichlet form methods. We have chosen an approach which can be seen complementary: By a deeper analysis of the particular process under consideration (the answers to question 1-3 above), which in our opinion is interesting in itself, we obtain the invariance principle as a by-product using an older and less involved framework, namely the one of \cite{DeM89}.

We summarize the core results of this article:
\begin{itemize}
\item The standard decomposition is provided for the path of the tagged particle as well as the existence of the corresponding mean forward velocity in the sense of \cite{DeM89}, see Theorem \ref{thm:martingallemma} and Lemma \ref{lem:meanforward}.
\item We represent the displacement of the tagged particle as functional of the environment, making (the use of $B^0_t$ in) equation \eqref{eqn:tag} rigorous, see Theorem \ref{thm:displacement}.
\item An essential m-dissipativity result is shown for the generator of the coupled process, which yields more control of the corresponding semigroup and shows that the coupled process from \cite{FaGr08} is the unique realization of the tagged particle dynamics given the environment process, see Theorem \ref{thm:uniqueness}.
\item Ergodicity of the environment process is shown for essentially any reference measure $\mu$ which can be considered as a pure phase of the system (i.e.~extremal elements of the set of grand canonical Gibbs measures which do not break the action of the group of uniform shifts), see Theorem \ref{thm:ergodicity}.
\item An invariance principle for the tagged particle is shown based on the results from \cite{DeM89}, see Section \ref{sec:last}. This shows that the concepts developed in that paper are applicable to martingale solutions of the above equations and in particular work for particle systems with physically realistic pair interactions.
\end{itemize}

All results apply for a general class of Ruelle measures corresponding to possibly infinite-range potentials which may have a non-integrable singularity at the origin and a nontrivial negative part and fulfill merely a weak differentiability condition on $\R^d \setminus\{0\}$. They are not restricted to small activity parameter, but instead work for any pure phase (in the sense explained above) of the system.
	
\section{Conditions on the potential and some preliminaries}\label{sec:conditions}

Throughout this article we assume that we are given a measurable function $\phi: \R^d\to\R\cup\{\infty\}$ which fulfills $\phi(x)=\phi(-x)$, $x\in\R^d$, and the following assumptions for some $p\in (d,\infty)\cap [2,\infty)$. (For some intermediate results we only need $p\in [2,\infty)$.)
\begin{enumerate}
\item[(SS)] (\emph{superstability}) There exist $A>0$ and $0\leq B<\infty$ such that for any $\eta\in\Gamma_0:=\{\gamma\in\Gamma\,|\,\sharp \gamma<\infty\}$ it holds $\Phi_\phi(\eta)\geq A\sum_{r\in\Z^d} (\sharp \eta_{Q_r})^2-B\sharp \eta$.
\item[(LR)] (\emph{lower regularity}) There exists a decreasing measurable function $\psi: [0,\infty)\to [0,\infty)$ such that $\int_0^\infty \psi(t) t^{d-1}\,dt<\infty$ and $-\phi(x)\leq \psi(\vert x\vert_2)$ holds for all $x\in\R^d$.
\item[(I)] (\emph{integrability}) It holds
$$
\int_{\R^d} \vert 1-e^{-\phi}\vert\,dx<\infty.
$$
\item[(D$\textnormal{L}^p$)] (\emph{differentiability and $L^p$}) $e^{-\phi}$ is weakly differentiable on $\R^d$ and $\phi$ is weakly differentiable on $\R^d\setminus \{0\}$. Moreover, the gradient $\nabla\phi$, considered as a $dx$-a.e.~defined function on $\R^d$, satisfies
$$
\nabla\phi\in L^1(\R^d;e^{-\phi}dx)\cap L^p(\R^d;e^{-\phi}dx).
$$
\end{enumerate}
Here $\Phi_\phi(\eta):=\sum_{\{x,y\}\subset \eta} \phi(x-y)$, $\eta\in\Gamma_0$, and for $r=(r_1,\cdots,r_d)\in\Z^d$ we set $Q_r:=\{(x_1,\cdots,x_d)\in \R^d\,|\,r_i-\frac{1}{2}\leq x_i<r_i+\frac{1}{2}\}$. $\vert \cdot\vert_2$ denotes Euclidean norm on $\R^d$. $\sharp M$ denotes the cardinality of a set $M$.

\begin{remark}\label{rem:conditions}
\begin{enumerate}
\item In \cite{FaGr08} another condition (LS) as in \cite{AKR98b} is imposed on $\phi$. We prove in the appendix that this condition can be omitted for the results derived in that paper (due to the fact that a smaller class of reference measures is considered there than in \cite{AKR98b}).
\item Condition (LR) is equivalent to the lower regularity condition from \cite{Rue70}, see (the proof of) \cite[Proposition 2.17]{KK03}. (Note that we are dealing with a pair interaction here.)
\item (SS) and (I) hold if lower regularity holds and in addition there exist $r>0$ and a decreasing function $\theta: [0,r]\to\R\cup\{\infty\}$ such that $\int_0^r \theta(s) s^{d-1}\,ds=\infty$ and $\phi(x)\geq \theta(\vert x\vert_2)$ holds whenever $x\in\R^d$, $\vert x\vert_2\leq r$, see \cite[Proposition 1.4]{Rue70}.
\item The above assumptions in particular allow the Lennard-Jones potential $\phi(x):=4\varepsilon\left(\left(\frac{s}{\vert x\vert_2}\right)^{12}-\left(\frac{s}{\vert x\vert_2}\right)^6\right)$, $x\in\R^3$, where $\varepsilon,s\in (0,\infty)$.
\end{enumerate}
\end{remark}

For any measurable function $h: \R^d\to\R$ and $\gamma\in\Gamma$ we define
$$
\langle h,\gamma\rangle:=\sum_{x\in\gamma}h(x),
$$
if this sum is absolutely convergent or $h$ is nonnegative. We sometimes use the same notation for $\R^N$-valued functions $h: \R^d\to\R^N$, $N\in\N$.

$\mathcal FC_b^\infty(\mathcal D,\Gamma)$ denotes the set of functions of the form $F=g_F(\langle f_1,\cdot\rangle,\cdots,\langle f_N,\cdot\rangle)$, where $N\in\N$, $g_F\in C_b^\infty(\R^N\to\R)$, i.e.~$g_F$ is smooth and bounded as well as its derivatives, and $f_1,\cdots f_N\in \mathcal D:=C_0^\infty(\R^d)$. For such a function $F$, the gradient $\nabla^\Gamma$ on $\Gamma$ for the natural differentiable structure on this space (see \cite{AKR98b}) is given by $$\nabla^\Gamma F(\gamma):=\left(\sum_{i=1}^N \partial_i g_F(\langle f_1,\gamma\rangle,\cdots \langle f_N,\gamma\rangle)\nabla f_i(x)\right)_{x\in\R^d}\in T_\gamma \Gamma=L^2(\R^d;\gamma),$$ 
$\gamma\in\Gamma$. Here $\gamma$ is considered as a positive Radon measure on $\R^d$ by identifying it with the sum $\sum_{x\in\gamma} \delta_x$ of point masses $\delta_x$, $x\in\gamma$. Moreover, below we use the generator $\nabla_\gamma^\Gamma$ of the group of uniform translations on $\Gamma$, which is given by $\nabla_\gamma^\Gamma F:=\sum_{i=1}^N \partial_i g_F(\langle f_1,\cdot\rangle,\cdots,\langle f_N,\cdot\rangle)\langle \nabla f_i,\cdot\rangle$ for $F$ as above. 

Embedding $\Gamma$ in the space of Radon measures on $\R^d$ as mentioned above, it can be equipped with the vague topology. The corresponding Borel $\sigma$-field is denoted by $\mathcal B(\Gamma)$. By $\mathcal G^{gc}(\Phi_\phi,z\,\sigma_\phi)$, $0<z<\infty$, we denote the set of all grand canonical Gibbs measures on $(\Gamma,\mathcal B(\Gamma))$ for $\phi$ with intensity measure $z\sigma_\phi=z\,e^{-\phi}dx$ (see e.g.~\cite[Theorem 3.12]{KK03}), and by $\mathcal G_t^{gc}(\Phi_\phi,z\,\sigma_\phi)$ we denote the set of elements of $\mathcal G^{gc}(\Phi_\phi,z\,\sigma_\phi)$ which are tempered in the sense of \cite{Rue70}. By the last part of \cite[Corollary 5.3]{Rue70} (the proof works also in the case of the intensity measure $z\sigma_\phi$, which has bounded density w.r.t.~Lebesgue measure) any $\mu\in \mathcal G_t^{gc}(\Phi_\phi,z\,\sigma_\phi)$ fulfills a Ruelle bound w.r.t.~$z\sigma_\phi$ as defined e.g.~in \cite[Definition 4.12]{KK03} (or in \cite{FaGr08}).

For $v\in C_0^\infty(\R^d)$ and $F\in\mathcal FC_b^\infty(\mathcal D,\Gamma)$ one defines $\nabla_v^\Gamma F(\gamma):=\left( v,\nabla^\Gamma F(\gamma)\right)_{T_{\gamma}\Gamma}$, $\gamma\in\Gamma$. In the appendix (which is a generalization of \cite[Theorem 3.3]{FaGr08}) it is shown that if $\phi$ fulfills (SS), (LR), (I) and (D$\textnormal{L}^1$), any $\mu\in\mathcal G_{t}^{gc}(\Phi_\phi,z\,\sigma_\phi)$ fulfills the following integration by parts formula
\begin{equation}\label{eqn:ibp1}
\int_{\Gamma} F\cdot\nabla_v^\Gamma G\,d\mu=-\int_{\Gamma}\nabla_v^\Gamma F\,\cdot G\,d\mu-\int_{\Gamma} B_v^{\phi,\mu}\,FG\,d\mu,
\end{equation}
for any $v\in C_0^\infty(\R^d)$, $F,G\in\mathcal FC_b^\infty(\mathcal D,\Gamma)$, where
$$
B_v^{\phi,\mu}(\gamma):=\langle \textnormal{div}\,v,\gamma\rangle-\sum_{x\in\gamma} (\nabla\phi(x),v(x))_{\R^d}-\sum_{\{x,y\}\subset\gamma} (\nabla\phi(x-y),v(x)-v(y))_{\R^d},
$$
$\gamma\in\Gamma$. Note that since $\mu$ fulfills an improved Ruelle bound (which can be derived from the usual Ruelle bound together with the Meyer-Montroll equation, see e.g.~\cite[Eq.~(4.29)]{AKR98b}), by \cite[Theorem 4.1]{KK02} the second sum in the previous equation converges absolutely for $\mu$-a.e.~$\gamma\in\Gamma$. Similar arguments imply the following lemma:
\begin{lemma}\label{lem:integrability}
Let $\phi$ fulfill (SS), (LR), (I), and (D$\mbox{L}^p$) for some $p\in [1,\infty)$ and let $\mu\in\mathcal G_t^{gc}(\Phi_\phi,z\,\sigma_\phi)$. Then $\langle \nabla\phi,\cdot\rangle\in L^1(\Gamma;\mu)\cap L^p(\Gamma;\mu)$ and there exists $C<\infty$ such that 
$$
\Vert \langle \nabla\phi,\cdot\rangle\Vert_{L^1(\Gamma;\mu)}+\Vert \langle \nabla\phi,\cdot\rangle\Vert_{L^p(\Gamma;\mu)}\leq C(\Vert \nabla\phi\Vert_{L^1(\R^d;\sigma_\phi)}+\Vert \nabla\phi\Vert_{L^p(\R^d;\sigma_\phi)}).
$$
\end{lemma}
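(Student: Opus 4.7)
The plan is to dominate the $\R^d$-valued sum by the nonnegative scalar sum and then estimate the latter via the correlation functions of $\mu$. Pointwise on $\Gamma$,
\begin{equation*}
\bigl|\langle\nabla\phi,\gamma\rangle\bigr|\leq\sum_{x\in\gamma}\bigl|\nabla\phi(x)\bigr|=\langle|\nabla\phi|,\gamma\rangle,
\end{equation*}
where $|\cdot|$ denotes Euclidean norm on $\R^d$; thus it suffices to bound the $L^1$- and $L^p$-norms of the scalar, nonnegative functional $\gamma\mapsto\langle|\nabla\phi|,\gamma\rangle$ by the right-hand side of the claim.

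For the $L^1$-estimate I would apply the Campbell (Georgii--Nguyen--Zessin) formula for the tempered grand canonical Gibbs measure $\mu\in\mathcal G_t^{gc}(\Phi_\phi,z\sigma_\phi)$ to obtain
\begin{equation*}
\int_\Gamma\langle|\nabla\phi|,\gamma\rangle\,d\mu(\gamma)=\int_{\R^d}|\nabla\phi(x)|\,\rho_\mu^{(1)}(x)\,dx,
\end{equation*}
with $\rho_\mu^{(1)}$ the first correlation function of $\mu$ with respect to Lebesgue measure. The improved Ruelle bound cited in the paragraph immediately preceding the lemma, applied with $n=1$, gives $\rho_\mu^{(1)}(x)\leq K\,e^{-\phi(x)}$ for some finite constant $K$ independent of $x$; inserting this into the displayed identity immediately yields $\|\langle\nabla\phi,\cdot\rangle\|_{L^1(\Gamma;\mu)}\leq K\,\|\nabla\phi\|_{L^1(\R^d;\sigma_\phi)}$.

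For the $L^p$-estimate with $p>1$ I would expand the $p$-th power of the random sum as a sum over ordered $p$-tuples of points of $\gamma$ and then reorganize according to the partition $\pi=\{B_1,\ldots,B_{|\pi|}\}$ of $\{1,\ldots,p\}$ induced by which indices coincide. Taking expectations and using the defining property of the correlation functions $\rho_\mu^{(n)}$ one gets
\begin{equation*}
\int_\Gamma\langle|\nabla\phi|,\gamma\rangle^p\,d\mu=\sum_{\pi}\int_{(\R^d)^{|\pi|}}\prod_{i=1}^{|\pi|}|\nabla\phi(y_i)|^{|B_i|}\,\rho_\mu^{(|\pi|)}(y_1,\ldots,y_{|\pi|})\,dy_1\cdots dy_{|\pi|}.
\end{equation*}
The $n$-point version of the improved Ruelle bound, $\rho_\mu^{(n)}(x_1,\ldots,x_n)\leq K^n\prod_i e^{-\phi(x_i)}$, then factorizes each summand into $K^{|\pi|}\prod_i\|\nabla\phi\|_{L^{|B_i|}(\sigma_\phi)}^{|B_i|}$. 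Since $1\leq|B_i|\leq p$ and $\sum_i|B_i|=p$, log-convex interpolation between $L^1(\sigma_\phi)$ and $L^p(\sigma_\phi)$, in the form $\|\nabla\phi\|_{L^k(\sigma_\phi)}\leq\|\nabla\phi\|_{L^1(\sigma_\phi)}^{\alpha}\|\nabla\phi\|_{L^p(\sigma_\phi)}^{1-\alpha}\leq\|\nabla\phi\|_{L^1(\sigma_\phi)}+\|\nabla\phi\|_{L^p(\sigma_\phi)}$, bounds each factor, and summing the finitely many partitions produces
\begin{equation*}
\|\langle|\nabla\phi|,\cdot\rangle\|_{L^p(\Gamma;\mu)}^p\leq C_p\bigl(\|\nabla\phi\|_{L^1(\sigma_\phi)}+\|\nabla\phi\|_{L^p(\sigma_\phi)}\bigr)^p.
\end{equation*}

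The only nontrivial ingredient is the $n$-point improved Ruelle bound. As already indicated in the paragraph preceding the lemma, it follows from (SS), (LR), (I) and temperedness of $\mu$ via the classical Ruelle estimate for the intensity $z\sigma_\phi$ combined with the Meyer--Montroll identity, exactly as in \cite[Eq.~(4.29)]{AKR98b}; no differentiability assumption on $\phi$ enters here, only the membership $\nabla\phi\in L^1(\sigma_\phi)\cap L^p(\sigma_\phi)$ from (D$L^p$).
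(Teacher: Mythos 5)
Your argument for the $L^1$-bound and for integer exponents is sound and is essentially the paper's first step: expanding moments of $\langle|\nabla\phi|,\cdot\rangle$ via correlation functions (this is what \cite[Theorem 4.1]{KK02} delivers) and inserting the Ruelle bound w.r.t.\ $z\sigma_\phi$, which gives $\rho_\mu^{(n)}(x_1,\dots,x_n)\leq K^n\prod_i e^{-\phi(x_i)}$; note that for this one-body functional the plain Ruelle bound suffices, the ``improved'' bound with the extra interaction factor is not needed.

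However, there is a genuine gap: the lemma is stated for arbitrary $p\in[1,\infty)$, while your expansion of $\langle|\nabla\phi|,\gamma\rangle^p$ over ordered $p$-tuples and partitions of $\{1,\dots,p\}$ only makes sense for $p\in\N$. The obvious repair --- run the integer argument with $k=\lceil p\rceil$ and then use H\"older/log-convexity on the $\R^d$ side --- does not work, because the integer-$k$ bound produces the norm $\Vert\nabla\phi\Vert_{L^{k}(\sigma_\phi)}$ with $k>p$, which is \emph{not} controlled by $\Vert\nabla\phi\Vert_{L^1(\sigma_\phi)}+\Vert\nabla\phi\Vert_{L^p(\sigma_\phi)}$ and may be infinite under (D$\mbox{L}^p$). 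The paper closes this by the second step of its proof: having established that the (sub)linear map $h\mapsto\langle |h|,\cdot\rangle$ is bounded from $L^1(\sigma_\phi)$ to $L^1(\Gamma;\mu)$ and from $L^1\cap L^{k}(\sigma_\phi)$ to $L^{k}(\Gamma;\mu)$ for integer $k$, one interpolates between these $L^1\cap L^{k}$-spaces to obtain boundedness from $L^1\cap L^{p}(\sigma_\phi)$ to $L^{p}(\Gamma;\mu)$ for every real $p\in[1,k]$. So you need an operator-interpolation argument (not merely interpolation of norms of the fixed function $\nabla\phi$) to cover non-integer $p$; with that addition your proof coincides with the paper's.
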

\begin{proof}
It is not difficult to derive the assertion for $p\in \N$ using \cite[Theorem 4.1]{KK02} and the Ruelle bound for $\mu$. Using this, for general $p$ it can be derived by interpolating between $L^1\cap L^k$-spaces.
\end{proof}

By $\mathcal G_{ibp}^{gc}(\Phi_\phi,z\,\sigma_\phi)$ we denote the set of those $\mu\in\mathcal G_{t}^{gc}(\Phi_\phi,z\,\sigma_\phi)$ which also fulfill the additional integration by parts formula
\begin{equation}\label{eqn:ibp2}
\int_{\Gamma} F\cdot\nabla_\gamma^\Gamma G\,d\mu=-\int_{\Gamma} \nabla_\gamma^\Gamma F\cdot G\,d\mu+\int_{\Gamma}\langle \nabla\phi,\cdot\rangle FG\,d\mu, \quad F,G\in\mathcal FC_b^\infty(\mathcal D,\Gamma).
\end{equation}
Moreover, by $\mathcal G_{\theta}^{gc}(\Phi_\phi,z\,dx)$ we denote the set of translation invariant elements of the set $\mathcal G_{t}^{gc}(\Phi_\phi,z\,dx)$ of tempered grand canonical Gibbs measures for $\phi$ with intensity measure $z\,dx$. If $M$ is a convex subset of a real vector space, we denote the set of its extremal points by $\textnormal{ex}\,M$. In \cite{CK10} the following one-to-one correspondences are shown (with the exception of (iii), which is rather immediate).
\begin{lemma}\label{lem:measures}
Let $\phi$ fulfill (SS), (LR), (I), (D$\mbox{L}^1$). For $\mu_0\in \mathcal G_{t}^{gc}(\Phi_\phi,z\,dx)$ we define
$$
\Psi(\mu_0):=\frac{1}{Z_{\mu_0}}e^{-\langle \phi,\cdot\rangle}\,\mu_0,
$$
where $Z_{\mu_0}=\int_{\Gamma} e^{-\langle\phi,\cdot\rangle}\,d\mu_0$.
\begin{enumerate}
\item $\Psi: \mathcal G_t^{gc}(\Phi_\phi,z\,dx)\to \mathcal G_t^{gc}(\Phi_\phi,z\,\sigma_\phi)$ is a bijection and $\mu_0$ and $\Psi(\mu_0)$ are equivalent for all $\mu_0\in\mathcal G_t^{gc}(\Phi_\phi,z\,dx)$. Moreover, $\Psi$ maps $\mathcal G_{\theta}^{gc}(\Phi_\phi,z\,dx)$ to $\mathcal G_{ibp}^{gc}(\Phi_\phi,z\,\sigma_\phi)$.
\item If additionally $d\geq 2$ and $\nabla\phi\in L^1(\R^d\setminus B_1(0))$ then $\Psi$ is also a bijection between $\mathcal G_{\theta}^{gc}(\Phi_\phi,z\,dx)$ and $\mathcal G_{ibp}^{gc}(\Phi_\phi,z\,\sigma_\phi)$.
\item For any convex set $\mathcal M\subset G_t^{gc}(\Phi_\phi,z\,dx)$ we have $\Psi(\textnormal{ex}\,\mathcal M)=\textnormal{ex}\,\Psi(\mathcal M)$.
\end{enumerate}
\end{lemma}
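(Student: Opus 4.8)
Parts (i) and (ii) are exactly the content of \cite{CK10}, so my plan for those is to invoke that reference; but let me indicate why they hold. The transparent heuristic is that $\Psi$ is merely a change of the reference (Poisson) intensity: formally a grand canonical Gibbs measure for $\Phi_\phi$ with intensity $z\,dx$ is ``proportional to $e^{-\Phi_\phi(\gamma)}\,\pi_{z\,dx}(d\gamma)$'', multiplying by $e^{-\langle\phi,\gamma\rangle}$ turns the Poisson measure $\pi_{z\,dx}$ into $\pi_{z\,e^{-\phi}dx}=\pi_{z\sigma_\phi}$, and one lands on a Gibbs measure for $\Phi_\phi$ with intensity $z\sigma_\phi$. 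To make this rigorous in infinite volume one uses the Georgii--Nguyen--Zessin (GNZ) equation
$$
\int_\Gamma\sum_{x\in\gamma}f(x,\gamma)\,\mu(d\gamma)=\int_\Gamma\int_{\R^d}f(x,\gamma\cup\{x\})\,e^{-E_\phi(x,\gamma)}\,\lambda(dx)\,\mu(d\gamma),\qquad E_\phi(x,\gamma):=\sum_{y\in\gamma}\phi(x-y),
$$
which characterises the Gibbs property. Inserting $f(x,\gamma)=h(x,\gamma)\,e^{-\langle\phi,\gamma\rangle}$ into the GNZ equation for $\mu_0$ (intensity $z\,dx$), using $\langle\phi,\gamma\cup\{x\}\rangle=\phi(x)+\langle\phi,\gamma\rangle$, and dividing by $Z_{\mu_0}$ produces precisely the GNZ equation for $\Psi(\mu_0)$ with intensity $z\,e^{-\phi}dx$; equivalence of $\mu_0$ and $\Psi(\mu_0)$ is immediate from $0<e^{-\langle\phi,\cdot\rangle}<\infty$ $\mu_0$-a.s., temperedness (a support condition) transfers, and the inverse has the same form, $\Psi^{-1}(\nu)=\tilde Z_\nu^{-1}e^{\langle\phi,\cdot\rangle}\nu$ with $\tilde Z_\nu=\int e^{\langle\phi,\cdot\rangle}\,d\nu$. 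Finally, transporting the identity $\int\nabla_\gamma^\Gamma F\,d\mu_0=0$ (valid for translation-invariant $\mu_0$) through $\Psi$, together with the formal rule $\nabla_\gamma^\Gamma e^{-\langle\phi,\cdot\rangle}=-e^{-\langle\phi,\cdot\rangle}\langle\nabla\phi,\cdot\rangle$, yields the integration by parts formula \eqref{eqn:ibp2} for $\Psi(\mu_0)$, i.e.\ $\Psi(\mathcal G_\theta^{gc}(\Phi_\phi,z\,dx))\subseteq\mathcal G_{ibp}^{gc}(\Phi_\phi,z\sigma_\phi)$. I expect the genuine obstacle to lie entirely here: controlling the exponential moments $\int e^{\pm\langle\phi,\cdot\rangle}\,d\mu$ (so that $Z_{\mu_0},\tilde Z_\nu\in(0,\infty)$ and the transfer is legitimate), which uses the Ruelle bound together with (LR) and (I), and running the integration by parts transfer backwards to obtain surjectivity onto $\mathcal G_{ibp}^{gc}$ in (ii) — it is precisely the boundary term at infinity appearing in that last step that forces the extra hypotheses $d\ge2$ and $\nabla\phi\in L^1(\R^d\setminus B_1(0))$. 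I would not reproduce these estimates here and would instead cite \cite{CK10}.

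Part (iii), by contrast, is elementary, and is what I would actually prove. Let $\mathcal M\subset\mathcal G_t^{gc}(\Phi_\phi,z\,dx)$ be convex. The one computation needed is that $\Psi$ carries convex combinations to convex combinations: if $\mu_0=t\mu_1+(1-t)\mu_2$ with $\mu_1,\mu_2\in\mathcal M$ and $t\in(0,1)$, then $e^{-\langle\phi,\cdot\rangle}\mu_0=t\,Z_{\mu_1}\Psi(\mu_1)+(1-t)\,Z_{\mu_2}\Psi(\mu_2)$ and $Z_{\mu_0}=t\,Z_{\mu_1}+(1-t)\,Z_{\mu_2}$, whence
$$
\Psi(\mu_0)=s\,\Psi(\mu_1)+(1-s)\,\Psi(\mu_2),\qquad s=\frac{t\,Z_{\mu_1}}{t\,Z_{\mu_1}+(1-t)\,Z_{\mu_2}}\in(0,1),
$$
using $Z_{\mu_1},Z_{\mu_2}\in(0,\infty)$ from (i). As $t$ runs through $(0,1)$, $s=s(t)$ runs continuously through all of $(0,1)$, so $\Psi(\mathcal M)$ is convex; and since $\Psi^{-1}$ has the same form (multiplication by $e^{\langle\phi,\cdot\rangle}$ followed by renormalisation), the same statement holds for $\Psi^{-1}$.

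Consequently, for $\mu_0,\mu_1,\mu_2\in\mathcal G_t^{gc}(\Phi_\phi,z\,dx)$ and $t\in(0,1)$ one has $\mu_0=t\mu_1+(1-t)\mu_2$ if and only if $\Psi(\mu_0)$ is a convex combination of $\Psi(\mu_1)$ and $\Psi(\mu_2)$ with weight in $(0,1)$; and by injectivity of $\Psi$, $\mu_1\ne\mu_2$ if and only if $\Psi(\mu_1)\ne\Psi(\mu_2)$, while $\mu_i\in\mathcal M$ if and only if $\Psi(\mu_i)\in\Psi(\mathcal M)$. Therefore $\mu_0$ fails to be a nontrivial convex combination of two distinct points of $\mathcal M$ exactly when $\Psi(\mu_0)$ fails to be a nontrivial convex combination of two distinct points of $\Psi(\mathcal M)$; that is, $\mu_0\in\mathrm{ex}\,\mathcal M$ iff $\Psi(\mu_0)\in\mathrm{ex}\,\Psi(\mathcal M)$. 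Since $\Psi$ is a bijection of $\mathcal M$ onto $\Psi(\mathcal M)$, it restricts to a bijection of $\mathrm{ex}\,\mathcal M$ onto $\mathrm{ex}\,\Psi(\mathcal M)$, i.e.\ $\Psi(\mathrm{ex}\,\mathcal M)=\mathrm{ex}\,\Psi(\mathcal M)$, which is the claim.
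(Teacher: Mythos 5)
Your proposal is correct and takes essentially the same route as the paper, which gives no proof of its own: it cites \cite{CK10} for parts (i) and (ii) and declares (iii) ``rather immediate''. Your reweighted convex-combination argument for (iii) — using $Z_{\mu_0}=tZ_{\mu_1}+(1-t)Z_{\mu_2}$, the bijectivity and equivalence from (i) to get $Z_{\mu_i}\in(0,\infty)$ and the analogous form of $\Psi^{-1}$, and injectivity to preserve distinctness — is a correct filling-in of exactly that omitted elementary step.
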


In particular, since $\mathcal G_{\theta}^{gc}(\Phi_\phi,z\,dx)\neq \emptyset$ (see e.g.~\cite[Theorem 5.8]{Rue70}), we also have $\mathcal G_{ibp}^{gc}(\Phi_\phi,z\,\sigma_\phi)\neq \emptyset$.

\begin{remark}
The assertion of Lemma \ref{lem:measures}(ii) also holds in many cases if $d=1$, mainly because it is known that for many potentials there are no phase transitions in the one-dimensional setting. We remark that independently of the occurence of a phase transition Lemma \ref{lem:measures}(ii) holds in any dimension if $\phi$ fulfills the stronger assumption $\nabla\phi\in L^1(\R^d)$ (in particular if $\phi$ is bounded), as one easily derives from the proof of \cite[Lemma 3.2]{CK10}.
\end{remark}

\begin{remark}\label{rem:measuresmod}
Any $\mu\in\mathcal G_{\theta}^{gc}(\Phi_\phi,z\,dx)$ can be represented as a mixture of elements from $\textnormal{ex}\,\mathcal G_{\theta}^{gc}(\Phi_\phi, z\,dx)$, see \cite{Fol75}, \cite{Pr76}. (This is well-known, but for the reader's convenience we give some hints which hopefully simplify a complete step-by-step verification: The mentioned statement is given in Theorem 4.2 and p.~57 in \cite{Pr76}; one has to take into account that all $\Q^d$-shift invariant probability measures on $(\Gamma,\mathcal B(\Gamma))$ (which are precisely the translation invariant probability measures on $(\Gamma,\mathcal B(\Gamma))$) have a representation as integrals over extremal $\Q^d$-shift invariant, hence $\Q^d$-ergodic probability measures, which is seen as e.g.~in \cite[Theorem 14.10]{Ge88}.) In particular, $\textnormal{ex}\,\mathcal G_{\theta}^{gc}(\Phi_\phi,z\,dx)$ is nonempty. In addition any $\mu\in\textnormal{ex}\,\mathcal G_{\theta}^{gc}(\Phi_\phi,z\,dx)$ is trivial on the $\sigma$-algebra of $\mu$-a.s.~translation invariant events (see e.g.~\cite[Theorem 4.1]{Pr76}), i.e.~if $A\in\mathcal B(\Gamma)$ fulfills $\mu(\{\gamma+v\,|\,\gamma\in A\}\Delta A)=0$ for all $v\in\R^d$, then $\mu(A)\in \{0,1\}$. Here for $\gamma\in\Gamma$ we set $\gamma+v:=\{x+v\,|\,x\in\gamma\}$ and $\Delta$ denotes the symmetric set difference. The above lemma shows that these statements also hold with $\mathcal G_{\theta}^{gc}(\Phi_\phi,z\,dx)$ replaced with $\mathcal G_{ibp}^{gc}(\Phi_\phi,z\,\sigma_\phi)$, if the assumptions from part (ii) of that lemma are fulfilled.
\end{remark}

\section{The existence of the tagged particle process}\label{sec:existence}
In this section we briefly sketch the construction of the environment process and the coupled process. A detailed study of this problem can be found in \cite{FaGr08}.

\subsection{The environment process}
For $d\geq 2$ the process corresponding to \eqref{eqn:env} is realized on the configuration space $\Gamma$. For $d=1$ the process exists in general only in the larger space $\ddot\Gamma$ of multiple configurations, which can similarly as $\Gamma$ be equipped with the vague topology and the corresponding Borel $\sigma$-field $\mathcal B(\ddot \Gamma)$. For any dimension, the process is obtained by analyzing the densely defined, positive definite, symmetric bilinear form $(\mathcal E_{env}^{\Gamma,\mu},\mathcal FC_b^\infty(\mathcal D,\Gamma))$ on $L^2(\Gamma;\mu)=L^2(\ddot \Gamma;\mu)$, given by
$$
\mathcal E_{env}^{\Gamma,\mu}(F,G)=\int_{\Gamma}(\nabla^\Gamma F(\gamma),\nabla^\Gamma G(\gamma))_{T_{\gamma}\Gamma}\,d\mu(\gamma)+\int_{\Gamma} (\nabla_\gamma^\Gamma F(\gamma),\nabla_\gamma^\Gamma G(\gamma))_{\R^d}d\mu(\gamma),
$$
$F,G\in\mathcal FC_b^\infty(\mathcal D,\Gamma)$, where $\mu\in\mathcal G_{ibp}^{gc}(\Phi_\phi,z\,\sigma_\phi)$. Using \eqref{eqn:ibp1} and \eqref{eqn:ibp2}, the following result is derived in \cite{FaGr08}. (Note that although e.g.~for $p\in [1,\infty]$ the spaces $L^p(\Gamma;\mu)$ and $L^p(\ddot\Gamma;\mu)$ are the same, we sometimes emphasize the use of $\ddot\Gamma$: for example in general quasi-regularity of the Dirichlet forms used below depends on whether one uses $\ddot\Gamma$ or $\Gamma$ as state space.) Recall (cf.~Remark \ref{rem:conditions}(i)) that condition (LS) from \cite{FaGr08} can be dropped in the statements of the following theorem.
\begin{proposition}\label{lem:env1}
Suppose that $\phi$ fulfills (SS), (LR), (I) and (D$\mbox{L}^2$). Let $\mu\in\mathcal G_{ibp}^{gc}(\Phi_\phi,z\,\sigma_\phi)$. Then $(\mathcal E_{env}^{\Gamma,\mu},\mathcal FC_b^\infty(\mathcal D,\Gamma))$ is closable in $L^2(\Gamma;\mu)$ and its closure $(\mathcal E_{env}^{\Gamma,\mu},D(\mathcal E_{env}^{\Gamma,\mu}))$ is a conservative, local, quasi-regular, symmetric Dirichlet form on $L^2(\ddot \Gamma;\mu)$. Moreover, for $F,G\in\mathcal FC_b^\infty(\mathcal D,\Gamma)$ it holds
$$
\mathcal E_{env}^{\Gamma,\mu}(F,G)=\int_{\Gamma} -L_{env}^{\Gamma,\mu}F\cdot G\,d\mu,
$$
where
\begin{align*}
L_{env}^{\Gamma,\mu}F(\gamma)&=\sum_{i,j=1}^N \partial_i\partial_j g_F(\langle f_1,\gamma\rangle,\cdots,\langle f_N,\gamma\rangle)\left(\left\langle (\nabla f_i,\nabla f_j)_{\R^d},\gamma\right\rangle+\left(\langle \nabla f_i,\gamma\rangle,\langle \nabla f_j,\gamma\rangle\right)_{\R^d}\right)\\
&\quad +\sum_{j=1}^N \partial_j g_F(\langle f_1,\gamma\rangle,\cdots,\langle f_N,\gamma\rangle)\bigg( 2\langle \Delta f_j,\gamma\rangle-\left\langle\left(\nabla \phi,\nabla f_j\right)_{\R^d},\gamma\right\rangle\\
&\quad\quad\quad\quad\quad-\left(\langle\nabla\phi,\gamma\rangle,\langle \nabla f_j,\gamma\rangle\right)_{\R^d}-\sum_{\{x,y\}\subset\gamma}\left(\nabla\phi(x-y),\nabla f_j(x)-\nabla f_j(y)\right)_{\R^d}\bigg),
\end{align*}
for $\mu$-a.e.~$\gamma\in\Gamma$. The generator $(H_{env}^{\Gamma,\mu},D(H_{env}^{\Gamma,\mu}))$ of $(\mathcal E_{env}^{\Gamma,\mu},D(\mathcal E_{env}^{\Gamma,\mu}))$ is the Friedrichs extension of $(L_{env}^{\Gamma,\mu},\mathcal FC_b^\infty(\mathcal D,\Gamma))$.
\end{proposition}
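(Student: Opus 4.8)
The plan is to derive all four assertions from the two integration by parts formulae \eqref{eqn:ibp1} and \eqref{eqn:ibp2} — both available here since $\mu\in\mathcal G_{ibp}^{gc}(\Phi_\phi,z\,\sigma_\phi)$ and (D$\mathrm{L}^2$) implies (D$\mathrm{L}^1$) — together with the standard Dirichlet form machinery. The only genuinely technical ingredients are the $L^2(\Gamma;\mu)$-integrability of the candidate generator and the verification of quasi-regularity; the latter is the main obstacle and is also the point where the passage to $\ddot\Gamma$ in dimension $d=1$ becomes relevant. As a first step one checks that $L_{env}^{\Gamma,\mu}$ maps $\mathcal FC_b^\infty(\mathcal D,\Gamma)$ into $L^2(\Gamma;\mu)$ and that $\mathcal E_{env}^{\Gamma,\mu}(F,F)<\infty$ for $F=g_F(\langle f_1,\cdot\rangle,\dots,\langle f_N,\cdot\rangle)$. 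Each additive functional occurring has an integrand that is either bounded with compact support (those built from $\Delta f_j$ and $(\nabla f_i,\nabla f_j)_{\R^d}$) or of the form $(\nabla\phi,\nabla f_j)_{\R^d}$, which by (D$\mathrm{L}^2$) and (LR) lies in $L^1(\R^d;\sigma_\phi)\cap L^2(\R^d;\sigma_\phi)$; together with $\langle\nabla\phi,\cdot\rangle\in L^1(\Gamma;\mu)\cap L^2(\Gamma;\mu)$ from Lemma \ref{lem:integrability}, the Ruelle bound for $\mu$ shows that these functionals and the products among them appearing in the second-order part belong to $L^2(\Gamma;\mu)$, while the pair sum $\sum_{\{x,y\}\subset\gamma}(\nabla\phi(x-y),\nabla f_j(x)-\nabla f_j(y))_{\R^d}$ converges absolutely $\mu$-a.e.\ and lies in $L^2(\Gamma;\mu)$ by \cite[Theorem 4.1]{KK02} and the improved Ruelle bound, exactly as in the discussion preceding Lemma \ref{lem:integrability}. (This is precisely where (D$\mathrm{L}^2$), rather than merely (D$\mathrm{L}^1$), enters.) Since $\mathcal FC_b^\infty(\mathcal D,\Gamma)$ is moreover dense in $L^2(\Gamma;\mu)$, the form is densely defined.

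Next one establishes $\mathcal E_{env}^{\Gamma,\mu}(F,G)=\int_\Gamma -L_{env}^{\Gamma,\mu}F\cdot G\,d\mu$ for $F,G\in\mathcal FC_b^\infty(\mathcal D,\Gamma)$ by treating the two summands of $\mathcal E_{env}^{\Gamma,\mu}$ separately. For the intrinsic part, inserting $\nabla^\Gamma G(\gamma)(x)=\sum_j\partial_jg_G(\langle g_1,\gamma\rangle,\dots)\nabla g_j(x)$ yields $\int(\nabla^\Gamma F,\nabla^\Gamma G)_{T_\gamma\Gamma}\,d\mu=\sum_j\int\partial_jg_G(\langle g_1,\cdot\rangle,\dots)\,\nabla_{\nabla g_j}^\Gamma F\,d\mu$, and applying \eqref{eqn:ibp1} with the compactly supported vector field $v=\nabla g_j$ (so that $\mathrm{div}\,v=\Delta g_j$) moves the derivative off $F$ onto $\partial_jg_G(\langle g_1,\cdot\rangle,\dots)\in\mathcal FC_b^\infty(\mathcal D,\Gamma)$; summing in $j$ and using symmetry of the form, this contributes the second-order term $\sum_{i,j}\partial_i\partial_jg_F\,\langle(\nabla f_i,\nabla f_j)_{\R^d},\cdot\rangle$ and, via $B_{\nabla g_j}^{\phi,\mu}$, the first-order terms $\langle\Delta f_j,\cdot\rangle$, $-\langle(\nabla\phi,\nabla f_j)_{\R^d},\cdot\rangle$ and $-\sum_{\{x,y\}\subset\gamma}(\nabla\phi(x-y),\nabla f_j(x)-\nabla f_j(y))_{\R^d}$. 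For the uniform part, using that $\nabla_\gamma^\Gamma$ is a derivation with $\nabla_\gamma^\Gamma\langle h,\cdot\rangle=\langle\nabla h,\cdot\rangle$, the componentwise application of \eqref{eqn:ibp2} to the (polynomially growing, hence unbounded) components of $\nabla_\gamma^\Gamma F$ — legitimate after a routine truncation of the linear functionals $\langle\nabla f_i,\cdot\rangle$ and a limiting argument based on the $L^2$-bounds above — contributes $\sum_{i,j}\partial_i\partial_jg_F\,(\langle\nabla f_i,\cdot\rangle,\langle\nabla f_j,\cdot\rangle)_{\R^d}$, a further copy of $\langle\Delta f_j,\cdot\rangle$ (whence the factor $2$ in the statement), and $-(\langle\nabla\phi,\cdot\rangle,\langle\nabla f_j,\cdot\rangle)_{\R^d}$. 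Adding the two contributions reproduces exactly the stated operator $L_{env}^{\Gamma,\mu}$.

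From here the remaining assertions are essentially formal. By the previous two steps, $(-L_{env}^{\Gamma,\mu},\mathcal FC_b^\infty(\mathcal D,\Gamma))$ is a densely defined, symmetric, non-negative operator on $L^2(\Gamma;\mu)$, hence has a Friedrichs extension whose associated closed form is the closure of $F\mapsto\mathcal E_{env}^{\Gamma,\mu}(F,F)$; in particular $(\mathcal E_{env}^{\Gamma,\mu},\mathcal FC_b^\infty(\mathcal D,\Gamma))$ is closable and the generator of its closure is precisely this Friedrichs extension $H_{env}^{\Gamma,\mu}$, which is the final assertion. That the closure is a symmetric Dirichlet form follows from the standard argument that the unit contraction operates: for $\eta_\varepsilon\in C^\infty(\R)$ approximating $t\mapsto(0\vee t)\wedge 1$ with $0\le\eta_\varepsilon'\le 1$, the chain rule for $\nabla^\Gamma$ and $\nabla_\gamma^\Gamma$ gives $\mathcal E_{env}^{\Gamma,\mu}(\eta_\varepsilon\circ g_F(\langle f_1,\cdot\rangle,\dots),\eta_\varepsilon\circ g_F(\langle f_1,\cdot\rangle,\dots))\le\mathcal E_{env}^{\Gamma,\mu}(F,F)$, and one passes to the closure. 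Conservativeness holds because $\mu$ is a probability measure, so the constant function $1$ lies in $\mathcal FC_b^\infty(\mathcal D,\Gamma)\subset D(\mathcal E_{env}^{\Gamma,\mu})$ with $\mathcal E_{env}^{\Gamma,\mu}(1,1)=0$; locality is immediate since $\nabla^\Gamma$ and $\nabla_\gamma^\Gamma$ are first-order (hence local) operators, so $\mathcal E_{env}^{\Gamma,\mu}(F,G)=0$ whenever $G$ is constant on a vague neighbourhood of the support of $F$.

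The genuine work is the quasi-regularity of the closure on $L^2(\ddot\Gamma;\mu)$ (where $\ddot\Gamma=\Gamma$ when $d\ge 2$), and this is the step I expect to be the real obstacle. One follows the method of \cite{AKR98b}: equip $\ddot\Gamma$ with the vague topology, under which it is a Polish space; construct an $\mathcal E_{env}^{\Gamma,\mu}$-nest of compact subsets of $\ddot\Gamma$ from the level sets of the local number functionals, the needed capacity estimates being supplied by the Ruelle bound for $\mu$; and observe that $\mathcal FC_b^\infty(\mathcal D,\Gamma)$ consists of bounded vaguely continuous functions which separate the points of $\ddot\Gamma$, so it provides both the $\mathcal E_1$-dense family of quasi-continuous functions and the separating sequence demanded by the definition of quasi-regularity. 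The only modification compared with \cite{AKR98b} is the extra bilinear form $\int(\nabla_\gamma^\Gamma\,\cdot\,,\nabla_\gamma^\Gamma\,\cdot\,)_{\R^d}\,d\mu$; since it only enlarges $\mathcal E_{env}^{\Gamma,\mu}$, every nest and every $\mathcal E_1$-density estimate from \cite{AKR98b} survives the addition, and in dimension one the enlargement of the state space to $\ddot\Gamma$ is exactly what keeps it Polish for the vague topology. The complete verification of this, and of the truncation lemmas underlying the extended use of \eqref{eqn:ibp2}, is carried out in \cite{FaGr08}.
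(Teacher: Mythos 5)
The paper itself does not prove this proposition; it states that it ``is derived in [FaGr08]'' using the two integration by parts formulae \eqref{eqn:ibp1} and \eqref{eqn:ibp2}, and the only new contribution here is the appendix showing that the hypothesis (LS) of [FaGr08] can be dropped. Your proposal reconstructs the [FaGr08] strategy faithfully: the $L^2$-integrability of the candidate generator via Lemma~\ref{lem:integrability}, \cite[Theorem 4.1]{KK02} and the (improved) Ruelle bound; the identity $\mathcal E_{env}^{\Gamma,\mu}(F,G)=-\int L_{env}^{\Gamma,\mu}F\cdot G\,d\mu$ by applying \eqref{eqn:ibp1} with $v=\nabla f_j$ to the intrinsic part and \eqref{eqn:ibp2} componentwise (after truncation) to the uniform part, whence the factor $2$ on $\langle\Delta f_j,\cdot\rangle$; and the Friedrichs-extension, Markovianity, conservativeness and locality arguments, all of which are standard and correctly reproduced. (A small presentational slip: in the intrinsic part you expand $\nabla^\Gamma G$ and claim to ``move the derivative off $F$,'' but you then write the result in terms of $g_F$ and $f_j$; to land on the stated formula for $L_{env}^{\Gamma,\mu}F$ one should expand $\nabla^\Gamma F$ and take $v=\nabla f_j$, as you essentially acknowledge by invoking symmetry. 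This does not affect the outcome.)

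The one place where the reasoning is actually faulty is the quasi-regularity step. You assert that because the uniform-translation term $\int(\nabla_\gamma^\Gamma\cdot,\nabla_\gamma^\Gamma\cdot)_{\R^d}\,d\mu$ ``only enlarges'' the form, every nest and every capacity estimate from [AKR98b] ``survives the addition.'' The implication goes the wrong way: enlarging a symmetric form shrinks the domain and increases $\mathcal E_1$, so the capacity $\mathrm{Cap}(\cdot)$ goes \emph{up}, and the decay $\mathrm{Cap}(\ddot\Gamma\setminus K_n)\to 0$ that produces a nest of compacts becomes \emph{harder}, not automatic. What actually saves the argument is that the explicit cut-off functions $u_R\in\mathcal FC_b^\infty(\mathcal D,\Gamma)$ used in [AKR98b] have $\nabla_\gamma^\Gamma u_R$ of the same qualitative type as $\nabla^\Gamma u_R$ (additive functionals of compactly supported smooth integrands), so that the additional contribution $\int\vert\nabla_\gamma^\Gamma u_R\vert_2^2\,d\mu$ can be controlled by the same Ruelle bound, together with $\langle\nabla\phi,\cdot\rangle\in L^2(\Gamma;\mu)$ from Lemma~\ref{lem:integrability}. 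This requires re-running the capacity estimates, not merely invoking monotonicity of the form. Since you do defer the full verification to [FaGr08], the conclusion is not in doubt, but the justification as stated is incorrect and should be replaced by the observation that the estimates have to be (and can be) redone for the enlarged form.

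A further point worth being explicit about: the componentwise application of \eqref{eqn:ibp2} to $\nabla_\gamma^\Gamma F$ is applied to unbounded functions of the form $\partial_i g_F(\cdots)\langle\partial_k f_i,\cdot\rangle$, which lie outside $\mathcal FC_b^\infty(\mathcal D,\Gamma)$. You flag this as ``a routine truncation,'' which is acceptable, but note that the limit passage requires precisely the $L^2$-bounds from the first paragraph — i.e.\ (D$\mathrm{L}^2$) rather than (D$\mathrm{L}^1$) — so this is not merely cosmetic. You correctly identify this as the place where (D$\mathrm{L}^2$) enters, which matches the hypotheses of the proposition.
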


Furthermore, by applying the theory of Dirichlet forms the following is shown:
\begin{theorem}
Under the assumptions of Proposition \ref{lem:env1} the following holds:
\begin{enumerate}
\item There exists a (up to $\mu$-equivalence unique, see \cite[Section IV.6]{MaRo92}) conservative diffusion process
$$
\mathbf M_{env}^{\Gamma,\mu}=\left(\mathbf \Omega^{env},\mathbf F^{env},(\mathbf F_t^{env})_{t\geq 0},(\mathbf X_t^{env})_{t\geq 0},(\mathbf P_{\gamma}^{env})_{\gamma\in\ddot\Gamma}\right)
$$
on $\ddot \Gamma$ which is properly associated with $(\mathcal E_{env}^{\Gamma,\mu},D(\mathcal E_{env}^{\Gamma,\mu}))$, i.e.~for all ($\mu$-versions of) $F\in L^2(\ddot \Gamma;\mu)$ and all $t>0$ the function 
$$
\gamma\mapsto p_t^{env}F(\gamma):=\mathbf E_{\gamma}^{env}[F(\mathbf X_t^{env})],\quad \gamma\in\ddot\Gamma,
$$
is an $\mathcal E_{env}^{\Gamma,\mu}$-quasi continuous $\mu$-version of $T_{t,2}^{env}F:=\exp(t H_{env}^{\Gamma,\mu})F$. (Here $\mathbf E_{\gamma}^{env}$ denotes the expectation corresponding to $\mathbf P_{\gamma}^{env}$; below we also use analogous notations.) In particular, $\mathbf M_{env}^{\Gamma,\mu}$ is $\mu$-symmetrie, i.e.~$\int_{\Gamma} p_t^{env}F\cdot G\,d\mu=\int_{\Gamma}F\cdot p^{env}_t G\,d\mu$ for all nonnegative $\mathcal B(\ddot\Gamma)$-measurable $F,G: \ddot\Gamma\to\R$.
\item $\mathbf M_{env}^{\Gamma,\mu}$ has invariant measure $\mu$ and solves the martingale problem for $(H_{env}^{\Gamma,\mu},D(H_{env}^{\Gamma,\mu}))$ in the following sense: For all $G\in D(H_{env}^{\Gamma,\mu})$, thus in particular for any $G\in\mathcal FC_b^\infty(\mathcal D,\Gamma)$, the process defined by 
$$
\tilde G(\mathbf X_t^{env})-\tilde G(\mathbf X_{0}^{env})-\int_0^t H_{env}^{\Gamma,\mu} G(\mathbf X_s^{env})\,ds,\quad t\geq 0,
$$
is an $(\mathbf F_t^{env})_{t\geq 0}$-martingale under $\mathbf P_\gamma^{env}$ (hence starting at $\gamma$) for $\mathcal E_{env}^{\Gamma,\mu}$-quasi-all $\gamma\in\ddot\Gamma$. Here $\tilde G$ denotes an $\mathcal E_{env}^{\Gamma,\mu}$-quasi-continuous $\mu$-version of $G$.
\end{enumerate}
\end{theorem}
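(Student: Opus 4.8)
The plan is to obtain the entire statement from the general theory of quasi-regular Dirichlet forms, so that most of the work consists in invoking the right machinery and checking its hypotheses against Proposition \ref{lem:env1}. For part (i): by Proposition \ref{lem:env1} the form $(\mathcal E_{env}^{\Gamma,\mu},D(\mathcal E_{env}^{\Gamma,\mu}))$ is a conservative, local, quasi-regular symmetric Dirichlet form on $L^2(\ddot\Gamma;\mu)$. I would then apply the representation theorem for quasi-regular Dirichlet forms (\cite[Ch.~IV]{MaRo92}) to get a $\mu$-tight special standard process properly associated with it; "properly associated" is precisely the assertion that $\gamma\mapsto p_t^{env}F(\gamma)=\mathbf E_{\gamma}^{env}[F(\mathbf X_t^{env})]$ is an $\mathcal E_{env}^{\Gamma,\mu}$-quasi-continuous $\mu$-version of $T_{t,2}^{env}F$ for all $F\in L^2(\ddot\Gamma;\mu)$, $t>0$. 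Locality of the form upgrades the process to a diffusion, i.e.~one with $\mathbf P_{\gamma}^{env}$-a.s.~continuous sample paths up to its lifetime (\cite[Ch.~V]{MaRo92}), and conservativeness ($T_{t,2}^{env}1=1$, recall $1\in L^2(\ddot\Gamma;\mu)$ as $\mu$ is a probability measure) forces the lifetime to be $+\infty$, giving a conservative diffusion. Uniqueness up to $\mu$-equivalence is \cite[Section IV.6]{MaRo92}, and $\mu$-symmetry of $\mathbf M_{env}^{\Gamma,\mu}$ is inherited from $\mu$-symmetry of $(T_{t,2}^{env})_{t\geq0}$ through the identity $p_t^{env}F=T_{t,2}^{env}F$ $\mu$-a.e.\ (first for $F\in L^2\cap L^\infty$, then for all nonnegative measurable $F,G$ by monotone convergence).

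For part (ii), invariance of $\mu$ is then formal: for $F\geq 0$ bounded measurable,
$$
\int_{\Gamma}p_t^{env}F\,d\mu=\int_{\Gamma}p_t^{env}F\cdot 1\,d\mu=\int_{\Gamma}F\cdot p_t^{env}1\,d\mu=\int_{\Gamma}F\,d\mu,
$$
using $\mu$-symmetry and $p_t^{env}1=1$ $\mu$-a.e., and the general case follows by monotone convergence. For the martingale problem, fix $G\in D(H_{env}^{\Gamma,\mu})$, put $HG:=H_{env}^{\Gamma,\mu}G\in L^2(\ddot\Gamma;\mu)$, and fix $\mathcal E_{env}^{\Gamma,\mu}$-quasi-continuous $\mu$-versions $\tilde G$ of $G$ and $\widetilde{HG}$ of $HG$. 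The analytic input is the elementary identity $T_{t,2}^{env}G-G=\int_0^t T_{s,2}^{env}HG\,ds$ in $L^2(\ddot\Gamma;\mu)$. Setting $M_t:=\tilde G(\mathbf X_t^{env})-\tilde G(\mathbf X_0^{env})-\int_0^t\widetilde{HG}(\mathbf X_s^{env})\,ds$, the martingale property follows from the Markov property: for $s<t$,
$$
\mathbf E_{\gamma}^{env}\big[M_t-M_s\mid\mathbf F_s^{env}\big]=\big(T_{t-s,2}^{env}G\big)(\mathbf X_s^{env})-G(\mathbf X_s^{env})-\int_0^{t-s}\big(T_{u,2}^{env}HG\big)(\mathbf X_s^{env})\,du=0,
$$
the last equality being the semigroup identity read through proper association. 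This is the standard scheme of \cite[Ch.~IV--V]{MaRo92}, carried out on configuration spaces in \cite{AKR98b}.

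The extension of this identity from the core $\mathcal FC_b^\infty(\mathcal D,\Gamma)$ (on which, by Proposition \ref{lem:env1}, $HG=L_{env}^{\Gamma,\mu}G$ is the explicit expression displayed there) to all of $D(H_{env}^{\Gamma,\mu})$ I would handle via the general martingale-problem statement for Hunt processes properly associated with a quasi-regular Dirichlet form — equivalently, the Fukushima-type decomposition of $\tilde u(\mathbf X_t^{env})-\tilde u(\mathbf X_0^{env})$ for $u$ in the $L^2$-domain of the generator — again \cite{MaRo92}, \cite{AKR98b}. The only input specific to the present model is the $\mathbf P_{\gamma}^{env}$-a.s.~finiteness of the drift integral $\int_0^t|\widetilde{HG}(\mathbf X_s^{env})|\,ds$; since $\mu$ is a probability measure and $HG\in L^2(\ddot\Gamma;\mu)\subset L^1(\ddot\Gamma;\mu)$, invariance of $\mu$ gives $\mathbf E_{\gamma}^{env}\big[\int_0^t|\widetilde{HG}(\mathbf X_s^{env})|\,ds\big]=t\,\Vert HG\Vert_{L^1(\mu)}<\infty$ for $\mu$-a.e.~$\gamma$, and for $G\in\mathcal FC_b^\infty(\mathcal D,\Gamma)$ the finiteness is made transparent by the explicit form of $L_{env}^{\Gamma,\mu}G$ together with Lemma \ref{lem:integrability} ($\langle\nabla\phi,\cdot\rangle\in L^1(\Gamma;\mu)$, which also controls the pair-interaction term $\sum_{\{x,y\}\subset\gamma}(\nabla\phi(x-y),\nabla f_j(x)-\nabla f_j(y))_{\R^d}$).

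The main obstacle is the bookkeeping of exceptional sets: upgrading the $\mu$-a.e.\ and $L^2$ identities above to pathwise statements valid $\mathbf P_{\gamma}^{env}$-a.s.\ for \emph{quasi-every} deterministic starting point $\gamma$ — concretely, identifying $\int_0^\cdot\widetilde{HG}(\mathbf X_s^{env})\,ds$ with the correct continuous additive functional for q.e.~$\gamma$ rather than merely for $\mu$-a.e.~$\gamma$, and running the conditional-expectation computation off a single properly exceptional set. This is exactly what the theory of properly associated processes (proper exceptional sets, quasi-continuous versions, additive functionals of finite energy) is built to handle, and the possible unboundedness of $\tilde G$ for general $G\in D(H_{env}^{\Gamma,\mu})$ is the reason for first disposing of the bounded core and then invoking the general decomposition. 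Everything else — existence, the diffusion property, conservativeness, $\mu$-symmetry, and invariance of $\mu$ — is an essentially formal consequence of Proposition \ref{lem:env1}.
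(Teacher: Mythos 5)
Your proposal is correct and follows essentially the same route as the paper: the theorem is quoted from \cite{FaGr08}, where it is obtained exactly as you describe, by applying the Ma--R\"ockner theory of quasi-regular Dirichlet forms to the form from Proposition \ref{lem:env1} (existence of a properly associated $\mu$-tight special standard process, locality giving the diffusion property, conservativeness giving infinite lifetime and invariance of $\mu$, and the martingale problem for $G\in D(H_{env}^{\Gamma,\mu})$ via the semigroup identity and quasi-continuous versions, with the exceptional-set bookkeeping handled by the standard theory as in \cite{MaRo92}, \cite{FOT94}). No genuinely different ideas are involved, so nothing further needs to be compared.
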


\subsection{The coupled process}

For constructing and analyzing the Markov process on $\R^d\times\Gamma$ representing the coupled motion of the tagged particle and its environment, we often consider test functions $\mathfrak F\in C_0^\infty(\R^d)\otimes \mathcal FC_b^\infty(\mathcal D,\Gamma)$, i.e.~functions of the form $\mathfrak F=\sum_{i=1}^k f_i\otimes F_i$, $f_1,\cdots,f_k\in C_0^\infty(\R^d)$, $F_1,\cdots F_k\in\mathcal FC_b^\infty(\mathcal D,\Gamma)$, $k\in\N$, where for $f\in C_0^\infty(\R^d)$ and $F\in\mathcal FC_b^\infty(\mathcal D,\Gamma)$ we set $f\otimes F(\xi,\gamma):=f(\xi)F(\gamma)$, $(\xi,\gamma)\in\R^d\times\Gamma$. Some of the linear operators on $C_0^\infty(\R^d)\otimes \mathcal FC_b^\infty(\mathcal D,\Gamma)$ we consider below are the operators $(\nabla_\gamma^\Gamma-\nabla_\xi)$, $\nabla^\Gamma$, $(\nabla_\gamma^\Gamma,\nabla_\xi)_{\R^d}$, given by
\begin{align*}
(\nabla_\gamma^\Gamma-\nabla_\xi)\mathfrak F(\xi,\gamma)&:=\sum_{i=1}^k \left(f_i(\xi)\nabla_\gamma^\Gamma F_i-F_i(\gamma)\nabla f_i(\xi) \right)\\
\nabla^\Gamma \mathfrak F(\xi,\gamma)&:=\sum_{i=1}^k f_i(\xi)\nabla^\Gamma F_i(\gamma)\\
(\nabla_\gamma^\Gamma,\nabla_\xi)_{\R^d}\mathfrak F(\xi,\gamma)&:=\sum_{i=1}^k (\nabla_\gamma^\Gamma F_i(\gamma),\nabla f_i(\xi))_{\R^d}
\end{align*}
for $(\xi,\gamma)\in\R^d\times \Gamma$ and $\mathfrak F=\sum_{i=1}^k f_i\otimes F_i$ as above. As a general rule, the operators $L_{env}^{\Gamma,\mu}$, $\nabla_\gamma^\Gamma$ and $\nabla^\Gamma$ are supposed to act only on the $\gamma$-dependent parts of an $\mathfrak F\in C_0^\infty(\R^d)\otimes \mathcal FC_b^\infty(\mathcal D,\Gamma)$, while $\nabla_\xi$ and $\Delta_\xi$ only act on the $\xi$-dependent parts. We also use similar notations in analogous situations (i.e.~for tensor products of other spaces of functions on $\R^d$ and $\Gamma$). Moreover, we sometimes consider a function $F: \Gamma\to\R$ as a function $F: \R^d\times\Gamma\to\R$ by setting $F(\xi,\gamma):=F(\gamma)$ for $(\xi,\gamma)\in\R^d\times\Gamma$. 

The coupled process is obtained by analyzing the densely defined, positive definite, symmetric bilinear form $(\mathcal E_{coup}^{\R^d\times \Gamma,\hat\mu},C_0^\infty(\R^d)\otimes \mathcal FC_b^\infty(\mathcal D,\Gamma))$ given by
\begin{multline*}
\mathcal E_{coup}^{\R^d\times\Gamma,\hat\mu}(\mathfrak F,\mathfrak G):=\int_{\R^d\times\Gamma} \left(\nabla^\Gamma \mathfrak F(\xi,\gamma),\nabla^\Gamma \mathfrak G(\xi,\gamma)\right)_{T_\gamma \Gamma}\,d\hat\mu(\xi,\gamma)\\
+\int_{\R^d\times\Gamma} \left( (\nabla_\gamma^\Gamma-\nabla_\xi)\mathfrak F(\xi,\gamma),(\nabla_\gamma^\Gamma-\nabla_\xi)\mathfrak G(\xi,\gamma)\right)_{\R^d}d\hat\mu(\xi,\gamma),
\end{multline*}
$\mathfrak F,\mathfrak G\in C_0^\infty(\R^d)\otimes \mathcal FC_b^\infty(\mathcal D,\Gamma)$, where $\hat\mu:=d\xi\otimes\mu$ and $\mu\in\mathcal G_{ibp}^{gc}(\Phi_\phi,z\,\sigma_\phi)$.

\begin{proposition}\label{lem:coup1}
Suppose that $\phi$ fulfills (SS), (LR), (I) and (D$\mbox{L}^p$) for $p\geq \max\{2,d\}$. Furthermore, let $\mu$ and $\hat\mu$ be as above. Then $(\mathcal E_{coup}^{\R^d\times\Gamma,\hat\mu},D(\mathcal E_{coup}^{\R^d\times\Gamma,\hat\mu}))$ is closable in $L^2(\R^d\times\Gamma;\hat\mu)$ and its closure $(\mathcal E_{coup}^{\R^d\times\Gamma,\hat\mu},D(\mathcal E_{coup}^{\R^d\times\Gamma,\hat\mu}))$ is a conservative, local, quasi-regular, symmetric Dirichlet form on $L^2(\R^d\times\ddot\Gamma;\hat\mu)$. Moreover, for $\mathfrak F,\mathfrak G\in C_0^\infty(\R^d)\otimes \mathcal FC_b^\infty(\mathcal D,\Gamma)$ it holds
$$
\mathcal E_{coup}^{\R^d\times\Gamma,\hat\mu}(\mathfrak F,\mathfrak G)=\int_{\R^d\times\Gamma} -L_{coup}^{\R^d\times\Gamma,\hat\mu}\mathfrak F\cdot\mathfrak G\,d\hat\mu,
$$
where
$$
L_{coup}^{\R^d\times\Gamma,\hat\mu}=L_{env}-2(\nabla_\gamma^\Gamma,\nabla_\xi)_{\R^d}+(\langle \nabla\phi,\cdot\rangle,\nabla_\xi)_{\R^d}+\Delta_\xi.
$$
The generator $\left(H_{coup}^{\R^d\times\Gamma,\hat\mu},D(H_{coup}^{\R^d\times\Gamma,\hat\mu})\right)$ of $\left(\mathcal E_{coup}^{\R^d\times\Gamma,\hat\mu},D(\mathcal E_{coup}^{\R^d\times\Gamma,\hat\mu})\right)$ is the Friedrichs extension of $\left(L_{coup}^{\R^d\times\Gamma},C_0^\infty(\R^d)\otimes \mathcal FC_b^\infty(\mathcal D,\Gamma)\right)$. 
\end{proposition}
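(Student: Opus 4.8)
The plan is to follow \cite{FaGr08}, reducing everything to an integration by parts formula on $\R^d\times\Gamma$ for $\hat\mu=d\xi\otimes\mu$ and then invoking standard Dirichlet form theory; the one thing to add is that the auxiliary condition (LS) of \cite{FaGr08} can be omitted for the smaller class $\mathcal G_{ibp}^{gc}(\Phi_\phi,z\,\sigma_\phi)$, by Remark \ref{rem:conditions}(i) and the appendix. First I would establish, for $\mathfrak F,\mathfrak G\in C_0^\infty(\R^d)\otimes\mathcal FC_b^\infty(\mathcal D,\Gamma)$, the identity $\mathcal{E}_{coup}^{\R^d\times\Gamma,\hat\mu}(\mathfrak F,\mathfrak G)=\int_{\R^d\times\Gamma}(-L_{coup}^{\R^d\times\Gamma,\hat\mu}\mathfrak F)\cdot\mathfrak G\,d\hat\mu$. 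Since $d\xi$ is translation invariant, integration by parts in the $\xi$-variable is classical ($\nabla_\xi^\ast=-\nabla_\xi$ and $\Delta_\xi$ self-adjoint on compactly supported functions), while in the $\gamma$-variable one uses that $\mu\in\mathcal G_{ibp}^{gc}(\Phi_\phi,z\,\sigma_\phi)$ satisfies \eqref{eqn:ibp1} and \eqref{eqn:ibp2}. Expanding the second integrand of $\mathcal{E}_{coup}^{\R^d\times\Gamma,\hat\mu}$ into the four $\R^d$-inner-product terms $(\nabla_\gamma^\Gamma\mathfrak F,\nabla_\gamma^\Gamma\mathfrak G)$, $-(\nabla_\gamma^\Gamma\mathfrak F,\nabla_\xi\mathfrak G)$, $-(\nabla_\xi\mathfrak F,\nabla_\gamma^\Gamma\mathfrak G)$, $(\nabla_\xi\mathfrak F,\nabla_\xi\mathfrak G)$: the first of these together with the $(\nabla^\Gamma\mathfrak F,\nabla^\Gamma\mathfrak G)_{T_\gamma\Gamma}$-term reassembles $\int(-L_{env}\mathfrak F)\cdot\mathfrak G\,d\hat\mu$ by Proposition \ref{lem:env1} applied $\xi$-fibrewise; $(\nabla_\xi\mathfrak F,\nabla_\xi\mathfrak G)$ integrates by parts to $\Delta_\xi$; and in the two mixed terms I would move $\nabla_\gamma^\Gamma$ onto $\mathfrak G$ by \eqref{eqn:ibp2} (componentwise, the boundary term producing exactly the drift $(\langle\nabla\phi,\cdot\rangle,\nabla_\xi)_{\R^d}$) and $\nabla_\xi$ onto $\mathfrak G$ classically, the two mixed contributions coinciding because $\nabla_\gamma^\Gamma$ and $\nabla_\xi$ commute --- whence the factor $-2$ in front of $(\nabla_\gamma^\Gamma,\nabla_\xi)_{\R^d}$.

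For this to be meaningful one must check that every summand of $L_{coup}^{\R^d\times\Gamma,\hat\mu}\mathfrak F$ lies in $L^2(\R^d\times\Gamma;\hat\mu)$: the linear-statistics terms (such as $\langle\Delta f_j,\cdot\rangle$) and the quadratic-statistics terms by the (improved) Ruelle bound for $\mu$ via \cite[Theorem 4.1]{KK02}, and the $\nabla\phi$-terms --- namely $(\langle\nabla\phi,\cdot\rangle,\nabla_\xi\mathfrak F)_{\R^d}$ together with the one- and two-body sums appearing inside $L_{env}$ --- by Lemma \ref{lem:integrability} and (D$\mbox{L}^p$) with $p\geq\max\{2,d\}$, using that the $\xi$- and $\gamma$-derivatives of the test functions are compactly supported. (The dimensional part $p\geq d$ of the hypothesis is inherited from \cite{FaGr08}, where it serves to put the coupling drift into a suitable class.) Thus $(L_{coup}^{\R^d\times\Gamma,\hat\mu},C_0^\infty(\R^d)\otimes\mathcal FC_b^\infty(\mathcal D,\Gamma))$ is a densely defined operator on $L^2(\R^d\times\Gamma;\hat\mu)$.

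Closability, the operator representation and the identification of the generator then follow from general principles: by the identity above $L_{coup}^{\R^d\times\Gamma,\hat\mu}$ is symmetric on its dense domain (the integrand defining $\mathcal{E}_{coup}^{\R^d\times\Gamma,\hat\mu}$ is manifestly symmetric in $\mathfrak F,\mathfrak G$) and negative semidefinite, since $-\int(L_{coup}^{\R^d\times\Gamma,\hat\mu}\mathfrak F)\cdot\mathfrak F\,d\hat\mu=\mathcal{E}_{coup}^{\R^d\times\Gamma,\hat\mu}(\mathfrak F,\mathfrak F)\geq0$; by the standard correspondence between such operators and closable symmetric forms (see e.g.~\cite[Ch.~I]{MaRo92}) the form is closable, its closure is the form of the Friedrichs extension of $L_{coup}^{\R^d\times\Gamma,\hat\mu}$, and this Friedrichs extension is the self-adjoint generator $H_{coup}^{\R^d\times\Gamma,\hat\mu}$. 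The remaining Dirichlet-form properties I would verify exactly as in \cite{FaGr08}: that normal contractions operate, from the squared-gradient structure of the two integrands $|\nabla^\Gamma\mathfrak F|_{T_\gamma\Gamma}^2$ and $|(\nabla_\gamma^\Gamma-\nabla_\xi)\mathfrak F|_{\R^d}^2$ (chain/product rule and approximation, cf.~\cite{MaRo92}, \cite{AKR98b}); locality, from the same structure; conservativity, by cut-off functions $\chi_n\nearrow 1$ (depending on $\xi$ and on finitely many points of $\gamma$) with $\mathcal{E}_{coup}^{\R^d\times\Gamma,\hat\mu}(\chi_n,\chi_n)\to0$; and quasi-regularity on $\R^d\times\ddot\Gamma$ ($\ddot\Gamma$ being genuinely needed when $d=1$) by the scheme of \cite[Ch.~IV]{MaRo92}, \cite{AKR98b}, i.e.~an $\mathcal{E}_{coup}^{\R^d\times\Gamma,\hat\mu}$-nest of compact sets obtained by controlling the $\R^d$-component through $\{|\xi|\le n\}$ and the configuration component through bounds on $\sharp(\gamma\cap\Lambda)$ (available from (SS) and the Ruelle bound), a countable $\mathcal{E}$-dense subset of the domain separating points off an exceptional set, and the special standard core property of $C_0^\infty(\R^d)\otimes\mathcal FC_b^\infty(\mathcal D,\Gamma)$.

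The hard part will be the two analytic inputs buried in this scheme: the $L^2(\hat\mu)$-control of the $\nabla\phi$-terms in $L_{coup}^{\R^d\times\Gamma,\hat\mu}$ --- where the non-integrable singularity at $0$, the possibly infinite range and the nontrivial negative part of $\phi$ force one to use (SS), (LR), (I), (D$\mbox{L}^p$) and the Ruelle-bound estimates carefully --- and the construction of the compact $\mathcal{E}_{coup}^{\R^d\times\Gamma,\hat\mu}$-nest in the non-compact, infinite-dimensional state space $\R^d\times\ddot\Gamma$ required for quasi-regularity. Both are already carried out in \cite{FaGr08}, and the weaker hypotheses assumed here (no (LS)) leave those arguments intact, by Remark \ref{rem:conditions}(i).
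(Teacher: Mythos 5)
Your proposal is correct and takes essentially the same route as the paper, which does not reprove this proposition but recalls it from \cite{FaGr08} (the proof of Theorem 4.15 there), adding only --- as you do --- that condition (LS) can be dropped by Remark \ref{rem:conditions}(i) together with the integration by parts formula proved in the appendix. Your integration-by-parts derivation of $L_{coup}^{\R^d\times\Gamma,\hat\mu}$ on the core matches the computation the paper itself reproduces in \eqref{eqn:hilfslemmaeq} and in Section \ref{sub:decomposition}, and the remaining Dirichlet-form properties (Markovianity, locality, conservativity, quasi-regularity) are, as you state, taken over unchanged from \cite{FaGr08}.
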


Applying the theory of Dirichlet forms the following is shown in \cite{FaGr08}:
\begin{theorem}\label{lem:coup2}
Under the assumptions of Proposition \ref{lem:coup1} the following holds:
\begin{enumerate}
\item There exists a (up to $\hat\mu$-equivalence unique) conservative diffusion process
$$
\mathbf M_{coup}^{\R^d\times\Gamma,\hat\mu}=(\mathbf \Omega^{coup},\mathbf F^{coup},(\mathbf F_t^{coup})_{t\geq 0},(\mathbf X_t^{coup})_{t\geq 0},(\mathbf P_{(\xi,\gamma)}^{coup})_{(\xi,\gamma)\in\R^d\times\ddot\Gamma})
$$
on $\R^d\times\ddot\Gamma$ which is properly associated with $\left(\mathcal E_{coup}^{\R^d\times\Gamma,\hat\mu},D(\mathcal E_{coup}^{\R^d\times\Gamma,\hat\mu})\right)$. In particular, $\mathbf M_{coup}^{\R^d\times\Gamma,\hat\mu}$ is $\hat\mu$-symmetric.
\item $\mathbf M_{coup}^{\R^d\times\Gamma,\hat\mu}$ has $\hat\mu$ as invariant measure and solves the martingale problem for the operator $\left(H_{coup}^{\R^d\times\Gamma,\hat\mu},D(H_{coup}^{\R^d\times\Gamma,\hat\mu})\right)$ in the following sense: For all $\mathfrak G\in D(H_{coup}^{\R^d\times\Gamma,\hat\mu})$, in particular for all $\mathfrak G\in C_0^\infty(\R^d)\otimes \mathcal FC_b^\infty(\mathcal D,\Gamma)$, the process defined by
$$
\widetilde {\mathfrak G}(\mathbf X_t^{coup})-\widetilde {\mathfrak G}(\mathbf X_0^{coup})-\int_0^t H_{coup}^{\R^d\times\Gamma,\hat\mu}\mathfrak G(\mathbf X_s^{coup})\,ds,\quad t\geq 0,
$$
is an $(\mathbf F_t^{coup})_{t\geq 0}$-martingale under $\mathbf P_{(\xi,\gamma)}^{coup}$ for $\mathcal E_{coup}^{\R^d\times\Gamma,\hat\mu}$-quasi-all $(\xi,\gamma)\in \R^d\times\ddot\Gamma$. Here $\widetilde{\mathfrak G}$ denotes an $\mathcal E_{coup}^{\R^d\times\Gamma,\hat\mu}$-quasi-continuous $\hat\mu$-version of $\mathfrak G$.
\end{enumerate}
\end{theorem}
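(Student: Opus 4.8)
The plan is to read off every assertion from the general theory of symmetric Dirichlet forms, the only substantial input being Proposition~\ref{lem:coup1}. By that proposition $(\mathcal E_{coup}^{\R^d\times\Gamma,\hat\mu},D(\mathcal E_{coup}^{\R^d\times\Gamma,\hat\mu}))$ is a conservative, local, quasi-regular, symmetric Dirichlet form on $L^2(\R^d\times\ddot\Gamma;\hat\mu)$, so the Ma--R\"ockner correspondence (\cite[Chapters~IV and~V]{MaRo92}) provides a $\hat\mu$-tight special standard process $\mathbf M_{coup}^{\R^d\times\Gamma,\hat\mu}$ on $\R^d\times\ddot\Gamma$ properly associated with the form; I would then use locality to conclude that this process has $\hat\mu$-a.s.\ continuous sample paths (hence is a diffusion) and conservativeness to conclude that it has infinite lifetime. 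Uniqueness up to $\hat\mu$-equivalence among processes properly associated with the form is \cite[Section~IV.6]{MaRo92}. This yields part~(i); the symmetry relation $\int p_t^{coup}F\cdot G\,d\hat\mu=\int F\cdot p_t^{coup}G\,d\hat\mu$ is then just the $\hat\mu$-symmetry of the $L^2$-semigroup $T_{t,2}^{coup}=\exp(tH_{coup}^{\R^d\times\Gamma,\hat\mu})$, of which $p_t^{coup}$ furnishes quasi-continuous versions.

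Next I would check that $\hat\mu$ is invariant, which follows from symmetry together with conservativeness: since $T_{t,2}^{coup}\mathbf 1=\mathbf 1$, for $F\in L^1\cap L^2$ one has $\int T_{t,2}^{coup}F\,d\hat\mu=\int F\cdot T_{t,2}^{coup}\mathbf 1\,d\hat\mu=\int F\,d\hat\mu$, and this extends by monotone convergence to all nonnegative $\mathcal B(\R^d\times\ddot\Gamma)$-measurable $F$.

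For the martingale problem I would fix $\mathfrak G\in D(H_{coup}^{\R^d\times\Gamma,\hat\mu})$, choose an $\mathcal E_{coup}^{\R^d\times\Gamma,\hat\mu}$-quasi-continuous $\hat\mu$-version $\widetilde{\mathfrak G}$, and apply Fukushima's decomposition (cf.~\cite{MaRo92}): since $\mathfrak G$ lies in the domain of the generator (a fortiori in $D(\mathcal E_{coup}^{\R^d\times\Gamma,\hat\mu})$), off an $\mathcal E_{coup}^{\R^d\times\Gamma,\hat\mu}$-exceptional set of starting points $\widetilde{\mathfrak G}(\mathbf X_t^{coup})-\widetilde{\mathfrak G}(\mathbf X_0^{coup})$ splits as a square-integrable martingale additive functional of finite energy plus a continuous additive functional of zero energy, and for $\mathfrak G$ in the domain of the generator the latter equals $\int_0^t H_{coup}^{\R^d\times\Gamma,\hat\mu}\mathfrak G(\mathbf X_s^{coup})\,ds$ --- exactly the asserted decomposition. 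Alternatively one can argue directly from the Bochner identity $p_t^{coup}\mathfrak G-\mathfrak G=\int_0^t p_s^{coup}H_{coup}^{\R^d\times\Gamma,\hat\mu}\mathfrak G\,ds$, valid for elements of the domain of the generator, passing to quasi-continuous versions of all terms and using the Markov property of the properly associated process to get the martingale property for quasi-every $(\xi,\gamma)$.

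I do not expect a genuine obstacle at this stage: the hard analytic work --- closability and quasi-regularity of $\mathcal E_{coup}^{\R^d\times\Gamma,\hat\mu}$, where the non-integrable singularity of $\phi$ and the integration by parts formulae \eqref{eqn:ibp1}, \eqref{eqn:ibp2} enter --- is already packaged in Proposition~\ref{lem:coup1}. The points that still need care are the bookkeeping of $\mathcal E_{coup}^{\R^d\times\Gamma,\hat\mu}$-exceptional sets when upgrading the martingale problem from the core $C_0^\infty(\R^d)\otimes\mathcal FC_b^\infty(\mathcal D,\Gamma)$ to the full domain $D(H_{coup}^{\R^d\times\Gamma,\hat\mu})$, and confirming that locality indeed produces $\hat\mu$-a.s.\ continuous paths on the enlarged state space $\R^d\times\ddot\Gamma$ rather than only on $\R^d\times\Gamma$, which relies on quasi-regularity being stated with respect to $\ddot\Gamma$ as emphasized before Proposition~\ref{lem:env1}.
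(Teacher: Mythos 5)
Your proposal is correct, and it follows the same route as the paper, which does not prove this theorem directly but cites its derivation in \cite{FaGr08} ``by applying the theory of Dirichlet forms'': namely, the Ma--R\"ockner existence and uniqueness theory for quasi-regular forms gives part (i), and the Fukushima decomposition (equivalently, the Bochner integral identity for generator domains) gives part (ii). Your extra care about exceptional-set bookkeeping and about continuity on $\R^d\times\ddot\Gamma$ rather than $\R^d\times\Gamma$ is exactly the right caution, and nothing further is needed.
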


Below we use the notation $T_{t,2}^{coup}:=\exp(tH_{coup}^{\R^d\times\Gamma,\hat\mu})$, $t\geq 0$.

\begin{remark}
It is important to note that $\mathbf M^{env}$ (resp.~$\mathbf M^{coup}$) is a solution of \eqref{eqn:env} (resp.~\eqref{eqn:tag} and \eqref{eqn:env}) in the sense that the martingale problem for the corresponding generators is solved as stated in the previous lemmas. There would be at least some work to do in order to show that this gives rise to solutions in a stronger sense (e.g.~proving that one can enumerate all the particles from the environment and construct their trajectories, which should not explode). The martingale problem (or the associatedness of the processes with the respective Dirichlet form) is therefore the only possible starting point for deriving results on these processes.
\end{remark}

Sometimes we need the statement on the martingale problem also for some functions which are ``locally'' in $C_0^\infty(\R^d)\otimes \mathcal FC_b^\infty(\mathcal D,\Gamma)$. Therefore we prove the following lemma. We denote by $\mathbf X^{coup,1}_t$ and $\mathbf X^{coup,2}_t$ the two components of $\mathbf X^{coup}_t$ for $t\geq 0$.
\begin{theorem}\label{thm:martingallemma}
\begin{enumerate}
\item Let $F\in\mathcal FC_b^\infty(\mathcal D,\Gamma)$ and let $c\in \R^d$ (e.g.~some standard unit vector or $0$). Let $\mathfrak F(\xi,\gamma):=F(\gamma)+(c,\xi)_{\R^d}$, $(\xi,\gamma)\in\R^d\times\Gamma$ and set $L^{\R^d \times\Gamma,\hat\mu}_{coup}\mathfrak F(\xi,\gamma):=L_{env}^{\Gamma,\mu}F(\gamma)+\langle (c,\nabla\phi)_{\R^d},\gamma\rangle$. Then for $\mathcal E_{coup}^{\R^d\times\Gamma,\hat\mu}$-quasi every $(\xi,\gamma)\in\R^d\times\Gamma$
$$
M_t^{\mathfrak F}:=\mathfrak F(\mathbf X^{coup}_t)-\mathfrak F(\mathbf X^{coup}_0)-\int_0^t L^{\R^d \times\Gamma,\hat\mu}_{coup}\mathfrak F(\mathbf X^{coup}_s)\,ds,\quad t\geq 0,
$$
defines a continuous local $(\mathbf F_t^{coup})_{t\geq 0}$-martingale under $\mathbf P^{coup}_{(\xi,\gamma)}$ with quadratic variation process given by
\begin{equation}\label{eqn:qvariation}
\langle M^{\mathfrak F}\rangle_t=2\int_0^t (\nabla^\Gamma F(\mathbf X^{coup,2}_s),\nabla^\Gamma F(\mathbf X^{coup,2}_s))_{T_{\mathbf X^{coup,2}_s}\Gamma}+\left\vert \nabla^\Gamma F(\mathbf X^{coup,2}_s)-c\right\vert^2 \,ds.
\end{equation}
\item For $\mathcal E_{coup}^{\R^d\times\Gamma,\hat\mu}$-quasi every $(\xi,\gamma)\in\R^d\times\Gamma$ the process 
$$
\left(\mathbf X^{coup,1}_t-\mathbf X^{coup,1}_0-\int_0^t \langle\nabla\phi,\mathbf X^{coup,2}_s\rangle ds\right)_{t\geq 0}
$$ 
is distributed under $\mathbf P^{coup}_{(\xi,\gamma)}$ like $\sqrt{2}$ times a $d$-dimensional Brownian motion starting in $0$.
\end{enumerate}
\end{theorem}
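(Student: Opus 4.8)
The plan is to deduce both statements from the martingale problem for $H_{coup}^{\R^d\times\Gamma,\hat\mu}$ (Theorem \ref{lem:coup2}(ii)) by an approximation argument, since the functions $\mathfrak F$ in part (i) are not in $C_0^\infty(\R^d)\otimes\mathcal FC_b^\infty(\mathcal D,\Gamma)$ (the map $\xi\mapsto(c,\xi)_{\R^d}$ is unbounded and not compactly supported), and then to read off part (ii) from a suitable choice of $F$ and $c$. First I would treat the genuinely new object, namely the linear-in-$\xi$ part $(c,\xi)_{\R^d}$. Choosing a sequence $\chi_n\in C_0^\infty(\R^d)$ with $\chi_n\equiv 1$ on the ball $B_n(0)$, $0\le\chi_n\le1$, and $\|\nabla\chi_n\|_\infty$, $\|\Delta\chi_n\|_\infty$ bounded uniformly (e.g. $\chi_n(\xi)=\chi_1(\xi/n)$ up to the obvious rescaling, taking care that the derivatives then go to zero), I would set $\mathfrak F_n(\xi,\gamma):=F(\gamma)\,\chi_n(\xi)+(c,\xi)_{\R^d}\chi_n(\xi)$, which does lie in $C_0^\infty(\R^d)\otimes\mathcal FC_b^\infty(\mathcal D,\Gamma)$ (after expressing $(c,\xi)_{\R^d}\chi_n(\xi)$ as a finite sum of products of $C_0^\infty$ functions of $\xi$ with the constant function $1$ on $\Gamma$). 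Applying Theorem \ref{lem:coup2}(ii) to each $\mathfrak F_n$ gives a martingale $M_t^{\mathfrak F_n}$. Since the coupled diffusion is conservative and has continuous paths, for $\mathcal E_{coup}$-quasi-every starting point the path $(\mathbf X^{coup}_s)_{0\le s\le t}$ stays in a compact subset of $\R^d\times\ddot\Gamma$ on $[0,t]$; more precisely, by a standard localization one introduces the exit times $\tau_n:=\inf\{t\ge0:|\mathbf X^{coup,1}_t|\ge n\}$, which increase to $\infty$ $\mathbf P^{coup}_{(\xi,\gamma)}$-a.s. On $[0,\tau_n]$ the stopped processes $M^{\mathfrak F_m}_{t\wedge\tau_n}$ agree for all $m\ge n$ with the stopped version of the candidate $M^{\mathfrak F}$, because $\chi_m\equiv1$ near the trajectory and $L_{coup}^{\R^d\times\Gamma,\hat\mu}$ applied to $F+(c,\cdot)_{\R^d}$ reduces, using $\Delta_\xi(c,\xi)_{\R^d}=0$ and $(\langle\nabla\phi,\cdot\rangle,\nabla_\xi)_{\R^d}(c,\xi)_{\R^d}=\langle(c,\nabla\phi)_{\R^d},\gamma\rangle$ and $(\nabla_\gamma^\Gamma,\nabla_\xi)_{\R^d}(c,\xi)_{\R^d}=0$ and $\nabla^\Gamma(c,\xi)_{\R^d}=0$, exactly to the operator $L^{\R^d\times\Gamma,\hat\mu}_{coup}\mathfrak F$ defined in the statement. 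Hence $M^{\mathfrak F}_{t\wedge\tau_n}$ is an $(\mathbf F_t^{coup})$-martingale for each $n$, so $M^{\mathfrak F}$ is a continuous local martingale; continuity is inherited from the diffusion property. Here one must check that $\int_0^{t}|L^{\R^d\times\Gamma,\hat\mu}_{coup}\mathfrak F(\mathbf X^{coup}_s)|\,ds<\infty$ a.s., which follows from Lemma \ref{lem:integrability} (giving $\langle\nabla\phi,\cdot\rangle\in L^1(\Gamma;\mu)$, hence finiteness $\hat\mu$-a.e. and along quasi-every path by Fubini plus the fact that $\mu$ is invariant for the environment marginal).

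For the quadratic variation \eqref{eqn:qvariation} I would use the standard carré-du-champ identity from the theory of Dirichlet forms: for $\mathfrak G$ in the domain, $\langle M^{\mathfrak G}\rangle_t=\int_0^t \Gamma_1(\mathfrak G,\mathfrak G)(\mathbf X^{coup}_s)\,ds$, where $\Gamma_1(\mathfrak G,\mathfrak G)=L_{coup}(\mathfrak G^2)-2\mathfrak G\,L_{coup}\mathfrak G$ is twice the square-field operator, which for this form equals $2\big[(\nabla^\Gamma\mathfrak G,\nabla^\Gamma\mathfrak G)_{T_\gamma\Gamma}+|(\nabla_\gamma^\Gamma-\nabla_\xi)\mathfrak G|^2_{\R^d}\big]$ by inspection of $\mathcal E_{coup}^{\R^d\times\Gamma,\hat\mu}$. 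Applying this to the localized approximants $\mathfrak F_n$ and passing to the limit along $\tau_n$ as above, and noting that for $\mathfrak F=F+(c,\cdot)_{\R^d}$ one has $\nabla^\Gamma\mathfrak F=\nabla^\Gamma F$ and $(\nabla_\gamma^\Gamma-\nabla_\xi)\mathfrak F=\nabla_\gamma^\Gamma F-c$ — wait, more carefully: $(\nabla_\gamma^\Gamma-\nabla_\xi)(c,\xi)_{\R^d}=-\nabla_\xi(c,\xi)_{\R^d}=-c$, so $(\nabla_\gamma^\Gamma-\nabla_\xi)\mathfrak F=\nabla_\gamma^\Gamma F-c$; however the squared term as written in \eqref{eqn:qvariation} is $|\nabla^\Gamma F-c|^2$, so I should double-check whether the intended convention is $\nabla^\Gamma F(\gamma)$ paired against the constant vector field $c$ in the same way $\nabla_\gamma^\Gamma$ acts, i.e. whether $|\nabla^\Gamma F-c|^2$ here abbreviates $|\nabla_\gamma^\Gamma F-c|_{\R^d}^2$; with that reading the identity matches. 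I would therefore record the square-field computation giving precisely $2\big[(\nabla^\Gamma F,\nabla^\Gamma F)_{T_\gamma\Gamma}+|\nabla_\gamma^\Gamma F-c|_{\R^d}^2\big]$ and substitute $\gamma=\mathbf X^{coup,2}_s$.

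For part (ii), apply part (i) with $F=0$ and $c=e_k$ ranging over the standard basis of $\R^d$: then $\mathfrak F(\xi,\gamma)=(e_k,\xi)_{\R^d}=\xi_k$, $L^{\R^d\times\Gamma,\hat\mu}_{coup}\mathfrak F=\langle(e_k,\nabla\phi)_{\R^d},\cdot\rangle$, so $M^{(k)}_t:=\mathbf X^{coup,1,k}_t-\mathbf X^{coup,1,k}_0-\int_0^t\langle(e_k,\nabla\phi),\mathbf X^{coup,2}_s\rangle\,ds$ is a continuous local martingale with $\langle M^{(k)}\rangle_t=2\int_0^t|e_k|_{\R^d}^2\,ds=2t$. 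Applying part (i) instead to $\mathfrak F(\xi,\gamma)=(e_k+e_\ell,\xi)_{\R^d}$ for $k\ne\ell$ and using bilinearity of the quadratic covariation (polarization) gives $\langle M^{(k)},M^{(\ell)}\rangle_t=0$. Hence $(M^{(1)},\dots,M^{(d)})$ is a continuous local martingale with $\langle M^{(k)},M^{(\ell)}\rangle_t=2\delta_{k\ell}t$, and Lévy's characterization identifies it as $\sqrt2$ times a standard $d$-dimensional Brownian motion starting at $0$ (the martingales vanish at $t=0$ by construction). Again one needs $\int_0^t|\langle\nabla\phi,\mathbf X^{coup,2}_s\rangle|\,ds<\infty$ a.s., which is Lemma \ref{lem:integrability} combined with invariance of $\mu$, exactly as above.

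The main obstacle I anticipate is making the localization rigorous at the level of the Dirichlet-form martingale problem: one must justify that the stopped candidate $M^{\mathfrak F}_{\cdot\wedge\tau_n}$ really coincides with $M^{\mathfrak F_m}_{\cdot\wedge\tau_n}$ for $m\ge n$ \emph{up to indistinguishability} for quasi-every starting point, which requires knowing that the paths do not touch $\partial B_n(0)\times\ddot\Gamma$ before $\tau_n$ and that the quasi-continuous versions $\widetilde{\mathfrak F_m}$ can be taken consistently; this is a standard but slightly delicate exceptional-set bookkeeping, since the null sets in the martingale problem and in the identity $\chi_m\equiv1$ on $B_n(0)$ must be merged over countably many $m,n$. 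A secondary technical point is verifying that $L^{\R^d\times\Gamma,\hat\mu}_{coup}\mathfrak F$, as an honest $\hat\mu$-a.e.-defined function, is locally integrable along quasi-every path — here the term $\sum_{\{x,y\}\subset\gamma}(\nabla\phi(x-y),\nabla f_j(x)-\nabla f_j(y))_{\R^d}$ from $L_{env}$ does not appear (since $F$ is allowed to be a general $\mathcal FC_b^\infty$ function only in part (i), where the statement carries it through $L_{env}^{\Gamma,\mu}F$ directly), so the only genuinely singular new term is $\langle(c,\nabla\phi)_{\R^d},\cdot\rangle$, controlled by Lemma \ref{lem:integrability}.
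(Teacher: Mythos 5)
Your proof is correct and follows essentially the same route as the paper: truncate $\mathfrak F$ by cut-off functions $\chi_n$ in the first variable to land in $C_0^\infty(\R^d)\otimes\mathcal FC_b^\infty(\mathcal D,\Gamma)$, invoke the martingale problem from Theorem~\ref{lem:coup2}(ii), localize via the exit times $\tau_n$ of the tagged particle from $[-n,n]^d$ (using conservativity to ensure $\tau_n\uparrow\infty$), piece the stopped martingales together, and conclude part (ii) from L\'evy's characterization. The only differences are cosmetic. For the quadratic variation the paper cites the Fukushima decomposition and the Revuz correspondence for the energy measure (\cite[Theorems 5.1.3, 5.2.3, Example 5.2.1]{FOT94}), whereas you compute the square-field operator $\Gamma_1(\mathfrak G,\mathfrak G)=L_{coup}(\mathfrak G^2)-2\mathfrak G\,L_{coup}\mathfrak G$ directly — these are equivalent here, and your version has the virtue of being self-contained (and note that $\mathfrak F_n^2$ does indeed remain in the core, as is needed for the computation to be legitimate). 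For part (ii) the paper simply appeals to L\'evy's theorem; your explicit check of the cross-variations via polarization (applying part (i) with $c=e_k$, $c=e_\ell$, and $c=e_k+e_\ell$) fills in the implicit step and is exactly the right argument. You are also right that the expression $\left\vert\nabla^\Gamma F-c\right\vert^2$ in \eqref{eqn:qvariation} must be read as $\left\vert\nabla_\gamma^\Gamma F-c\right\vert_{\R^d}^2$; this is confirmed by the square-field computation and by the definition of $Z$ in Remark~\ref{rem:martingale}, and appears to be a small notational slip in the statement. Finally, your observation that $\int_0^t\vert L_{coup}\mathfrak F(\mathbf X_s)\vert\,ds<\infty$ needs to be controlled (via Lemma~\ref{lem:integrability} and invariance of $\mu$) is a valid point that the paper addresses only implicitly via the localization; either way the argument is sound.
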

\begin{proof}
For $k\in\N$ choose a function $\chi_k\in C_0^\infty(\R^d)$ such that $1_{[-k,k]^d}\leq \chi_k\leq 1_{[-k-1,k+1]^d}$. Considering $\chi_k$ as a function on $\R^d\times\Gamma$ which is constant in the second argument, we obtain functions $\mathfrak F_k:=\chi_k \mathfrak F\in C_0^\infty(\R^d)\times \mathcal FC_b^\infty(\mathcal D,\Gamma)$. By Lemma \ref{lem:coup2}(ii) and the fact that the set of $\mathcal E_{coup}^{\R^d\times\Gamma,\hat\mu}$-exceptional sets is closed w.r.t.~countable unions, there is an $\mathcal E_{coup}^{\R^d\times\Gamma,\hat\mu}$-exceptional set $N\subset \R^d\times\Gamma$ such that for all $(\xi,\gamma)\in (\R^d\times\Gamma)\setminus N$ all $(M^{\mathfrak F_k}_t)_{t\geq 0}$, $k\in\N$, are $(\mathbf F_t^{coup})_{t\geq 0}$ martingales under $\mathbf P^{coup}_{(\xi,\gamma)}$ with quadratic variation processes given as in \eqref{eqn:qvariation}, with $\mathfrak F$ replaced by $\mathfrak F_k$. The latter statement follows using \cite[Theorems 5.1.3 and 5.2.3]{FOT94} as in \cite[Example 5.2.1]{FOT94}.

Since $\mathbf M^{coup}$ is conservative, we may in addition assume that $\mathbf P^{coup}_{(\xi,\gamma)}$ is conservative for all $(\xi,\gamma)\in\R^d\times\Gamma\setminus N$. From this we obtain using the $(\mathbf F_t^{coup})_{t\geq 0}$-stopping times $\tau_k:=\inf\{t\geq 0\,|\,\mathbf X^{coup,1}\notin [-k,k]^d\}$, $k\in\N$, that $(M_{t\wedge \tau_k}^{\mathfrak F})_{t\geq 0}=(M_{t\wedge \tau_k}^{\mathfrak F_k})_{t\geq 0}$ is for each $k\in\N$ an $(\mathbf F_t^{coup})_{t\geq 0}$-martingale under $\mathbf P^{coup}_{(\xi,\gamma)}$ for all $(\xi,\gamma)\in (\R^d\times\Gamma)\setminus N$ with quadratic variation process given by $(\langle M^{\mathfrak F}\rangle_{t\wedge \tau_k})_{t\geq 0}=(\langle M^{\mathfrak F_k}\rangle_{t\wedge \tau_k})_{t\geq 0}$. (The $\tau_k$ are stopping times by right-continuity of the filtration $(\mathbf F_t^{coup})_{t\geq 0}$.) Since by conservativity we have $\tau_k\to \infty$ as $k\to\infty$ $\mathbf P^{coup}_{(\xi,\gamma)}$-a.s., it follows that $(M_t^{\mathfrak F})_{t\geq 0}$ is a local martingale with quadratic variation process given as in \eqref{eqn:qvariation}. The second assertion is now a consequence of L\'{e}vy's theorem (see e.g. \cite{EK86}).
\end{proof}

\begin{remark}\label{rem:martingale}
By the Burkholder-Davis-Gundy inequality and \cite[Theorem (2.5)]{Dur96} we can drop the ``local'' in the statement of the above lemma if we can prove that for $\mathcal E_{coup}^{\R^d\times\Gamma,\hat\mu}$-quasi every $(\xi,\gamma)\in \R^d\times\Gamma$ and all $T\in\N$ it holds
$$
\mathbf E^{coup}_{(\xi,\gamma)}\langle M^{\mathfrak F}\rangle_T<\infty,
$$
which is e.g.~true if
\begin{equation}\label{eqn:Resolventexp}
\mathbf E^{coup}_{(\xi,\gamma)} \int_0^\infty e^{-s}Z(\mathbf X^{coup}_s) \,ds<\infty,
\end{equation}
where $Z(\xi',\gamma'):=(\nabla^\Gamma F(\gamma'),\nabla^\Gamma F(\gamma'))_{T_{\gamma'}\Gamma}+\left\vert \nabla_\gamma^\Gamma F(\gamma')-c\right\vert^2$, $(\xi',\gamma')\in\R^d\times\Gamma$. We believe that this is true, but do not further investigate this question, since below we only consider mixtures of the laws $\mathbf P_{(\xi,\gamma)}^{coup}$, for which \eqref{eqn:Resolventexp} is clear.
\end{remark}

\section{One-particle uniqueness and the environment part of $(T_{t,2}^{coup})_{t\geq 0}$.}\label{sec:uniqueness}

Let $\phi$ fulfill the assumptions from the beginning of Section \ref{sec:conditions}. Let $\mu\in\mathcal G_{ibp}^{gc}(\Phi_\phi,z\,\sigma_\phi)$. By symmetry of the sub-Markovian semigroup $(T_{t,2}^{coup})_{t\geq 0}$ and by the Beurling-Deny theorem (see \cite[Proposition 1.8]{LS96}) there exists for each $q\in [1,\infty]$ a contraction semigroup $(T_{t,q}^{coup})_{t\geq 0}$ on $L^q(\R^d\times\Gamma;\hat\mu)=:L^q$ extending the restriction of $(T_{t,2}^{coup})_{t\geq 0}$ to $L^1\cap L^\infty$, such that for each $q<\infty$ the semigroup $(T_{t,q}^{coup})_{t\geq 0}$ is strongly continuous and $(T_{t,\frac{q}{q-1}}^{coup})_{t\geq 0}$ is the adjoint semigroup of $(T_{t,q}^{coup})_{t\geq 0}$. For convenience, we sometimes drop the index $q$ and simply write $(T_{t}^{coup})_{t\geq 0}$. For identifying the second component of the coupled process as the environment process (see also Lemma \ref{lem:identification1} below), the main step is to prove the identity
\begin{equation}\label{eqn:decomposition}
T_{t,\infty}^{coup}(1\otimes F)=1\otimes T_{t,2}^{env}F,\quad F\in L^\infty(\Gamma;\mu).
\end{equation}
One proof for this is given in \cite[Proof of Lemma 2.3]{Os98}. However, as explained in the introduction, we give a different proof based on showing a result on essential m-dissipativity of the generator $H_{coup}^{\R^d\times\Gamma,d\xi\otimes \mu}$ on a (large but) suitable domain. This gives us not only sufficient know\-ledge of $(T^{coup}_t)_{t\geq 0}$ for obtaining \eqref{eqn:decomposition}, but even shows that, given the environment process, there exists only one possible coupled process (see Remark \ref{rem:uniqueness} below).

\subsection{A one-particle uniqueness result}\label{sub:uniqueness}

The aim of this section is to prove essential m-dissipativity of $H_{coup}^{\R^d\times\Gamma,\hat\mu}$ in some $L^q$, $q\in [1,2]$, on a rather large set of functions on $\R^d\times \Gamma$, which is nevertheless small enough for our purposes. Since these functions are of a very simple type only in the first component (and since knowledge of a generator core uniquely determines the semigroup), this result is named one-particle uniqueness. We need some preparations. By $\mathcal S(\R^d)$ we denote the Schwartz space of rapidly decreasing smooth functions on $\R^d$ and by $D(H_{env}^{\Gamma,\mu})_b$ we denote the set of bounded functions from $D(H_{env}^{\Gamma,\mu})$. Any $F\in D(H_{env}^{\Gamma,\mu})$ is contained in $D(\mathcal E_{env}^{\Gamma,\mu})$. In particular, for such $F$ the object $\nabla_\gamma^\Gamma F\in L^2(\Gamma\to\R;\mu)$ is reasonably defined as the limit of $(\nabla_\gamma^\Gamma F_n)_{n\in\N}$ for a sequence $(F_n)_{n\in\N}\subset \mathcal FC_b^\infty(\mathcal D,\Gamma)$ approximating $F$ in $D(\mathcal E_{env}^{\Gamma,\mu})$ w.r.t.~the norm $\Vert\cdot\Vert_{D(\mathcal E_{env}^{\Gamma,\mu})}:=\sqrt{(\cdot,\cdot)_{L^2(\Gamma;\mu)}+\mathcal E_{env}^{\Gamma,\mu}(\cdot,\cdot)}$.

\begin{lemma}\label{lem:preparationdomain}
Let $f\in\mathcal S(\R^d)$ and $F\in D(H_{env}^{\Gamma,\mu})_b$. Then $\mathfrak F:=f\otimes F\in D(H_{coup}^{\R^d\times\Gamma,\hat\mu})$ and
$$
H_{coup}^{\R^d\times\Gamma,\hat\mu}\mathfrak F= H_{env}^{\Gamma,\mu}\mathfrak F-2(\nabla_\gamma^\Gamma,\nabla_\xi)_{\R^d}\mathfrak F+(\langle \nabla\phi,\cdot\rangle,\nabla_\xi)_{\R^d}\mathfrak F+\Delta_\xi \mathfrak F.
$$
\end{lemma}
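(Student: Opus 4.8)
The plan is to show that $\mathfrak{F}=f\otimes F$ lies in the domain of the generator by exhibiting it as a limit in the graph norm of $H_{coup}^{\R^d\times\Gamma,\hat\mu}$ of functions from the core $C_0^\infty(\R^d)\otimes\mathcal{FC}_b^\infty(\mathcal D,\Gamma)$, on which the generator acts as the explicit operator $L_{coup}^{\R^d\times\Gamma,\hat\mu}$ from Proposition \ref{lem:coup1}. Since $H_{env}^{\Gamma,\mu}$ is the Friedrichs extension of $(L_{env}^{\Gamma,\mu},\mathcal{FC}_b^\infty(\mathcal D,\Gamma))$, pick $(F_n)_{n\in\N}\subset\mathcal{FC}_b^\infty(\mathcal D,\Gamma)$ with $F_n\to F$ in $L^2(\Gamma;\mu)$ and $L_{env}^{\Gamma,\mu}F_n=H_{env}^{\Gamma,\mu}F_n\to H_{env}^{\Gamma,\mu}F$ in $L^2(\Gamma;\mu)$. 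A first technical point is that I also want $F_n\to F$ in $D(\mathcal{E}_{env}^{\Gamma,\mu})$ (so that $\nabla^\Gamma F_n\to\nabla^\Gamma F$ and $\nabla_\gamma^\Gamma F_n\to\nabla_\gamma^\Gamma F$ in the respective $L^2$-spaces); this follows from the standard Friedrichs-extension estimate $\mathcal{E}_{env}^{\Gamma,\mu}(F_n-F_m)= (-L_{env}^{\Gamma,\mu}(F_n-F_m),F_n-F_m)_{L^2(\Gamma;\mu)}$ together with the $L^2$-convergences. Secondly, since $f\in\mathcal{S}(\R^d)$ is not compactly supported, I would also cut off in the tagged-particle variable: choose $\chi_k\in C_0^\infty(\R^d)$ with $\chi_k\uparrow 1$ and derivatives bounded uniformly (a fixed dilation of a single bump), and set $\mathfrak{F}_{n,k}:=(\chi_k f)\otimes F_n\in C_0^\infty(\R^d)\otimes\mathcal{FC}_b^\infty(\mathcal D,\Gamma)$.

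Next I would compute $L_{coup}^{\R^d\times\Gamma,\hat\mu}\mathfrak{F}_{n,k}$ using the formula $L_{coup}=L_{env}-2(\nabla_\gamma^\Gamma,\nabla_\xi)_{\R^d}+(\langle\nabla\phi,\cdot\rangle,\nabla_\xi)_{\R^d}+\Delta_\xi$ and the product structure, and show each of the four terms converges in $L^2(\R^d\times\Gamma;\hat\mu)$ as $n\to\infty$ and then $k\to\infty$ to the corresponding term in the claimed expression for $H_{coup}^{\R^d\times\Gamma,\hat\mu}\mathfrak{F}$. The $\Delta_\xi$-term is routine since $\Delta_\xi((\chi_k f)\otimes F_n)=(\Delta(\chi_k f))\otimes F_n$ and $\Delta(\chi_k f)\to\Delta f$ in $L^2(\R^d)$ by dominated convergence (here one uses $f\in\mathcal S$ and the uniform bounds on $\chi_k$ and its first two derivatives). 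The $L_{env}$-term is $(\chi_k f)\otimes(L_{env}^{\Gamma,\mu}F_n)\to f\otimes(H_{env}^{\Gamma,\mu}F)$ by the graph convergence $F_n\to F$ in $D(H_{env}^{\Gamma,\mu})$, using that $f$ and each $\chi_k$ are bounded and $\chi_k f\to f$ in $L^2(\R^d)$. The term $(\nabla_\gamma^\Gamma,\nabla_\xi)_{\R^d}\mathfrak{F}_{n,k}=\sum(\nabla_\gamma^\Gamma F_n,\nabla(\chi_k f))_{\R^d}$ converges because $\nabla_\gamma^\Gamma F_n\to\nabla_\gamma^\Gamma F$ in $L^2(\Gamma;\mu)$ and $\nabla(\chi_k f)\to\nabla f$ in $L^2(\R^d)$; boundedness of $F$ (hence $F_n$ can be taken uniformly bounded, or at least one controls $L^2$-norms) keeps everything in $L^2(\R^d\times\Gamma;\hat\mu)$.

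The term requiring the most care is $(\langle\nabla\phi,\cdot\rangle,\nabla_\xi)_{\R^d}\mathfrak{F}_{n,k}=\langle\nabla\phi,\cdot\rangle\cdot(\nabla(\chi_k f)\otimes F_n)$, because $\langle\nabla\phi,\cdot\rangle$ is an unbounded function on $\Gamma$. Here is where Lemma \ref{lem:integrability} enters: under (SS), (LR), (I), (D$\mathrm{L}^p$) with $p\geq\max\{2,d\}\geq 2$ we have $\langle\nabla\phi,\cdot\rangle\in L^2(\Gamma;\mu)$, so together with $F_n,F$ uniformly bounded and $\nabla(\chi_k f)$ bounded with compact support, the products lie in $L^2(\R^d\times\Gamma;\hat\mu)$ and converge: $\langle\nabla\phi,\cdot\rangle F_n\to\langle\nabla\phi,\cdot\rangle F$ in $L^2(\Gamma;\mu)$ (since $F_n\to F$ in $L^2$ and is bounded, use a subsequence and dominated convergence against the $L^2$-majorant $\langle\nabla\phi,\cdot\rangle\sup_n\|F_n\|_\infty$), and $\nabla(\chi_k f)\to\nabla f$ in $L^\infty$-dominated fashion. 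I expect the main obstacle to be bookkeeping the interchange of the two limits $n\to\infty$ and $k\to\infty$ and ensuring the approximants $F_n$ can be chosen uniformly bounded — the latter by truncating $F_n$ through a smooth bounded cutoff $\vartheta_R\in C_b^\infty(\R)$ applied via the normal-contraction / Dirichlet-form property, which keeps them in $\mathcal{FC}_b^\infty(\mathcal D,\Gamma)$ and preserves the $D(\mathcal{E}_{env}^{\Gamma,\mu})$ and graph-norm convergences (possibly after passing to a diagonal subsequence). Once all four terms converge, closedness of $H_{coup}^{\R^d\times\Gamma,\hat\mu}$ gives $\mathfrak{F}\in D(H_{coup}^{\R^d\times\Gamma,\hat\mu})$ with the asserted action, after reorganizing the $L_{env}$-term plus the identification of $H_{env}^{\Gamma,\mu}\mathfrak{F}$ acting only on the $\gamma$-variable.
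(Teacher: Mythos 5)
Your argument breaks down at its very first step. You choose $(F_n)_{n\in\N}\subset\mathcal FC_b^\infty(\mathcal D,\Gamma)$ with $F_n\to F$ in $L^2(\Gamma;\mu)$ \emph{and} $L_{env}^{\Gamma,\mu}F_n\to H_{env}^{\Gamma,\mu}F$ in $L^2(\Gamma;\mu)$, i.e.\ you approximate $F\in D(H_{env}^{\Gamma,\mu})_b$ in the graph norm of the generator by cylinder functions. Such a sequence exists precisely when $\mathcal FC_b^\infty(\mathcal D,\Gamma)$ is an operator core for $H_{env}^{\Gamma,\mu}$, i.e.\ when $(L_{env}^{\Gamma,\mu},\mathcal FC_b^\infty(\mathcal D,\Gamma))$ is essentially self-adjoint. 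This is not known (and is not assumed) in the present setting with singular potentials; $H_{env}^{\Gamma,\mu}$ is only the \emph{Friedrichs} extension, which in general is a proper extension of the graph-norm closure of $L_{env}^{\Gamma,\mu}$ on the cylinder functions. What the Friedrichs construction does give you is denseness of $\mathcal FC_b^\infty(\mathcal D,\Gamma)$ in $D(\mathcal E_{env}^{\Gamma,\mu})$ with respect to the form norm, but from form-norm convergence you cannot conclude $L^2$-convergence of $L_{env}^{\Gamma,\mu}F_n$, and this convergence is exactly what your treatment of the $L_{env}$-term and the subsequent appeal to closedness of $H_{coup}^{\R^d\times\Gamma,\hat\mu}$ require. (The rest of your bookkeeping — cutoffs $\chi_k$, the $\langle\nabla\phi,\cdot\rangle\in L^2$ input from Lemma \ref{lem:integrability}, uniform boundedness via truncation — is secondary; the missing core property is the real obstruction, and it is precisely the reason why the paper's uniqueness results in Section \ref{sec:uniqueness} have to work with the large domain $C_0^\infty(\R^d)\otimes D(H_{env}^{\Gamma,\mu})_b$ rather than with cylinder functions.)

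The paper's proof avoids graph-norm approximation altogether by working with the weak (form) formulation. One first establishes, for cylinder $F$ and $f,g\in C_0^\infty(\R^d)$, the identity expressing $\mathcal E_{coup}^{\R^d\times\Gamma,\hat\mu}(f\otimes F,g\otimes G)$ through $\mathcal E_{env}^{\Gamma,\mu}(F,G)$, the terms $-\Delta_\xi f\otimes F+2(\nabla_\gamma^\Gamma F,\nabla_\xi f)_{\R^d}$ and the $\langle\nabla\phi,\cdot\rangle$-term; this extends to $f,g\in\mathcal S(\R^d)$ and then to arbitrary $F\in D(\mathcal E_{env}^{\Gamma,\mu})$ using only form-norm approximation together with \cite[Lemma I.2.12]{MaRo92} (which also yields $f\otimes F\in D(\mathcal E_{coup}^{\R^d\times\Gamma,\hat\mu})$). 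For bounded $F\in D(H_{env}^{\Gamma,\mu})$ the first summand equals $-(f\otimes H_{env}^{\Gamma,\mu}F,g\otimes G)_{L^2}$ \emph{exactly}, with no approximation of $H_{env}^{\Gamma,\mu}F$ needed, and the $\langle\nabla\phi,\cdot\rangle$-term is an honest $L^2$-pairing by boundedness of $F$ and Lemma \ref{lem:integrability}. The conclusion $\mathfrak F\in D(H_{coup}^{\R^d\times\Gamma,\hat\mu})$ with the asserted action then follows from the characterization of the generator of a closed form, \cite[Proposition I.2.16]{MaRo92}, together with denseness of the span of the functions $g\otimes G$ in $D(\mathcal E_{coup}^{\R^d\times\Gamma,\hat\mu})$. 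If you want to repair your write-up, you should switch to this weak formulation rather than try to force a graph-norm approximation that is not available.
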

\begin{proof}
Let at first $F\in\mathcal FC_b^\infty(\mathcal D,\Gamma)$ and $f\in C_0^\infty(\R^d)$. Then for any $g\in C_0^\infty(\R^d)$ and $G\in \mathcal FC_b^\infty(\mathcal D,\Gamma)$ it holds
\begin{eqnarray}\label{eqn:hilfslemmaeq}
\lefteqn{\mathcal E_{coup}^{\R^d\times\Gamma,\hat\mu}(f\otimes F,g\otimes G)}\nonumber\\
& &=\mathcal E_{env}^{\Gamma,\mu}(F,G)(f,g)_{L^2(\R^d)}+\int_{\Gamma}\int_{\R^d}F(\gamma) G(\gamma) (\nabla_\xi f(\xi),\nabla_\xi g(\xi))_{\R^d}\,d\xi \,d\mu(\gamma)\nonumber\\
& &\quad\quad -\int_{\Gamma}\int_{\R^d} f(\xi) G(\gamma)(\nabla_\gamma^\Gamma F(\gamma),\nabla_\xi g(\xi))_{\R^d}+g(\xi)F(\gamma)(\nabla_\gamma^\Gamma G(\gamma),\nabla_\xi f(\xi))_{\R^d}\,d\xi\,d\mu(\gamma)\nonumber\\
& &=\mathcal E_{env}^{\Gamma,\mu}(F,G)(f,g)_{L^2(\R^d)}+\left(-\Delta_\xi f\otimes F+2(\nabla_\gamma^\Gamma F,\nabla_\xi f)_{\R^d},g\otimes G\right)_{L^2}\nonumber\\
& &\quad\quad -\int_\Gamma\int_{\R^d} (\langle \nabla\phi,\gamma\rangle,\nabla_\xi f(\xi))_{\R^d} F(\gamma)g(\xi)G(\gamma)\,d\xi\,d\mu(\gamma),
\end{eqnarray}
see also \cite[Proof of Theorem 4.15]{FaGr08}. This immediately extends to $f,g\in\mathcal S(\R^d)$. Moreover, it also extends to $F\in D(\mathcal E_{env}^{\Gamma,\mu})$, as one sees by approximating $F$ by $(F_n)_{n\in\N}\subset \mathcal FC_b^\infty(\mathcal D,\Gamma)$ as mentioned above: Convergence of the right-hand side is shown using the considerations preceding this lemma. To see convergence of the left-hand side we note that for such a sequence we have $\sup_{n\in\N}\mathcal E_{coup}^{\R^d\times\Gamma,\hat\mu}(f\otimes F_n,f\otimes F_n)<\infty$, hence by \cite[Lemma I.2.12]{MaRo92} it follows $f\otimes F\in D(\mathcal E_{coup}^{\R^d\times\Gamma,\hat\mu})$ and $\mathcal E_{coup}^{\R^d\times\Gamma,\hat\mu}(f\otimes F_n,g\otimes G)\to \mathcal E_{coup}^{\R^d\times\Gamma,\hat\mu}(f\otimes F,g\otimes G)$ as $n\to\infty$.

For any bounded $F\in D(\mathcal E_{env}^{\Gamma,\mu})$ we find that the last summand on the right-hand side of \eqref{eqn:hilfslemmaeq} can be rewritten as $L^2$-inner product $((\nabla_\xi f)\otimes (\langle\nabla\phi,\cdot\rangle\,F),g\otimes G)_{L^2}$, and if in addition $F\in D(H_{env}^{\Gamma,\mu})$, the first summand equals $-(f\otimes H_{env}^{\Gamma,\mu}F,g\otimes G)_{L^2}$. The assertion follows from \cite[Proposition I.2.16]{MaRo92} and denseness of the linear span of functions $g\otimes G$ as above in $D(\mathcal E_{coup}^{\R^d\times\Gamma,\hat\mu})$ w.r.t.~$\Vert\cdot\Vert_{D(\mathcal E_{coup}^{\R^d \times\Gamma,\hat\mu})}$.
\end{proof}

For any $H\in L^2(\Gamma\to\R^d;\mu)$ we define a linear operator $(L_H,\tilde D)$ on $L^q(\R^d\times\Gamma;\hat\mu)$, $q\in [1,2]$, by 
$$
L_H\mathfrak F:=H_{env}^{\Gamma,\mu}\mathfrak F+\Delta_\xi\mathfrak F-2(\nabla_\gamma^\Gamma,\nabla_\xi)_{\R^d}\mathfrak F+(H,\nabla_\xi)_{\R^d}\mathfrak F,
$$
$\mathfrak F\in \tilde D:=\mathcal S(\R^d)\otimes \mathcal FC_b^\infty(\mathcal D,\Gamma)$.

$H_{env}^{\Gamma,\mu}$, considered as an operator acting on $\tilde D$, is dissipative in any $L^q$, $q\in [1,2]$, since an extension of it generates in $L^2$ the symmetric sub-Markovian (and thus $L^p$-contractive) strongly continuous contraction semigroup $(I\otimes T_{t,2}^{env})_{t\geq 0}$. Dissipativity in any $L^q$, $q\in [1,2]$, of $(H_{coup}^{\R^d\times\Gamma,\hat\mu},\tilde D)=(L_{H_{\phi}},\tilde D)$, where $H_{\phi}:=\langle \nabla\phi,\cdot\rangle$, follows by an analogous argument.

\begin{lemma}
Let $H\in L^2(\Gamma\to\R^d;\mu)$. Then $(L_H,\tilde D)$ is dissipative in any $L^q$, $q\in [1,2]$.
\end{lemma}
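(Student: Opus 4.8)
The plan is to mimic the dissipativity argument that was already sketched just before the statement for the special cases $H=H_\phi$ and $H=0$, and to isolate the one piece of structure that actually matters: the term $(H,\nabla_\xi)_{\R^d}\mathfrak F$ is, for each fixed $\gamma$, a first-order operator in $\xi$ with a $\gamma$-dependent (but $\xi$-independent) coefficient vector $H(\gamma)\in\R^d$, so it behaves like a ``drift'' that is divergence-free in $\xi$ and hence is formally antisymmetric in $L^2(\R^d;d\xi)$ for $\mu$-a.e.\ $\gamma$. Thus the strategy is: (1) show that $L_H-(H,\nabla_\xi)_{\R^d}$, i.e.\ the operator $\mathfrak F\mapsto H_{env}^{\Gamma,\mu}\mathfrak F+\Delta_\xi\mathfrak F-2(\nabla_\gamma^\Gamma,\nabla_\xi)_{\R^d}\mathfrak F$, is dissipative on $\tilde D$ in every $L^q$, $q\in[1,2]$ — this is exactly the $H=0$ instance already remarked in the text (it is the Friedrichs generator $H_{coup}$ with $\phi$-part removed, or more simply the generator of the product-type semigroup built from $T^{env}$ and the shifted heat semigroup, which is sub-Markovian hence $L^q$-contractive); and (2) show that adding the first-order perturbation $(H,\nabla_\xi)_{\R^d}$ does not destroy dissipativity.

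For step (2) I would use the standard $L^q$-dissipativity criterion in the form: a densely defined operator $(A,D)$ on $L^q$ is dissipative iff for every $\mathfrak F\in D$ there is a normalized tangent functional $\ell\in(L^q)^*$ at $\mathfrak F$ with $\mathrm{Re}\,\ell(A\mathfrak F)\le 0$; for $q\in(1,\infty)$ one takes $\ell$ proportional to $|\mathfrak F|^{q-1}\mathrm{sgn}\,\mathfrak F$, and the key computation is that $\int_{\R^d\times\Gamma}(H(\gamma),\nabla_\xi\mathfrak F)\,|\mathfrak F|^{q-1}\mathrm{sgn}\,\mathfrak F\,d\xi\,d\mu=\frac1q\int_\Gamma\int_{\R^d}(H(\gamma),\nabla_\xi(|\mathfrak F|^q))\,d\xi\,d\mu$, which vanishes by integration by parts in $\xi$ (no boundary term: for $\mathfrak F\in\tilde D=\mathcal S(\R^d)\otimes\mathcal FC_b^\infty$, $|\mathfrak F|^q$ is, for fixed $\gamma$, a function decaying at infinity in $\xi$, and $H(\gamma)$ is a constant vector in $\xi$, so $\mathrm{div}_\xi(H(\gamma)|\mathfrak F|^q)=(H(\gamma),\nabla_\xi|\mathfrak F|^q)$). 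Hence the perturbation contributes zero to $\mathrm{Re}\,\ell(L_H\mathfrak F)$, and combined with step (1) one gets $\mathrm{Re}\,\ell(L_H\mathfrak F)\le0$, i.e.\ dissipativity. The cases $q=1$ and $q=2$ are handled either directly ($q=2$: the perturbation is formally skew-adjoint on $\tilde D$ because $\int (H,\nabla_\xi\mathfrak F)\,\overline{\mathfrak F}+\int\mathfrak F\,\overline{(H,\nabla_\xi\mathfrak F)}=\int(H,\nabla_\xi|\mathfrak F|^2)=0$) or by the usual $q\downarrow1$ limiting argument in the dissipativity inequality.

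One technical point to be careful about: the elements of $\tilde D$ are smooth in $\xi$ and bounded, but $|\mathfrak F|^{q-1}\mathrm{sgn}\,\mathfrak F$ need not lie in $\tilde D$, so the integration-by-parts in $\xi$ should be justified by first observing $\mathfrak F(\,\cdot\,,\gamma)\in\mathcal S(\R^d)$ for each fixed $\gamma$ (since $\mathfrak F$ is a finite sum of $f_i\otimes F_i$ with $f_i\in\mathcal S$), so $|\mathfrak F(\,\cdot\,,\gamma)|^q$ and its $\xi$-gradient decay rapidly in $\xi$, making the $\xi$-integral of a perfect $\xi$-divergence vanish; the subsequent $\mu$-integration is then harmless because $H\in L^2(\Gamma\to\R^d;\mu)$ and $\nabla_\xi(|\mathfrak F|^q)$ is bounded in $\gamma$ by a polynomial-times-Schwartz bound uniformly (the $F_i$ and their $\nabla^\Gamma F_i$ are bounded on $\Gamma$ — actually $\nabla^\Gamma$ of $\mathcal FC_b^\infty$ functions need not be bounded, but here we only need $L^1(\mu)$-type control of the $\gamma$-integrand, which follows from $H\in L^2(\mu)$ together with the fact that $\nabla_\xi$ acts only on the bounded $\mathcal S(\R^d)$-factors). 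The main obstacle I anticipate is simply making step (1), the dissipativity of the ``drift-free'' operator $H_{env}^{\Gamma,\mu}+\Delta_\xi-2(\nabla^\Gamma_\gamma,\nabla_\xi)_{\R^d}$, airtight in all $L^q$, $q\in[1,2]$; but as the text already notes, this is the generator (on $\tilde D$) of a symmetric sub-Markovian strongly continuous semigroup on $L^2(\hat\mu)$ — concretely the one associated with the Dirichlet form obtained from $\mathcal E_{coup}^{\R^d\times\Gamma,\hat\mu}$ by dropping the $\nabla\phi$ contributions, or equivalently built as a ``twisted product'' of $T^{env}$ with the shifted Laplacian semigroup — and a symmetric sub-Markovian semigroup is automatically $L^q$-contractive for all $q\in[1,\infty]$, so its generator is $L^q$-dissipative, and this dissipativity passes to the core $\tilde D$. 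Everything else is the routine tangent-functional / integration-by-parts computation outlined above.
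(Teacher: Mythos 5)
There is a genuine gap, and it sits exactly in your step (1). The operator you take as reference, $L_0:=H_{env}^{\Gamma,\mu}+\Delta_\xi-2(\nabla_\gamma^\Gamma,\nabla_\xi)_{\R^d}$, is \emph{not} symmetric on $L^2(\R^d\times\Gamma;\hat\mu)$: by the integration by parts formula \eqref{eqn:ibp2}, the $\hat\mu$-adjoint of the cross term $-2(\nabla_\gamma^\Gamma,\nabla_\xi)_{\R^d}$ on test functions is $-2(\nabla_\gamma^\Gamma,\nabla_\xi)_{\R^d}+2(\langle\nabla\phi,\cdot\rangle,\nabla_\xi)_{\R^d}$, which is precisely why the Dirichlet form generator $L_{coup}^{\R^d\times\Gamma,\hat\mu}$ must carry the drift $(\langle\nabla\phi,\cdot\rangle,\nabla_\xi)_{\R^d}$. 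Consequently there is no Dirichlet form ``obtained from $\mathcal E_{coup}^{\R^d\times\Gamma,\hat\mu}$ by dropping the $\nabla\phi$ contributions'' whose generator is $L_0$, the asserted symmetric sub-Markovian (``twisted product'') semigroup for $L_0$ is not available, and the $H=0$ case is \emph{not} what is remarked in the text before the lemma (the text only records dissipativity of $H_{env}^{\Gamma,\mu}$ alone and of $L_{H_\phi}=H_{coup}^{\R^d\times\Gamma,\hat\mu}|_{\tilde D}$). Since $L_0$ is itself the $H=0$ instance of the lemma you are proving, your step (1) as justified is circular/unsupported.

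The repair is small and turns your argument into the paper's proof: anchor at $L_{H_\phi}=H_{coup}^{\R^d\times\Gamma,\hat\mu}|_{\tilde D}$, whose $L^q$-dissipativity ($q\in[1,2]$) is genuinely available because $(T_{t,q}^{coup})_{t\geq 0}$ is a symmetric sub-Markovian contraction semigroup, and treat $L_H-L_{H_\phi}=\bigl((H-H_\phi),\nabla_\xi\bigr)_{\R^d}$ as the perturbation. Your step (2) computation applies verbatim, since $H-H_\phi$ is again $\xi$-independent and hence $\xi$-divergence free, so pairing with $|\mathfrak F|^{q-1}\textnormal{sign}(\mathfrak F)$ gives $\frac1q\int(H-H_\phi,\nabla_\xi|\mathfrak F|^q)_{\R^d}\,d\hat\mu=0$ by integration by parts in $\xi$ (integrability in $\gamma$ is fine because $H\in L^2(\Gamma;\mu)$, $H_\phi\in L^1\cap L^p(\Gamma;\mu)$ by Lemma \ref{lem:integrability}, and $\nabla_\xi$ only hits the Schwartz factors). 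This is exactly the published argument; the paper additionally replaces $s\mapsto (s^+)^{q-1}$ by a $C^1$ regularization $\varphi_\varepsilon'$ and lets $\varepsilon\to0$, which is the clean way to justify the chain rule for small $q$ (and to handle $q=1$, where your tangent-functional step also needs either this regularization or the limiting argument you mention).
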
 
\begin{proof}
Let $\mathfrak F\in \tilde D$. For $\varepsilon>0$ let $\varphi_\varepsilon\in C^1(\R)$ be such that $\varphi_{\varepsilon}(t)=0$ for $t\leq 0$, $\varphi_\varepsilon'$ is nondecreasing and $\varphi_\varepsilon'(t)=t^{q-1}$ for $t\geq \varepsilon$. Then the funtion $\varphi_\varepsilon\circ\mathfrak F$ is differentiable w.r.t.~$\xi$ and decreases as well as its derivative quickly at $\xi=\infty$. By integration by parts in the $\xi$-directions we obtain
$$
\int_{\R^d\times \Gamma}\left((H-H_\phi),\nabla_\xi(\varphi_\varepsilon\circ\mathfrak F)\right)_{\R^d}d\hat\mu=0.
$$
Using the chain rule and letting $\varepsilon\to 0$ we obtain by Lebesgue's dominated convergence theorem $\int_{\R^d\times\Gamma} (L_H-H_{coup}^{\R^d\times\Gamma,\hat\mu})\mathfrak F\cdot (\mathfrak F^+)^{q-1}\,d\hat\mu=0$, where $\mathfrak F^+$ denotes the positive part of $\mathfrak F$. Adding the same equality with $\mathfrak F$ replaced by $-\mathfrak F$, we obtain $\int_{\R^d\times\Gamma} (L_H-H_{coup}^{\R^d\times\Gamma,\hat\mu})\mathfrak F\cdot\vert \mathfrak F\vert^{q-1}\textnormal{sign}(\mathfrak F)\,d\hat\mu=0$. This implies the assertion, since $H_{coup}^{\R^d\times\Gamma,\hat\mu}|_{\tilde D}$ is the restriction of an m-dissipative operator in $L^q$.
\end{proof}

We consider the operators $L_H$ on the domains $\tilde D$, $D:=C_0^\infty(\R^d)\otimes D(H_{env}^{\Gamma,\mu})_b$ and $\hat D:=\mathcal F^{-1}(C_0^\infty(\R^d))\otimes D(H_{env}^{\Gamma,\mu})_b$. Here $\mathcal F: L^2(\R^d)\to L^2(\R^d)$ denotes the Fourier transform. $L_H$ is dissipative on all these domains in all $L^q$, $q\in [1,2]$, and hence closable. Moreover, since $C_0^\infty(\R^d)$ and $\mathcal F^{-1}(C_0^\infty(\R^d))$ are dense in $\mathcal S(\R^d)$ w.r.t.~its usual Frechet topology and the latter is continuously and densely embedded in any Sobolev space $H^{m,p}(\R^d)$, $m\in\N_0$, $p\in [1,\infty)$, we obtain the following lemma (which is mentioned here rather for completeness and for convenience of the reader).
\begin{lemma}
Let $H\in L^2(\Gamma\to\R^d;\mu)$. Then the closures of $(L_H,D)$, $(L_H,\tilde D)$ and $(L_H,\hat D)$ (exist and) coincide in all $L^q$, $q\in [1,2]$.
\end{lemma}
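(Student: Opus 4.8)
The plan is to reduce everything to a single fact: if $A$ is a dissipative operator on a Banach space and $B$ is a core (i.e. $B\subset D(A)$ with $\overline{A|_B}=\overline{A}$), then automatically $\overline{A|_{B'}}=\overline{A}$ for any other $B'$ sandwiched appropriately. Concretely, since $L_H$ is dissipative on $\tilde D$, $D$ and $\hat D$ in each $L^q$ with $q\in[1,2]$, each of these operators is closable, and it suffices to show that the three closures coincide. Because $D$ and $\hat D$ are both subsets of $\tilde D$ (as $C_0^\infty(\R^d)\subset\mathcal S(\R^d)$, $\mathcal F^{-1}(C_0^\infty(\R^d))\subset\mathcal S(\R^d)$, and $D(H_{env}^{\Gamma,\mu})_b$ contains $\mathcal FC_b^\infty(\mathcal D,\Gamma)$), the containments $\overline{(L_H,D)}\subset\overline{(L_H,\tilde D)}$ and $\overline{(L_H,\hat D)}\subset\overline{(L_H,\tilde D)}$ are immediate from closability, and likewise $\overline{(L_H,\mathcal FC_b^\infty\text{-type core})}\subset\overline{(L_H,D)}$. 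So the whole content is the reverse inclusions: every $\mathfrak F\in\tilde D$ must be approximated in graph norm by elements of $D$ (and of $\hat D$).

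First I would observe that it is enough to approximate pure tensors $f\otimes F$ with $f\in\mathcal S(\R^d)$ and $F\in\mathcal FC_b^\infty(\mathcal D,\Gamma)$, since such tensors span $\tilde D$ and $L_H$ is linear. The $\gamma$-part $F$ already lies in $D(H_{env}^{\Gamma,\mu})_b$, so no approximation is needed there; only the $\xi$-part $f$ must be replaced by functions in $C_0^\infty(\R^d)$, respectively in $\mathcal F^{-1}(C_0^\infty(\R^d))$. The key input, as the statement's proof-sketch itself indicates, is the standard functional-analytic fact that $\mathcal S(\R^d)$ is continuously and densely embedded in every Sobolev space $H^{m,p}(\R^d)$ for $m\in\N_0$, $p\in[1,\infty)$, and that $C_0^\infty(\R^d)$ and $\mathcal F^{-1}(C_0^\infty(\R^d))$ are each dense in $\mathcal S(\R^d)$ in its Fr\'echet topology (for $C_0^\infty$ this is classical; for $\mathcal F^{-1}(C_0^\infty(\R^d))$ it follows because $\mathcal F$ is a homeomorphism of $\mathcal S(\R^d)$ onto itself and $C_0^\infty(\R^d)$ is dense). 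Hence there are sequences $f_n\to f$ in $\mathcal S(\R^d)$ with $f_n\in C_0^\infty(\R^d)$, and $g_n\to f$ in $\mathcal S(\R^d)$ with $g_n\in\mathcal F^{-1}(C_0^\infty(\R^d))$.

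It then remains to verify that $f_n\to f$ in $\mathcal S(\R^d)$ forces $L_H(f_n\otimes F)\to L_H(f\otimes F)$ in $L^q$. Writing out $L_H$ from its definition, the terms involving the $\xi$-derivatives are $\Delta_\xi(f_n\otimes F)=(\Delta_\xi f_n)\otimes F$, $-2(\nabla_\gamma^\Gamma,\nabla_\xi)_{\R^d}(f_n\otimes F)=-2(\nabla_\gamma^\Gamma F,\nabla_\xi f_n)_{\R^d}$, and $(H,\nabla_\xi)_{\R^d}(f_n\otimes F)=(H,\nabla_\xi f_n)_{\R^d}F$, while $H_{env}^{\Gamma,\mu}(f_n\otimes F)=f_n\otimes(H_{env}^{\Gamma,\mu}F)$. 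Convergence $\mathcal S(\R^d)\ni f_n\to f$ implies uniform convergence of $f_n$, $\nabla_\xi f_n$, $\Delta_\xi f_n$ together with enough decay to dominate in $L^q(d\xi)$ on $\R^d$; combined with $F$, $\nabla^\Gamma F$, $H_{env}^{\Gamma,\mu}F\in L^\infty(\mu)$ or $L^q(\mu)$ (the first because $F$ is a cylinder function, $H_{env}^{\Gamma,\mu}F$ lies in $L^q(\mu)$ by Lemma~\ref{lem:integrability}, and $H\in L^2\subset L^q$ for $q\le 2$), Hölder's inequality in the $\xi$-variable against a Schwartz dominating function yields $L_H(f_n\otimes F)\to L_H(f\otimes F)$ and $f_n\otimes F\to f\otimes F$ in $L^q(\hat\mu)$. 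The same estimate applies verbatim with $g_n$ in place of $f_n$. This shows $f\otimes F\in D(\overline{(L_H,D)})\cap D(\overline{(L_H,\hat D)})$ with the value of the closure equal to $L_H(f\otimes F)$, giving the reverse inclusions and hence equality of all three closures.

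I expect the only mild obstacle to be bookkeeping: making sure that for each of the finitely many summands in $L_H$ one has a genuine $L^q(d\xi\otimes\mu)$-dominating function built from Schwartz seminorms of $f$ and the fixed $L^\infty$- or $L^q(\mu)$-norms of the $\gamma$-factors, and checking that $D(H_{env}^{\Gamma,\mu})_b$ indeed contains $\mathcal FC_b^\infty(\mathcal D,\Gamma)$ so that $\mathcal FC_b^\infty$-type cores, $D$, $\tilde D$ and $\hat D$ are comparable. Neither point is deep; the lemma is genuinely a soft consequence of dissipativity plus density of $C_0^\infty(\R^d)$ and $\mathcal F^{-1}(C_0^\infty(\R^d))$ in $\mathcal S(\R^d)$.
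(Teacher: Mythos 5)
The core issue is the assertion ``\emph{$D$ and $\hat D$ are both subsets of $\tilde D$}''. This is false: since, as you yourself note, $\mathcal FC_b^\infty(\mathcal D,\Gamma)\subset D(H_{env}^{\Gamma,\mu})_b$, the $\gamma$-factor of $D=C_0^\infty(\R^d)\otimes D(H_{env}^{\Gamma,\mu})_b$ is \emph{strictly larger} than the $\gamma$-factor $\mathcal FC_b^\infty(\mathcal D,\Gamma)$ of $\tilde D$. Thus $D\not\subset\tilde D$ and $\hat D\not\subset\tilde D$ (in fact the three domains are pairwise incomparable; e.g.~$D\cap\hat D=\{0\}$ by the uncertainty principle). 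Consequently, the inclusions that your Schwartz-space approximation actually delivers are $\overline{(L_H,\tilde D)}\subset\overline{(L_H,D)}$ and $\overline{(L_H,\tilde D)}\subset\overline{(L_H,\hat D)}$ -- the opposite of what you call ``immediate from closability'' -- and the proof stops there.

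The remaining direction, say $\overline{(L_H,D)}\subset\overline{(L_H,\tilde D)}$, is where the real difficulty sits. It would require: for every $F\in D(H_{env}^{\Gamma,\mu})_b$ there is a sequence $(F_n)_n\subset\mathcal FC_b^\infty(\mathcal D,\Gamma)$ with $F_n\to F$ \emph{and} $H_{env}^{\Gamma,\mu}F_n\to H_{env}^{\Gamma,\mu}F$ in $L^q(\mu)$, i.e.~$\mathcal FC_b^\infty(\mathcal D,\Gamma)$ would have to be an \emph{operator} core for $H_{env}^{\Gamma,\mu}$. But $H_{env}^{\Gamma,\mu}$ is only constructed as the Friedrichs extension, so $\mathcal FC_b^\infty(\mathcal D,\Gamma)$ is guaranteed to be a form core, not an operator core; a form-norm approximation controls $\nabla_\gamma^\Gamma F_n$ but gives no control of $H_{env}^{\Gamma,\mu}F_n$. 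Neither your argument nor the paper's one-line justification (density of $C_0^\infty(\R^d)$ and $\mathcal F^{-1}(C_0^\infty(\R^d))$ in $\mathcal S(\R^d)$) addresses this, so you are at least following the paper's intent, and the $\xi$-factor estimates you sketch are correct and do suffice to show $\overline{(L_H,D)}=\overline{(L_H,\hat D)}$ (these two domains share the same $\gamma$-factor, so only the Schwartz approximation enters -- and this is the part of the lemma the paper actually uses in Lemma~\ref{lem:uniquenessprep}). But the explicit claim $D,\hat D\subset\tilde D$ is an error not present in the paper, and with it gone, the reduction you propose does not yield the equality with $\overline{(L_H,\tilde D)}$.
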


For a moment let us restrict to the $L^2$-setting. Below we sometimes switch to the complexified function spaces. By $\mathcal F_\xi: L^2(\R^d\times\Gamma;\hat\mu)\to L^2(\R^d\times\Gamma;\hat\mu)$ we denote the Fourier transform in the first argument, given as the unique unitary operator fulfilling $\mathcal F_\xi(f\otimes F)=(\mathcal F f)\otimes F$, $F\in L^2(\Gamma;\mu)$, $f\in L^2(\R^d)$. For $k\in\N$ let $P_k: L^2\to L^2$ be the orthogonal projection given by multiplication with the indicator function $1_{\overline{B_k(0)\times \Gamma}}$. We define spaces $L^2_k:=\mathcal F_\xi^{-1}(\textnormal{Range}(P_k))$, $k\in\N$, of $L^2$-functions with compact band in the $\xi$-directions. The set $\hat D_k:=\mathcal F^{-1}(C_0^\infty(B_k(0)))\otimes D(H_{env}^{\Gamma,\mu})_b$ is dense in $L_k^2$.

Note that for $H\in L^2(\Gamma\to\R^d;\mu)$ the unitary operator $\mathcal F_\xi$ transforms $(L_H,\tilde D)$ into the operator $(\hat L_H,\tilde D)$, given by
\begin{equation}\label{eqn:hatLH}
\hat L_H\mathfrak F(\xi,\gamma):=H_{env}^{\Gamma,\mu}\,\mathfrak F(\xi,\gamma)-\vert \xi\vert_2^2\mathfrak F(\xi,\gamma)+2i(\xi,\nabla_\gamma^\Gamma\mathfrak F(\xi,\gamma))_{\R^d}-i(H(\gamma),\xi)_{\R^d}\mathfrak F(\xi,\gamma),
\end{equation}
$\mathfrak F\in \tilde D$, $(\xi,\gamma)\in \R^d\times\Gamma$. This shows that for $H\in L^2(\Gamma\to\R^d;\mu)$ the operator $(L_H,\hat D_k)$ is well-defined as an operator in $L_k^2$, $k\in\N$.

\begin{lemma}\label{lem:uniquenessprep}
Let $H\in L^\infty(\Gamma\to\R^d;\mu)$. Then for every $k\in\N$ the operator $(L_H,\hat D_k)$ is essentially m-dissipative in $L_k^2$. It follows that $(L_H,\hat D)$ (hence also $(L_H,D)$) is essentially m-dissipative in $L^2(\R^d\times\Gamma;\hat\mu)$.
\end{lemma}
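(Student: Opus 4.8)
The plan is to prove the two assertions in Lemma \ref{lem:uniquenessprep} in sequence, deducing the second from the first together with the preceding observations on the equivalence and density of the various domains.

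\textbf{Step 1: Reduction via the band decomposition.} I would first reduce the essential m-dissipativity of $(L_H,\hat D)$ in $L^2(\R^d\times\Gamma;\hat\mu)$ to the essential m-dissipativity of each $(L_H,\hat D_k)$ in $L_k^2$. Since $\bigcup_{k\in\N} L_k^2$ is dense in $L^2$ (the ranges of the projections $P_k$ exhaust $L^2(\R^d\times\Gamma;\hat\mu)$ as $k\to\infty$), and since $\hat D_k\subset \hat D$ with $L_H$ mapping $\hat D_k$ into $L_k^2$ (this is visible from \eqref{eqn:hatLH}: multiplication by $|\xi|_2^2$, by $(H(\gamma),\xi)_{\R^d}$, and the operator $(\xi,\nabla_\gamma^\Gamma \cdot)_{\R^d}$ all preserve compact $\xi$-band, and $H_{env}^{\Gamma,\mu}$ acts only in $\gamma$), the union $\bigcup_k \hat D_k$ is a dissipative subspace on which $(I-L_H)$ has dense range provided each piece does. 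Then the standard Lumer–Phillips criterion gives essential m-dissipativity of $(L_H,\bigcup_k\hat D_k)$ and hence of the larger domains $\hat D$ and $D$.

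\textbf{Step 2: Essential m-dissipativity on $L_k^2$ by perturbation.} For fixed $k$, I would work on the Fourier side, i.e.\ with $(\hat L_H,\hat D_k)$ in $\mathrm{Range}(P_k)$, where by \eqref{eqn:hatLH}
$$
\hat L_H\mathfrak F(\xi,\gamma)=H_{env}^{\Gamma,\mu}\mathfrak F(\xi,\gamma)-|\xi|_2^2\mathfrak F(\xi,\gamma)+2i(\xi,\nabla_\gamma^\Gamma\mathfrak F(\xi,\gamma))_{\R^d}-i(H(\gamma),\xi)_{\R^d}\mathfrak F(\xi,\gamma).
$$
The unperturbed operator is $A_k\mathfrak F:=H_{env}^{\Gamma,\mu}\mathfrak F-|\xi|_2^2\mathfrak F$, which is (essentially) self-adjoint and nonpositive on $\hat D_k$ because $H_{env}^{\Gamma,\mu}$ is the self-adjoint generator of $(T_{t,2}^{env})_{t\geq 0}$ acting in the $\gamma$-variable only and multiplication by $-|\xi|_2^2$ is a bounded (since $|\xi|_2\leq k$ on the band) negative self-adjoint operator; so $I\otimes(\text{closure})$ restricted to the band is m-dissipative. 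The perturbation is $B_k\mathfrak F:=2i(\xi,\nabla_\gamma^\Gamma\mathfrak F)_{\R^d}-i(H,\xi)_{\R^d}\mathfrak F$. The term $-i(H,\xi)_{\R^d}\mathfrak F$ is a bounded operator on $L_k^2$ since $H\in L^\infty$ and $|\xi|_2\leq k$. The term $2i(\xi,\nabla_\gamma^\Gamma\mathfrak F)_{\R^d}$ is controlled by $\nabla_\gamma^\Gamma$, hence (on the band, with the factor $|\xi|_2\le k$) by $\mathcal E_{env}^{\Gamma,\mu}(\cdot,\cdot)^{1/2}$, i.e.\ by $(-H_{env}^{\Gamma,\mu})^{1/2}$; this is infinitesimally bounded with respect to $-H_{env}^{\Gamma,\mu}$ and thus with respect to $-A_k$, using the elementary bound $\|(-H_{env}^{\Gamma,\mu})^{1/2}F\|\le \varepsilon\|H_{env}^{\Gamma,\mu}F\|+C_\varepsilon\|F\|$. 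Therefore $B_k$ is relatively bounded with respect to $A_k$ with relative bound $0$. However, $B_k$ is \emph{not} symmetric, so I cannot directly invoke the Kato–Rellich theorem for self-adjointness; instead I would invoke the perturbation theorem for m-dissipative operators (a dissipative perturbation which is $A$-bounded with relative bound $<1$ of an m-dissipative operator, together with dissipativity of the sum, keeps essential m-dissipativity — see e.g.\ the Hille–Yosida/Lumer–Phillips perturbation results). Here dissipativity of $(L_H,\hat D_k)$ in $L^2$ is already established in the lemma preceding this one (the case $q=2$), so what remains is exactly to check that $A_k+B_k$ is closed-range-dense, which follows from the relative bound $<1$.

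\textbf{Main obstacle.} The delicate point is that $B_k$ contains the genuinely unbounded, non-symmetric first-order term $2i(\xi,\nabla_\gamma^\Gamma\,\cdot)_{\R^d}$, and one must show it does not destroy the range condition; the crux is the form-boundedness estimate $\|\nabla_\gamma^\Gamma F\|_{L^2}^2\le \mathcal E_{env}^{\Gamma,\mu}(F,F)\le \|F\|_{L^2}\,\|H_{env}^{\Gamma,\mu}F\|_{L^2}$ for $F\in D(H_{env}^{\Gamma,\mu})$, which gives the relative bound $0$ after a Young-inequality split. This is where the hypothesis $H\in L^\infty(\Gamma\to\R^d;\mu)$ is used to make the zeroth-order part bounded; one should note that this does \emph{not} cover the physically relevant $H=H_\phi=\langle\nabla\phi,\cdot\rangle$ directly, so a subsequent approximation/localization step (presumably in the rest of Section \ref{sec:uniqueness}, not in this lemma) will be needed to pass from bounded $H$ to $H_\phi\in L^p$. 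Finally, the passage back from $\hat D_k$ to $\hat D$ and then to $D$ uses only that $\mathcal F^{-1}(C_0^\infty(B_k(0)))$, $k\in\N$, together span a dense subspace of $L^2(\R^d)$ and that the earlier lemma identifies the closures of $(L_H,D)$, $(L_H,\tilde D)$, $(L_H,\hat D)$; no new analytic input is required there.
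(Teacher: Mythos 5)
Your proposal is correct and follows essentially the same route as the paper: reduce to the band spaces $L_k^2$, pass to the Fourier side via \eqref{eqn:hatLH}, and treat the non-symmetric first-order and zeroth-order terms as a perturbation of $H_{env}^{\Gamma,\mu}$ with relative bound $0$, using exactly the estimate $\Vert \nabla_\gamma^\Gamma\mathfrak F\Vert_{L^2}^2\leq \mathcal E_{env}^{\Gamma,\mu}(\mathfrak F,\mathfrak F)\leq \Vert H_{env}^{\Gamma,\mu}\mathfrak F\Vert_{L^2}\Vert\mathfrak F\Vert_{L^2}$ and standard dissipative perturbation theory. The only (immaterial) difference is that you absorb the bounded multiplication by $-\vert\xi\vert_2^2$ into the unperturbed operator, whereas the paper keeps it in the perturbation, bounding it by $k^2\Vert\mathfrak F\Vert_{L^2}$ on the band.
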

(Recall that essential m-dissipativity of an operator $(A,D(A))$ on a Banach space $X$ means that $A$ is dissipative and $\textnormal{Range}(I-A)$ is dense in $X$. The well-known Lumer-Phillips theorem implies that in this case the closure of $A$ is the unique extension of $A$ which generates a strongly continuous contraction semigroup.)
\begin{proof}
The second assertion follows from the first, since the set $\bigcup_{k\in\N}L_k^2$ is dense in $L^2$. To prove the first assertion, we note that the operator $(H_{env}^{\Gamma,\mu},\hat D_k)$ is essentially self-adjoint as an operator in $L_k^2$ and hence also essentially m-dissipative in this space. We prove that the (antisymmetric, hence dissipative) operator $(L_H-H_{env}^{\Gamma,\mu},\hat D_k)$ is Kato bounded by $H_{env}^{\Gamma,\mu}$ with bound $0$. The assertion then follows from standard perturbation theory. The easiest way to show the Kato boundedness is to consider the images of $L_H$ and $H_{env}^{\Gamma,\mu}$ w.r.t.~$\mathcal F_\xi$, which are defined on $D_k:=C_0^\infty(B_k(0))\otimes D(H_{env}^{\Gamma,\mu})_b$. Using \eqref{eqn:hatLH} we find that
$$
\Vert (\hat L_H-H_{env}^{\Gamma,\mu})\mathfrak F\Vert_{L^2}\leq k^2\Vert \mathfrak F\Vert_{L^2}+2k\Vert \nabla_\gamma^\Gamma\mathfrak F\Vert_{L^2(\R^d\times\Gamma\to\R^d;\hat\mu)}+k\big\Vert\vert H\vert_2\big\Vert_{L^\infty(\Gamma;\mu)}\Vert \mathfrak F\Vert_{L^2}
$$
holds for any $\mathfrak F\in D_k$. Since
\begin{multline*}
\Vert \nabla_\gamma^\Gamma \mathfrak F\Vert^2_{L^2(\R^d\times\Gamma\to\R^d;\hat\mu)}\leq \mathcal E_{env}^{\Gamma,\mu}(\mathfrak F,\mathfrak F)\\
=-(H_{env}^{\Gamma,\mu}\mathfrak F,\mathfrak F)_{L^2}\leq \Vert H_{env}^{\Gamma,\mu}\mathfrak F\Vert_{L^2}\Vert \mathfrak F\Vert_{L^2}\leq \frac{1}{2}\left(\frac{1}{C}\Vert H_{env}^{\Gamma,\mu}\mathfrak F\Vert_{L^2}+C \Vert \mathfrak F\Vert_{L^2}\right)^2
\end{multline*}
for any $0<C<\infty$, the claimed Kato-boundedness follows.
\end{proof}

\begin{theorem}\label{thm:uniqueness}
Let $\phi$ fulfill (SS), (LR), (I) and (D$\mbox{L}^p$) for some $p\in (d,\infty)\cap [2,\infty)$ and let $\mu\in\mathcal G_{ibp}^{gc}(\Phi_\phi,z\,\sigma_\phi)$. Then $H_{\phi}:=\langle\nabla\phi,\cdot\rangle\in L^p(\Gamma\to\R^d;\mu)$ and if $q\in [1,2]$ is such that $\frac{1}{q}=\frac{1}{2}+\frac{1}{p}$, the operator $(H_{coup}^{\R^d\times\Gamma,\hat\mu},D)=(L_{H_\phi},D)$ is essentially m-dissipative in $L^q(\R^d\times\Gamma;\hat\mu)$. 
\end{theorem}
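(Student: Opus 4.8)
The plan is to combine the bounded-drift essential m-dissipativity result of Lemma~\ref{lem:uniquenessprep} with an approximation argument in which the relation $\tfrac1q=\tfrac12+\tfrac1p$ is precisely the exponent relation that turns a fibrewise Hölder estimate into a genuine $L^q$-bound for the drift error. First, $H_\phi=\langle\nabla\phi,\cdot\rangle\in L^p(\Gamma\to\R^d;\mu)$ is immediate from Lemma~\ref{lem:integrability}; since $\mu$ is a probability measure and $q\le2\le p$, this gives $H_\phi\in L^2(\Gamma\to\R^d;\mu)\cap L^q(\Gamma\to\R^d;\mu)$, so $(H_{coup}^{\R^d\times\Gamma,\hat\mu},D)=(L_{H_\phi},D)$ is a well-defined operator on $D$ by Lemma~\ref{lem:preparationdomain}, and being a restriction of the generator of the sub-Markovian semigroup $(T_{t,q}^{coup})_{t\ge0}$ (equivalently, by the dissipativity lemmas above applied with $H=H_\phi$) it is dissipative in $L^q$. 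By the Lumer--Phillips theorem it then suffices to show that $\mathrm{Range}(\lambda-L_{H_\phi}|_D)$ is dense in $L^q$ for some $\lambda>0$.

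For this I would truncate the drift: put $H_n:=H_\phi\,\mathbf 1_{\{|H_\phi|_2\le n\}}\in L^\infty(\Gamma\to\R^d;\mu)$, so $H_n\to H_\phi$ in $L^p(\Gamma\to\R^d;\mu)$. By Lemma~\ref{lem:uniquenessprep} the operator $(L_{H_n},D)$ is essentially m-dissipative in $L^2$, and since it is also dissipative in every $L^r$, $r\in[1,2]$, with sub-Markovian $L^2$-semigroup, it extends consistently to $L^q$ with $D$ still a core, so $(L_{H_n},D)$ is essentially m-dissipative in $L^q$ as well. Writing $V_n:=L_{H_\phi}-L_{H_n}=(H_\phi-H_n,\nabla_\xi)_{\R^d}$, the relation $\tfrac1q=\tfrac12+\tfrac1p$ makes Hölder's inequality in the $\gamma$-variable give, for $\mathfrak F\in D$,
\[
\|V_n\mathfrak F\|_{L^q(\R^d\times\Gamma;\hat\mu)}\le\|H_\phi-H_n\|_{L^p(\Gamma;\mu)}\,\Big(\int_{\R^d}\|\nabla_\xi\mathfrak F(\xi,\cdot)\|_{L^2(\Gamma;\mu)}^q\,d\xi\Big)^{1/q},
\]
with $\|H_\phi-H_n\|_{L^p(\Gamma;\mu)}\to0$. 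Density of $\mathrm{Range}(\lambda-L_{H_\phi}|_D)$ in $L^q$ is then obtained by a perturbation device for $(L_{H_n},D)+V_n$ (a Neumann series for the $L^q$-resolvents of $L_{H_n}$, or a Miyadera--Voigt-type bound for the operators $V_n e^{sL_{H_n}}$), once one controls the mixed $L^q_\xi$-$L^2_\gamma$ norm on the right-hand side along these resolvents/semigroups. This control is provided by the $\xi$-regularization carried by the $\Delta_\xi$-part of $L_{H_n}$, together with the structure of $D=C_0^\infty(\R^d)\otimes D(H_{env}^{\Gamma,\mu})_b$, in particular the estimate $\|\nabla_\gamma^\Gamma F\|^2_{L^2(\mu)}\le\|H_{env}^{\Gamma,\mu}F\|_{L^2(\mu)}\,\|F\|_{L^2(\mu)}$ for $F\in D(H_{env}^{\Gamma,\mu})$ — which is the reason the rather large domain $D(H_{env}^{\Gamma,\mu})_b$ appears in the statement. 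To exploit it one works on blocks $B_R(0)\times\Gamma$ with $D_R:=C_0^\infty(B_R(0))\otimes D(H_{env}^{\Gamma,\mu})_b\subset D$, on which $L_{H_\phi}$ preserves the $\xi$-support in $B_R(0)$ and the $L^2$- and $L^q$-norms are comparable (finite Lebesgue measure of $B_R(0)$), and then exhausts $\R^d$ by the $B_R(0)$, using that functions with compact $\xi$-support are dense in $L^q(\R^d\times\Gamma;\hat\mu)$.

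The main obstacle is exactly this perturbation step: $V_n=(H_\phi-H_n,\nabla_\xi)$ is a first-order operator whose coefficient lies only in $L^p(\Gamma;\mu)$, hence is \emph{not} relatively bounded with small bound with respect to $L_{H_n}$ (or $L_{H_\phi}$) in the usual Kato--Rellich sense — its natural estimate involves a mixed norm of $\nabla_\xi\mathfrak F$ rather than an $L^q$-graph norm of $L_{H_n}\mathfrak F$, and $H_{env}^{\Gamma,\mu}$ by itself cannot absorb a $\xi$-derivative. One must therefore run the argument with a norm on $D$ adapted to $L_{H_n}$ that simultaneously controls $\|H_{env}^{\Gamma,\mu}F\|_{L^2(\mu)}$, $\|\nabla_\gamma^\Gamma F\|_{L^2(\mu)}$ and Sobolev regularity in $\xi$; and the restriction to bounded $\xi$-blocks, where $L^2$ embeds into $L^q$, is what finally makes the mixed-norm term absorbable, the constants in that embedding degenerating as the block grows — which is the structural reason why the statement concerns $L^q$ with $q<2$ (via the precise exponent $\tfrac1q=\tfrac12+\tfrac1p$) rather than $L^2$.
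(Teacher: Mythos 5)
Your opening moves are sound and match the paper's: the integrability of $H_\phi$ from Lemma \ref{lem:integrability}, dissipativity of $(L_{H_\phi},D)$ in $L^q$, the Lumer--Phillips reduction to range density, the truncation $H_n:=H_\phi\,1_{\{|H_\phi|_2\le n\}}$, and the observation that the exponent relation $\tfrac1q=\tfrac12+\tfrac1p$ is what Hölder needs. You also correctly diagnose the central difficulty: $V_n=(H_\phi-H_n,\nabla_\xi)_{\R^d}$ is a first-order operator whose coefficient is only $L^p$ in $\gamma$, so it is not Kato-bounded with small bound relative to $L_{H_n}$, and neither the $\Delta_\xi$ nor $H_{env}^{\Gamma,\mu}$ part of the graph norm controls $\nabla_\xi$ in the needed way. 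But from this correct diagnosis you do not extract a working argument; what you write afterward (a Neumann series or Miyadera--Voigt bound ``once one controls the mixed $L^q_\xi$-$L^2_\gamma$ norm along resolvents/semigroups'', done on blocks $B_R(0)\times\Gamma$) is a programme rather than a proof, and it is not the route the paper takes. Two concrete problems: (a) you assert that $(L_{H_n},D)$ is essentially m-dissipative in $L^q$ because it is so in $L^2$ with a sub-Markovian semigroup; Lemma \ref{lem:uniquenessprep} gives only the $L^2$ statement and the passage to an $L^q$-core is not automatic, nor does the paper ever establish it; (b) working with $D_R=C_0^\infty(B_R(0))\otimes D(H_{env}^{\Gamma,\mu})_b$ does not give you an m-dissipative operator in $L^2(B_R(0)\times\Gamma)$ without dealing with boundary conditions, which is precisely why the paper avoids bounded spatial blocks.

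The mechanism you are missing is the band-limited space $\hat D_k=\mathcal F^{-1}(C_0^\infty(B_k(0)))\otimes D(H_{env}^{\Gamma,\mu})_b$, i.e.\ compact support of the Fourier transform in $\xi$, not compact support in $\xi$. On $\hat D_k$ one has the elementary but decisive bound $\|\nabla_\xi\mathfrak F\|_{L^2}\le k\|\mathfrak F\|_{L^2}$, which, together with dissipativity of $L_{H_n}$, turns the Hölder estimate for $\|\chi_K\,|H_\phi-H_n|_2\,\nabla_\xi\mathfrak F\|_{L^q}$ into $kC_K\,\||H_\phi-H_n|_2\|_{L^p}\,\|(1-L_{H_n})\mathfrak F\|_{L^2}$, a term controllable by choosing $n$ large. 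The paper's proof has no perturbation series at all: given $\mathfrak G\in\hat D_k$, it picks $K$ so the boundary terms $\delta_K,\tilde\theta_K$ from multiplying by the spatial cutoff $\chi_K$ are small, then $n$ so the drift error is small, then $\mathfrak F\in\hat D_k$ solving $(1-L_{H_n})\mathfrak F\approx\mathfrak G$ in $L^2$ by Lemma \ref{lem:uniquenessprep}, and finally outputs $\chi_K\mathfrak F\in D$. Hölder with $\tfrac1q=\tfrac12+\tfrac1p$ is used pointwise against the compactly supported $\chi_K$, $\nabla\chi_K$, $\Delta\chi_K$ to convert $L^2$-smallness into $L^q$-smallness; the $L^2\hookrightarrow L^q$ embedding on bounded blocks that you invoke is not the relevant estimate, and the ``constants degenerating as the block grows'' intuition is replaced by the precise observation that $\tilde\theta_K=(\|\Delta\chi_K\|_{L^p}+\|\,|\nabla\chi_K|_2\|_{L^p})(3+\||H_\phi|_2\|_{L^p(\Gamma;\mu)})\to0$ as $K\to\infty$ because of the scaling $\chi_K=\chi(K^{-1}\cdot)$. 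Without $\hat D_k$ and this three-step choice of parameters you have no way to close the argument.
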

\begin{proof}
For the first assertion see Lemma \ref{lem:integrability}. Let $\chi\in C_0^\infty(\R^d)$ be such that $1_{[-1,1]^d}\leq \chi\leq 1_{[-2,2]^d}$ and define $\chi_K:=\chi(K^{-1}\cdot)$ for $K\in\N$. Consider each $\chi_K$ as a function on $\R^d\times \Gamma$ by setting $\chi_K(\xi,\gamma):=\chi_K(\xi)$, $(\xi,\gamma)\in\R^d\times\Gamma$. For $n\in\N$, define $H_n:=H_{\phi}\cdot 1_{\{\vert H_\phi\vert_2\leq n\}}$. Then $H_n\in L^\infty(\Gamma\to\R^d;\mu)$. By dissipativity of $L_{H_\phi}$ it is sufficient to prove that $(1-L_{H_{\phi}})D$ is dense in $L^q$, and by denseness of $\hat D$ in $L^q$ this reduces to finding an approximate solution $\mathfrak F\in D$ of $(1-L_{H_\phi})\mathfrak F=\mathfrak G$ for any $\mathfrak G\in \hat D=\bigcup_{k\in\N}\hat D_k$. Thus, let $k\in\N$, $0\neq \mathfrak G\in \hat D_k$ and $\varepsilon>0$. For any $\mathfrak F\in \hat D_k$ and $K\in\N$ it holds
\begin{multline*}
\Vert (1-L_{H_\phi})(\chi_K\mathfrak F)-\mathfrak G\Vert_{L^q}\\
\leq \Vert \chi_K((1-L_{H_\phi})\mathfrak F-\mathfrak G)\Vert_{L^q}+\delta_K+2\Vert (\nabla_\xi\chi_K,(\nabla_\gamma^\Gamma-\nabla_\xi)\mathfrak F)_{\R^d}\Vert_{L^q}\\
+\Vert \mathfrak F (H_\phi,\nabla_\xi \chi_K)_{\R^d}\Vert_{L^q}+\Vert \mathfrak F\cdot\Delta_\xi \chi_K\Vert_{L^q},
\end{multline*}
where $\delta_K:=\Vert (1-\chi_K)\mathfrak G\Vert_{L^q}$. Setting $\theta_K:=\Vert \Delta_\xi \chi_K\Vert_{L^p(\R^d)}+\left\Vert \vert \nabla\chi_K\vert_2\right\Vert_{L^p(\R^d)}$ and using the H\"older inequality, we find that the right-hand side can be estimated by
\begin{multline*}
\Vert \chi_K ((1-L_{H_\phi})\mathfrak F-\mathfrak G)\Vert_{L^q}+\delta_K\\
+\theta_K\left(2\Vert (\nabla_\gamma^\Gamma-\nabla_\xi)\mathfrak F\Vert_{L^2(\R^d\times\Gamma\to\R^d;\hat\mu)}+\left\Vert \vert H_{\phi}\vert_2\right\Vert_{L^p(\Gamma;\mu)}\Vert \mathfrak F\Vert_{L^2}+\Vert \mathfrak F\Vert_{L^2}\right).
\end{multline*}
Let $n\in\N$. Note that by dissipativity of $L_{H_n}$, we have $\Vert \mathfrak F\Vert_{L^2}\leq \Vert (1-L_{H_n})\mathfrak F\Vert_{L^2}$ and by antisymmetry of $L_{H_n}-H_{coup}^{\R^d\times\Gamma,\hat\mu}$ it follows
\begin{multline*}
\Vert (\nabla_\gamma^\Gamma-\nabla_\xi)\mathfrak F\Vert^2_{L^2(\R^d\times\Gamma\to\R^d;\hat\mu)}\leq \mathcal E_{coup}^{\R^d\times\Gamma,\hat\mu}(\mathfrak F,\mathfrak F)\leq ((1-H_{coup}^{\R^d\times\Gamma,\hat\mu})\mathfrak F,\mathfrak F)_{L^2}\\
=((1-L_{H_n})\mathfrak F,\mathfrak F)_{L^2}\leq \Vert (1-L_{H_n})\mathfrak F\Vert_{L^2}\Vert \mathfrak F\Vert_{L^2}\leq \Vert (1-L_{H_n})\mathfrak F\Vert_{L^2}^2.
\end{multline*}
Setting $\tilde\theta_K:=\theta_K\,(3+\left\Vert \vert H_\phi\vert_2\right\Vert_{L^p(\Gamma;\mu)})$ we obtain
\begin{eqnarray*}
\lefteqn{\Vert (1-L_{H_\phi})(\chi_K\mathfrak F)-\mathfrak G\Vert_{L^q}}\\
& &\leq \Vert \chi_K((1-L_{H_\phi})\mathfrak F-\mathfrak G)\Vert_{L^q}+\delta_K+\tilde\theta_K\Vert (1-L_{H_{n}})\mathfrak F\Vert_{L^2}\\
& &\leq \Vert \chi_K((1-L_{H_{n}})\mathfrak F-\mathfrak G)\Vert_{L^q}+\Vert \chi_K\,\vert H_\phi-H_{n}\vert_2\Vert_{L^p}\Vert \nabla_\xi\mathfrak F\Vert_{L^2}+\delta_K+\tilde\theta_K\Vert (1-L_{H_n})\mathfrak F\Vert_{L^2}\\
& &\leq C_K\Vert (1-L_{H_n})\mathfrak F-\mathfrak G\Vert_{L^2}+\\
& &\quad\quad\quad\quad\quad k C_K\left\Vert \vert H_{\phi}-H_n\vert_2\right\Vert_{L^p(\Gamma;\mu)}\Vert (1-L_{H_n})\mathfrak F\Vert_{L^2}+\delta_K+\tilde\theta_K \Vert (1-L_{H_n})\mathfrak F\Vert_{L^2},
\end{eqnarray*}
where $C_K:=\Vert \chi_K\Vert_{L^p(\R^d)}$. For the last estimate above it is crucial to have $\mathfrak F\in \hat D_k$. Note that $\tilde\theta_K\to 0$ as $K\to\infty$. Hence, we may fix $K\in\N$ large enough such that $\delta_K\leq \varepsilon/4$ and $\tilde\theta_K\leq \frac{\varepsilon}{8\Vert \mathfrak G\Vert_{L^2}}$. Then we choose $n\in\N$ large enough such that $\left\Vert \vert H_\phi-H_n\vert_2\right\Vert_{L^p(\Gamma;\mu)}\leq \frac{\varepsilon}{8kC_K \Vert \mathfrak G\Vert_{L^2}}$. Finally, according to Lemma \ref{lem:uniquenessprep} we may choose $\mathfrak F\in \hat D_k$ such that $\Vert (1-L_{H_n})\mathfrak F-\mathfrak G\Vert_{L^2}\leq \frac{\varepsilon}{4 C_K}$ and $\Vert (1-L_{H_n})\mathfrak F\Vert_{L^2}\leq 2\Vert \mathfrak G\Vert_{L^2}$. It follows that
$$
\Vert (1-L_{H_\phi})(\chi_K\mathfrak F)-\mathfrak G\Vert_{L^q}\leq \varepsilon,
$$
and since $\chi_K \mathfrak F\in D$, the theorem is shown.
\end{proof}

\begin{remark}\label{rem:uniqueness}
Let $\mu\in\mathcal G_{ibp}^{gc}(\Phi_\phi,z\sigma_\phi)$ and consider the situation of Theorem \ref{thm:uniqueness} above. Let $\mathbf M^{env}$, $\mathbf M^{coup}$ be the processes from Section \ref{sec:existence}. Assume there is another conservative right process 
$$
{\mathbf M}=(\mathbf \Omega,\mathbf F,(\mathbf F_t)_{t\geq 0},(\mathbf X^{(1)}_t,\mathbf X^{(2)}_t)_{t\geq 0},(\mathbf P_{(\xi,\gamma)})_{(\xi,\gamma)\in \R^d\times\ddot\Gamma})
$$ 
(as e.g.~defined in \cite{MaRo92}) with state space $\R^d\times\ddot\Gamma$ and continuous paths, such that for Lebesgue a.e.~$\xi\in \R^d$ we have that $\mathbf P_{\delta_{\xi}\otimes \mu}\circ (\mathbf X^{(2)})^{-1}$, considered as law on $C([0,\infty),\ddot \Gamma)$, coincides with $\mathbf P_{\mu}^{env}$, and moreover $\mathbf P_{\delta_{\xi}\otimes\mu}$ solves the martingale problem for $L_{coup}^{\R^d\times\Gamma,\hat\mu}$ on $C_0^\infty(\R^d)\otimes \mathcal FC_b^\infty(\mathcal D,\Gamma)$. Then with some effort (applying also the proof of Lemma \ref{lem:identification2} to $\mathbf M$) one shows that $d\xi\otimes\mu=\hat \mu$ is an invariant measure for $\mathbf M$ and that e.g.~for some bounded probability density $h\in L^1(\R^d)$ the law $\mathbf P_{hd\xi\otimes \mu}$ solves the martingale problem for $C_0^\infty(\R^d)\otimes D(H_{env}^{\Gamma,\mu})_b$. By invariance the transition semigroup associated with $\mathbf M$ gives rise to a strongly continuous contraction semigroup in each $L^r(\R^d\times\Gamma;\hat\mu)$, $r\in [1,\infty)$, and by the arguments used in the proof of \cite[Theorem 3.5]{AR95} and by Theorem \ref{thm:uniqueness} this semigroup is equal to $(T_t^{coup})_{t\geq 0}$, hence $\mathbf P_{(\xi,\gamma)}=\mathbf P^{coup}_{(\xi,\gamma)}$ for $\hat\mu$-a.e.~$(\xi,\gamma)\in\R^d\times\Gamma$. Thus the above uniqueness result shows that, \emph{given the environment process is} $\mathbf M^{env}$, there is only one possible coupled process. Whether there is also only one possible environment process is another - much more difficult - question, which we do not attack here.
\end{remark}

\subsection{Proof of \eqref{eqn:decomposition}}\label{sub:decomposition}

Let us at first consider a modified setting in which \eqref{eqn:decomposition} is immediate: In this modified setting the tagged particle is only allowed to move in the cube $(-\kappa,\kappa]^d$, $\kappa\in\N$, with periodic boundary, i.e.~in the $d$-dimensional torus. (In fact, we do not consider the corresponding stochastic dynamics, but instead stay on the level of functional analytic objects.) We start from the nonnegative definite bilinear form $(\mathcal E_{coup,\kappa},C_{per}^{\infty}([-\kappa,\kappa]^d)\otimes \mathcal FC_b^\infty(\mathcal D,\Gamma))$ on $L^2([-\kappa,\kappa]^d\times\Gamma;d\xi\otimes \mu)=:L^{2,\kappa}$, given by
\begin{multline*}
\mathcal E_{coup,\kappa}(\mathfrak F,\mathfrak G):=\int_{\Gamma}\int_{[-\kappa,\kappa]^d} \left(\nabla^\Gamma \mathfrak F(\xi,\gamma),\nabla^\Gamma \mathfrak G(\xi,\gamma)\right)_{T_{\gamma}\Gamma} d\xi d\mu(\gamma)\\
+\int_{\Gamma}\int_{[-\kappa,\kappa]^d} \left((\nabla_\gamma^\Gamma-\nabla_\xi)\mathfrak F(\xi,\gamma),(\nabla_\gamma^\Gamma-\nabla_\xi)\mathfrak G(\xi,\gamma)\right)_{\R^d}\,d\xi d\mu(\gamma)
\end{multline*}
for $\mathfrak F, \mathfrak G\in C_{per}^\infty([-\kappa,\kappa]^d)\otimes \mathcal FC_b^\infty(\mathcal D,\Gamma)$. Here $C_{per}^{\infty}([-\kappa,\kappa]^d)$ denotes the restrictions of $2\kappa$-periodic (in all arguments), infinitely often differentiable  functions on $\R^d$ to the cube $[-\kappa,\kappa]^d$. As in the proof of \cite[Theorem 4.15]{FaGr08} one obtains for any $f\otimes F\in C_{per}^\infty([-\kappa,\kappa]^d)\otimes \mathcal FC_b^\infty(\mathcal D,\Gamma)$, $\kappa\in\N$, that
$$
\mathcal E_{coup,\kappa}(\mathfrak F,\mathfrak G)=-(L_{coup,\kappa}\mathfrak F,\mathfrak G)_{L^{2,\kappa}},
$$
where $L_{coup,\kappa}=L_{env}^{\Gamma,\mu}+\Delta_\xi-2(\nabla_\gamma^\Gamma,\nabla_\xi)_{\R^d}+(\langle \nabla\phi,\cdot\rangle,\nabla_\xi)_{\R^d}$. (For proving this, periodicity of the functions in the first argument is needed to ensure that the integrations by parts do not produce boundary terms.) It follows from \cite[Proposition I.3.3]{MaRo92} that the form $\mathcal E_{coup,\kappa}$ is closable and its generator $(H_{coup,\kappa},D(H_{coup,\kappa}))$ is the Friedrichs extension of the operator $(L_{coup,\kappa},C_{per}^{\infty}([-\kappa,\kappa]^d)\otimes \mathcal FC_b^\infty(\mathcal D,\Gamma))$ defined above. It is not difficult to verify that the closure of $\mathcal E_{coup,\kappa}$ is a Dirichlet form, which implies that the associated strongly continuous contraction semigroup $(T_{t,2}^{coup})_{t\geq 0}$, given by $T_{t,2}^{coup}:=\exp(t H_{coup,\kappa})$, $t\geq 0$, is sub-Markovian and gives rise to contraction semigroups $(T_{t,q}^{coup,\kappa})_{t\geq 0}$ on $L^{q,\kappa}:=L^q([-\kappa,\kappa]^d\times \Gamma;d\xi\otimes \mu)$ (with analogous properties as the semigroups $(T_{t,q}^{coup})_{t\geq 0}$), $q\in [1,\infty]$. Analogously to the proof of Lemma \ref{lem:preparationdomain} we have $C_{per}^\infty([-\kappa,\kappa]^d)\otimes D(H_{env}^{\Gamma,\mu})_b\subset D(H_{coup,\kappa})$ and $H_{coup,\kappa}$ has on this set the same form as $L_{coup,\kappa}$. For $F\in \mathcal FC_b^\infty(\mathcal D,\Gamma)$ it is easily seen that $t\mapsto 1\otimes T_{t,2}^{env}F$ is a solution of the abstract Cauchy problem for $H_{coup,\kappa}$ with initial value $1\otimes F$ as well as $t\mapsto T_{t,2}^{coup}(1\otimes F)$. Well-posedness of this abstract Cauchy problem implies the following lemma.

\begin{lemma}\label{lem:decompositionsimple}
For any $F\in L^2(\Gamma;\mu)$ and $t\geq 0$ it holds
$$
T_{t,2}^{coup,\kappa}(1\otimes F)=1\otimes T_{t,2}^{env}F.
$$
\end{lemma}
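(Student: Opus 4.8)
The plan is to prove the identity first for $F$ in the dense subspace $\mathcal FC_b^\infty(\mathcal D,\Gamma)$ of $L^2(\Gamma;\mu)$ and then to pass to general $F$ by density and continuity. For the core case I would show that both curves $t\mapsto 1\otimes T_{t,2}^{env}F$ and $t\mapsto T_{t,2}^{coup,\kappa}(1\otimes F)$ are classical solutions of the abstract Cauchy problem $u'(t)=H_{coup,\kappa}u(t)$, $u(0)=1\otimes F$, and then invoke uniqueness of such solutions.

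So fix $F\in\mathcal FC_b^\infty(\mathcal D,\Gamma)$. Since $(H_{env}^{\Gamma,\mu},D(H_{env}^{\Gamma,\mu}))$ is the Friedrichs extension of $(L_{env}^{\Gamma,\mu},\mathcal FC_b^\infty(\mathcal D,\Gamma))$, we have $F\in D(H_{env}^{\Gamma,\mu})$; the domain $D(H_{env}^{\Gamma,\mu})$ is invariant under the semigroup, and because $(T_{t,2}^{env})_{t\geq 0}$ is sub-Markovian it preserves the $L^\infty$-bound, so $T_{t,2}^{env}F\in D(H_{env}^{\Gamma,\mu})_b$ for every $t\geq 0$. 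Hence $1\otimes T_{t,2}^{env}F\in C_{per}^\infty([-\kappa,\kappa]^d)\otimes D(H_{env}^{\Gamma,\mu})_b\subset D(H_{coup,\kappa})$, by the torus analog of Lemma \ref{lem:preparationdomain} recalled above; and on $1\otimes G$ with $G\in D(H_{env}^{\Gamma,\mu})_b$ one has $H_{coup,\kappa}(1\otimes G)=1\otimes H_{env}^{\Gamma,\mu}G$, because $\nabla_\xi$, $\Delta_\xi$, $(\nabla_\gamma^\Gamma,\nabla_\xi)_{\R^d}$ and $(\langle\nabla\phi,\cdot\rangle,\nabla_\xi)_{\R^d}$ all annihilate functions that are constant in $\xi$, so that $L_{coup,\kappa}$ collapses to $L_{env}^{\Gamma,\mu}$ there. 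Since $F\in D(H_{env}^{\Gamma,\mu})$, the curve $t\mapsto T_{t,2}^{env}F$ is $C^1$ in $L^2(\Gamma;\mu)$ with derivative $H_{env}^{\Gamma,\mu}T_{t,2}^{env}F$; composing with the bounded embedding $G\mapsto 1\otimes G$ of $L^2(\Gamma;\mu)$ into $L^{2,\kappa}$, I obtain that $u(t):=1\otimes T_{t,2}^{env}F$ is $C^1$ in $L^{2,\kappa}$, stays in $D(H_{coup,\kappa})$, and satisfies $u'(t)=1\otimes H_{env}^{\Gamma,\mu}T_{t,2}^{env}F=H_{coup,\kappa}u(t)$ with $u(0)=1\otimes F$.

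On the other hand, $1\otimes F\in D(H_{coup,\kappa})$ (the case $G=F$, using $\mathcal FC_b^\infty(\mathcal D,\Gamma)\subset D(H_{env}^{\Gamma,\mu})_b$), so $v(t):=T_{t,2}^{coup,\kappa}(1\otimes F)$ is the canonical classical solution of the same problem. By well-posedness of the abstract Cauchy problem for $H_{coup,\kappa}$ — i.e.~uniqueness of classical solutions for the generator of a strongly continuous contraction semigroup — I conclude $u\equiv v$, that is $T_{t,2}^{coup,\kappa}(1\otimes F)=1\otimes T_{t,2}^{env}F$ for $F\in\mathcal FC_b^\infty(\mathcal D,\Gamma)$. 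To finish, note that $F\mapsto T_{t,2}^{coup,\kappa}(1\otimes F)$ and $F\mapsto 1\otimes T_{t,2}^{env}F$ are bounded linear maps from $L^2(\Gamma;\mu)$ to $L^{2,\kappa}$ (each a composition of the bounded embedding $F\mapsto 1\otimes F$ with a contraction semigroup); since they agree on the dense set $\mathcal FC_b^\infty(\mathcal D,\Gamma)$, they agree on all of $L^2(\Gamma;\mu)$.

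I expect the only genuinely delicate point to be the pair of assertions $1\otimes T_{t,2}^{env}F\in D(H_{coup,\kappa})$ and $H_{coup,\kappa}(1\otimes\,\cdot\,)=1\otimes H_{env}^{\Gamma,\mu}(\cdot)$ on $1\otimes D(H_{env}^{\Gamma,\mu})_b$: this rests on the torus version of Lemma \ref{lem:preparationdomain} together with the invariance of $D(H_{env}^{\Gamma,\mu})_b$ under $(T_{t,2}^{env})_{t\geq 0}$, the latter being immediate from domain invariance plus sub-Markovianity. The remaining ingredients — the $C^1$-dependence on $t$, the collapse of $L_{coup,\kappa}$ on $\xi$-independent functions, and the concluding density argument — are routine.
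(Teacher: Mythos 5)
Your proposal is correct and takes essentially the same route as the paper: both curves are identified as solutions of the abstract Cauchy problem for $H_{coup,\kappa}$ with initial value $1\otimes F$ (using the torus analog of Lemma \ref{lem:preparationdomain} and the collapse of $L_{coup,\kappa}$ on $\xi$-independent functions), and well-posedness together with a density argument gives the identity for all $F\in L^2(\Gamma;\mu)$. The details you spell out (invariance of $D(H_{env}^{\Gamma,\mu})_b$ under $(T_{t,2}^{env})_{t\geq 0}$, boundedness of $G\mapsto 1\otimes G$ into $L^{2,\kappa}$) are precisely the steps the paper dismisses as ``easily seen.''
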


Note that for the above conclusion it is crucial that $1\otimes F$ is square integrable in the modified setting (in $L^{2,\kappa}$). In the original setting (in $L^2$) this is not the case, which is why \eqref{eqn:decomposition} is not immediate.

We now use the classical semigroup convergence theorem. For convenience we use the formulation from \cite{Tro58}, \cite{Ku69}: Let $q\in [1,2]$. For $\kappa\in\N$ define $P_\kappa: L^q\to L^{q,\kappa}$ by $P_\kappa \mathfrak F:=\mathfrak F|_{[-\kappa,\kappa]^d\times \Gamma}$, $\mathfrak F\in L^q$. Since $\Vert P_\kappa \mathfrak F\Vert_{L^{q,\kappa}}\to \Vert \mathfrak F\Vert_{L^q}$ as $\kappa\to\infty$, we have $L^{q,\kappa}\to L^q$ in the sense of the abovementioned papers. One says that a sequence $(\mathfrak F_\kappa)_{\kappa\in\N}$ with $\mathfrak F_\kappa\in L^{q,\kappa}$, $\kappa\in\N$, converges to some $\mathfrak F\in L^q$, iff $\lim_{\kappa\to\infty}\Vert P_\kappa \mathfrak F-\mathfrak F_\kappa\Vert_{L^{q,\kappa}}=0$. A uniformly bounded sequence $(A_\kappa)_{\kappa\in\N}$ of bounded linear operators $A_\kappa$ on $L^{q,\kappa}$, $\kappa\in\N$, is said to converge to a bounded linear operator $A$ on $L^q$, if $A_\kappa f_\kappa$ converges to $Af$ as $\kappa\to\infty$ for all sequences $(f_\kappa)_{\kappa\in\N}$ converging to $f$ as described above.

\begin{lemma}\label{lem:convergenceofsemigroups}
Assume that $(H_{coup}^{\R^d\times\Gamma,\hat\mu},C_0^\infty(\R^d)\otimes D(L_{env})_b)$ is essentially m-dissipative in $L^q$ for some $q\in [1,2]$. Then for $t\geq 0$ we have $T_{t,q}^{coup,\kappa}\to T_{t,q}^{coup}$ in the abovedescribed sense as $\kappa\to\infty$.
\end{lemma}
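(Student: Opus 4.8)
The plan is to apply the semigroup convergence theorem recalled above (\cite{Tro58}, \cite{Ku69}): the semigroups $(T_{t,q}^{coup,\kappa})_{t\geq 0}$ are contractions, hence uniformly bounded, and $(T_{t,q}^{coup})_{t\geq 0}$ is a contraction semigroup on $L^q$, so it suffices to produce a core $D_0$ for the ($L^q$-)generator of $(T_{t,q}^{coup})$ such that for every $\mathfrak F\in D_0$ there are $\mathfrak F_\kappa\in D(H_{coup,\kappa})$ with $\mathfrak F_\kappa\to\mathfrak F$ and $H_{coup,\kappa}\mathfrak F_\kappa\to H_{coup}^{\R^d\times\Gamma,\hat\mu}\mathfrak F$ in the sense described before the lemma. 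I would take $D_0:=D=C_0^\infty(\R^d)\otimes D(H_{env}^{\Gamma,\mu})_b$. By Lemma \ref{lem:preparationdomain} (applied with $f\in C_0^\infty(\R^d)\subset\mathcal S(\R^d)$) together with the consistency of the family $(T_{t,r}^{coup})_{r\in[1,\infty]}$ obtained from Beurling--Deny, $D$ lies in the domain of the $L^q$-generator of $(T_{t,q}^{coup})$, on which it acts as $L_{H_\phi}$; the hypothesis of the lemma, namely essential m-dissipativity of $(H_{coup}^{\R^d\times\Gamma,\hat\mu},D)=(L_{H_\phi},D)$ in $L^q$, together with maximality of m-dissipative operators, then shows that $D$ is indeed a core for that generator.

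By linearity it is enough to treat $\mathfrak F=f\otimes F$ with $f\in C_0^\infty(\R^d)$ and $F\in D(H_{env}^{\Gamma,\mu})_b$. Fix $\kappa_0\in\N$ with $\mathrm{supp}\,f\subset(-\kappa_0,\kappa_0)^d$. For $\kappa\geq\kappa_0$ the $2\kappa$-periodic extension of $f\cdot 1_{[-\kappa,\kappa]^d}$ belongs to $C_{per}^\infty([-\kappa,\kappa]^d)$ and agrees with $f$ on $[-\kappa,\kappa]^d$; set $\mathfrak F_\kappa:=f\otimes F\in C_{per}^\infty([-\kappa,\kappa]^d)\otimes D(H_{env}^{\Gamma,\mu})_b\subset D(H_{coup,\kappa})$, using the inclusion noted before Lemma \ref{lem:decompositionsimple}. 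Then $P_\kappa\mathfrak F=\mathfrak F_\kappa$ identically, so $\Vert P_\kappa\mathfrak F-\mathfrak F_\kappa\Vert_{L^{q,\kappa}}=0$; and since all the operators making up $L_{coup,\kappa}$, i.e.\ $L_{env}^{\Gamma,\mu}$, $\Delta_\xi$, $(\nabla_\gamma^\Gamma,\nabla_\xi)_{\R^d}$ and $(\langle\nabla\phi,\cdot\rangle,\nabla_\xi)_{\R^d}$, act locally in $\xi$ and $f$ vanishes in a neighbourhood of the boundary of $[-\kappa,\kappa]^d$, passing to the torus introduces no boundary terms, so $H_{coup,\kappa}\mathfrak F_\kappa=L_{coup,\kappa}(f\otimes F)$ is exactly the restriction to $[-\kappa,\kappa]^d\times\Gamma$ of $H_{coup}^{\R^d\times\Gamma,\hat\mu}\mathfrak F$ (Lemma \ref{lem:preparationdomain}). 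As $H_{coup}^{\R^d\times\Gamma,\hat\mu}\mathfrak F$ is supported in $\mathrm{supp}\,f\times\Gamma\subset(-\kappa_0,\kappa_0)^d\times\Gamma$, this gives $\Vert P_\kappa(H_{coup}^{\R^d\times\Gamma,\hat\mu}\mathfrak F)-H_{coup,\kappa}\mathfrak F_\kappa\Vert_{L^{q,\kappa}}=0$ for all $\kappa\geq\kappa_0$. Hence both convergences hold --- in fact with equality for large $\kappa$ --- and the semigroup convergence theorem yields $T_{t,q}^{coup,\kappa}\to T_{t,q}^{coup}$ as claimed.

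There is no real analytic difficulty here: the substance has been absorbed into the essential m-dissipativity hypothesis (which supplies the core) and into Lemma \ref{lem:preparationdomain}. The only point requiring care is the bookkeeping that replacing $\R^d$ by the torus in the $\xi$-variable leaves the action of the coupled generator unchanged on test functions supported away from the boundary --- i.e.\ that $L_{coup,\kappa}$ and $L_{coup}^{\R^d\times\Gamma,\hat\mu}$ genuinely agree there and that $f\otimes F$ sits in $D(H_{coup,\kappa})$ with the expected symbol --- and, relatedly, the identification on $D$ of the $L^q$-generator of $(T_{t,q}^{coup})$ with $L_{H_\phi}$, which rests on the compatibility of the $L^q$- and $L^2$-pictures via Beurling--Deny. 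Once these are settled, the invocation of \cite{Tro58}, \cite{Ku69} is immediate.
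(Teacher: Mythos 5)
Your proof is correct and follows essentially the same route as the paper: for $\mathfrak F=f\otimes F\in C_0^\infty(\R^d)\otimes D(H_{env}^{\Gamma,\mu})_b$ and $\kappa$ large one has $P_\kappa\mathfrak F\in D(H_{coup,\kappa})$ with $H_{coup,\kappa}P_\kappa\mathfrak F=P_\kappa H_{coup}^{\R^d\times\Gamma,\hat\mu}\mathfrak F$, so the extended limit of the $(H_{coup,\kappa})$ contains the essentially m-dissipative operator $(L_{H_\phi},D)$ and the Trotter--Kurtz theorem gives the convergence. Your additional remark identifying $D$ as a core of the $L^q$-generator of $(T_{t,q}^{coup})$ (via Lemma \ref{lem:preparationdomain}, the $L^2$--$L^q$ consistency and maximality of m-dissipative operators) is a point the paper leaves implicit, and it is handled correctly.
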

\begin{proof}
For any $\mathfrak F\in C_0^\infty(\R^d)\otimes D(H_{env}^{\Gamma,\mu})_b$ there exists $\kappa_0\in\N$ such that $P_\kappa\mathfrak F\in C_0^\infty((-\kappa,\kappa)^d))\otimes D(H_{env}^{\Gamma,\mu})\subset D(H_{coup,\kappa})_b$ and $H_{coup,\kappa}P_\kappa \mathfrak F=P_\kappa H_{coup}^{\R^d\times\Gamma,\hat\mu}\mathfrak F$ for all $\kappa\geq \kappa_0$. Thus $(H_{coup,\kappa}P_\kappa\mathfrak F)_{\kappa\geq \kappa_0}$ converges to $H_{coup}^{\R^d\times \Gamma,\hat\mu}\mathfrak F$ as $\kappa\to\infty$ along $L^{q,\kappa}\to L^q$. I.e., the extended limit of the sequence of the operators $(H_{coup,\kappa},D(H_{coup,\kappa}))$, $\kappa\in\N$, extends the essentially m-dissipative operator $(H_{coup}^{\R^d\times\Gamma,\hat\mu},C_0^\infty(\R^d)\otimes D(H_{env}^{\Gamma,\mu})_b)$ and the assertion follows from \cite[Theorem 2.1]{Ku69} (where the reader also finds the definition of the ``extended limit'').
\end{proof}

\begin{theorem}\label{thm:decomposition}
Let $\phi$ fulfill the assumptions (SS), (LR), (I) and (D$\mbox{L}^p$) for some $p\in (d,\infty)\cap [2,\infty)$ and let $\mu\in\mathcal G_{ibp}^{gc}(\Phi_\phi,z\,\sigma_\phi)$. Then for any $F\in L^\infty(\Gamma;\mu)$ and any $t\geq 0$ equation \eqref{eqn:decomposition} is valid.
\end{theorem}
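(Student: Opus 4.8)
The plan is to derive \eqref{eqn:decomposition} from its torus counterpart, Lemma~\ref{lem:decompositionsimple}, by letting the box size $\kappa$ tend to infinity and exploiting the semigroup convergence of Lemma~\ref{lem:convergenceofsemigroups}. Fix the exponent $q\in[1,2)$ determined by $\frac1q=\frac12+\frac1p$. By Theorem~\ref{thm:uniqueness} the operator $(H_{coup}^{\R^d\times\Gamma,\hat\mu},C_0^\infty(\R^d)\otimes D(H_{env}^{\Gamma,\mu})_b)$ is essentially m-dissipative in $L^q$, so the hypothesis of Lemma~\ref{lem:convergenceofsemigroups} is met and $T_{t,q}^{coup,\kappa}\to T_{t,q}^{coup}$ in the sense $L^{q,\kappa}\to L^q$ as $\kappa\to\infty$. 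I would also record two elementary facts: on the torus $1\in D(\mathcal E_{coup,\kappa})$ with $\mathcal E_{coup,\kappa}(1,1)=0$, so $T_t^{coup,\kappa}$ is conservative and leaves $\hat\mu$ restricted to $[-\kappa,\kappa]^d\times\Gamma$ invariant; and $T_t^{coup}$ is conservative with invariant measure $\hat\mu$ by Theorem~\ref{lem:coup2}.

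Since $T_{t,\infty}^{coup}(1\otimes F)$ and $1\otimes T_{t,2}^{env}F$ both lie in $L^\infty(\R^d\times\Gamma;\hat\mu)$, it suffices to verify $\langle T_{t,\infty}^{coup}(1\otimes F),\mathfrak G\rangle=\langle 1\otimes T_{t,2}^{env}F,\mathfrak G\rangle$ for $\mathfrak G$ in a set whose linear span is dense in $L^1(\R^d\times\Gamma;\hat\mu)$; by bilinearity and splitting into positive and negative parts it is enough to take $\mathfrak G=h\otimes G$ with $0\le h\in C_c(\R^d)$ and $0\le G\in L^\infty(\Gamma;\mu)$. Fix such an $\mathfrak G$ and $\kappa_0$ with $\operatorname{supp}h\subset(-\kappa_0,\kappa_0)^d$. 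For $\kappa\ge\kappa_0$ we may view $\mathfrak G$ as an element of $L^{1,\kappa}\cap L^{2,\kappa}$, and using that $T_{t,\infty}^{coup,\kappa}$ is the adjoint of $T_{t,1}^{coup,\kappa}$ together with Lemma~\ref{lem:decompositionsimple} (extended from $L^{2,\kappa}$ to $L^{\infty,\kappa}$ by consistency of the $L^{r,\kappa}$-semigroups),
$$
\int_{[-\kappa,\kappa]^d\times\Gamma}F\cdot T_{t,1}^{coup,\kappa}\mathfrak G\,d\hat\mu=\big\langle T_{t,\infty}^{coup,\kappa}(1\otimes F),\,\mathfrak G\big\rangle=\big\langle 1\otimes T_{t,2}^{env}F,\,\mathfrak G\big\rangle=\Big(\int_{\R^d}h\,d\xi\Big)\int_\Gamma T_{t,2}^{env}F\cdot G\,d\mu,
$$
which is independent of $\kappa$ and coincides with $\langle 1\otimes T_{t,2}^{env}F,\mathfrak G\rangle$ computed on $\R^d\times\Gamma$. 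It therefore remains to show that the left-hand side converges, as $\kappa\to\infty$, to $\langle 1\otimes F,T_{t,1}^{coup}\mathfrak G\rangle=\langle T_{t,\infty}^{coup}(1\otimes F),\mathfrak G\rangle$.

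Set $u_\kappa:=T_{t,1}^{coup,\kappa}\mathfrak G$, extended by $0$ to $\R^d\times\Gamma$, and $u:=T_{t,1}^{coup}\mathfrak G$; since $\mathfrak G$ is bounded with compact $\xi$-support, both lie in $L^1\cap L^q$ and the $L^1$- and $L^q$-semigroups agree on $\mathfrak G$. From $T_{t,q}^{coup,\kappa}\to T_{t,q}^{coup}$ and $u\in L^q(\R^d\times\Gamma;\hat\mu)$ one gets $u_\kappa\to u$ in $L^q(\R^d\times\Gamma;\hat\mu)$, hence $u_\kappa\to u$ $\hat\mu$-a.e.\ along a subsequence. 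Sub-Markovianity gives $u_\kappa\ge0$, while conservativity and invariance of the reference measures give $\int u_\kappa\,d\hat\mu=\int\mathfrak G\,d\hat\mu=\int u\,d\hat\mu$ for all $\kappa\ge\kappa_0$. By Scheff\'e's lemma the a.e.\ convergence is upgraded to convergence in $L^1(\R^d\times\Gamma;\hat\mu)$ along the subsequence, and since the same argument applies to every subsequence we conclude $u_\kappa\to u$ in $L^1$. Pairing against the bounded function $1\otimes F$ and noting that $u$ is supported on all of $\R^d\times\Gamma$ whereas $u_\kappa$ is supported in $[-\kappa,\kappa]^d\times\Gamma$, the left-hand sides converge to $\langle 1\otimes F,u\rangle=\langle T_{t,\infty}^{coup}(1\otimes F),\mathfrak G\rangle$, which proves \eqref{eqn:decomposition}. (When $p=2$, so $q=1$, the last step is vacuous, the convergence being directly in $L^1$.)

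The main obstacle is precisely this passage from $L^q$ to $L^1$: Lemma~\ref{lem:convergenceofsemigroups}, and hence Theorem~\ref{thm:uniqueness} on which it rests, delivers semigroup convergence only in $L^q$ with $q>1$ (the essential m-dissipativity genuinely uses $p<\infty$), whereas testing against the non-integrable function $1\otimes F$ forces one to work in $L^1$, and a direct H\"older bound on the torus would lose an unbounded factor $(2\kappa)^{d(q-1)/q}$. What rescues the argument is the conservativity of \emph{both} families of semigroups, which pins the total $L^1$-mass of $u_\kappa$ exactly to that of $\mathfrak G$ and thereby makes Scheff\'e's lemma applicable.
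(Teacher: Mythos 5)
Your proof is correct, and it shares the same overall framework as the paper's --- both pass to the torus approximation, invoke Lemma~\ref{lem:decompositionsimple} and the semigroup convergence of Lemma~\ref{lem:convergenceofsemigroups} (which rests on Theorem~\ref{thm:uniqueness}), and use conservativity of the semigroups --- but the crucial $L^q$-to-$L^1$ passage is handled by a genuinely different device. The paper stays on the side of $1\otimes F$: assuming $0\le F\le 1$, it truncates to $\chi_n\otimes F\in L^q$ with $\chi_n\uparrow 1$, deduces a $\liminf$ bound for $\langle P_\kappa\mathfrak G, T^{coup,\kappa}_t(1\otimes F)\rangle_\kappa$ from positivity and the $L^q$-convergence, and then obtains the matching $\limsup$ bound by replacing $F$ with $1-F$ and invoking conservativity --- a two-sided squeeze. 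You instead dualize: by moving the semigroups onto the test function $\mathfrak G=h\otimes G\in L^1\cap L^\infty$, Lemma~\ref{lem:decompositionsimple} renders the torus quantity a constant independent of $\kappa$, and what remains is to show $T^{coup,\kappa}_{t,1}\mathfrak G\to T^{coup}_{t,1}\mathfrak G$ in $L^1(\R^d\times\Gamma;\hat\mu)$. You get this by extracting an a.e.-convergent subsequence from the $L^q$-convergence, using conservativity of \emph{both} families of semigroups to fix the total $L^1$-masses at $\int\mathfrak G\,d\hat\mu$, and then applying Scheff\'e's lemma (and the standard sub-subsequence argument to promote the subsequential convergence to full convergence). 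Both proofs are thus exploiting exactly the same two structural facts --- positivity preservation and conservativity --- but via different mechanisms: the paper's monotone squeeze versus your Scheff\'e upgrade. Your closing diagnosis of the obstacle is accurate; a direct H\"older bound would degrade with $\kappa$, and conservativity is precisely what rescues both arguments. The auxiliary observations you record along the way (that $1$ lies in the torus form domain with zero energy, giving conservativity of $T_t^{coup,\kappa}$; that $T_{t,\infty}^{coup,\kappa}$ is the adjoint of $T_{t,1}^{coup,\kappa}$; and the extension of Lemma~\ref{lem:decompositionsimple} from $L^{2,\kappa}$ to $L^{\infty,\kappa}$ by consistency) are all needed and correctly justified.
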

\begin{proof}
We denote by $\langle \cdot,\cdot\rangle_\kappa$ the dualization between $L^{1,\kappa}$ and $L^{\infty,\kappa}$, $\kappa\in\N$, and by $\langle \cdot,\cdot\rangle_\infty$ the dualization between $L^1$ and $L^\infty$. Let w.l.o.g.~$0\leq F\leq 1$. Choose $(\chi_n)_{n\in\N}\subset C_0^\infty(\R^d)$ such that $0\leq \chi_n\uparrow 1$ as $n\to\infty$. Then by Lemma \ref{lem:convergenceofsemigroups} and Theorem \ref{thm:uniqueness} we have for some $q\in [1,2]$
$$
\Vert T_{t,q}^{coup,\kappa}P_\kappa (\chi_n\otimes F)-T_{t,q}^{coup}(\chi_n\otimes F)\Vert_{L^q}\to 0
$$
as $\kappa\to\infty$ for all $n\in\N$. Here we extend $T_{t,q}^{coup,\kappa}P_\kappa (\chi_n\otimes F)$ to $\R^d$ by setting it to $0$ on $(\R^d\setminus [-\kappa,\kappa])\times\Gamma$. It follows that for any nonnegative $\mathfrak G\in L^1\cap L^\infty$ and $n\in\N$ it holds
$$
\langle P_\kappa \mathfrak G,T_{t,q}^{coup,\kappa}P_\kappa(\chi_N\otimes F)\rangle_\kappa\to \langle \mathfrak G,T_{t,q}^{coup}(\chi_\kappa\otimes F)\rangle_\infty
$$
as $\kappa\to\infty$. Thus
$$
\liminf_{\kappa\to\infty} \langle P_\kappa\mathfrak G,T_t^{coup,\kappa} (1\otimes F)\rangle_\kappa\geq \langle \mathfrak G,T_t^{coup}(1\otimes F)\rangle_\infty.
$$
Since $(T_{t,\infty}^{coup})_{t\geq 0}$ is conservative as well as $(T_{t,2}^{coup,\kappa})_{t\geq 0}$, by replacing $F$ by $1-F$ we obtain
$$
\limsup_{\kappa\to\infty} \langle P_\kappa\mathfrak G,T_t^{coup,\kappa}(1\otimes F)\rangle_\kappa\leq \langle \mathfrak G,T_t^{coup}(1\otimes F)\rangle_\infty,
$$
hence the $\liminf$ and $\limsup$ coincide and we have convergence. Choosing a function $\mathfrak G=g\otimes G$, with nonnegative $g\in C_0(\R^d)$ and $G\in L^\infty(\Gamma;\mu)$, and applying Lemma \ref{lem:decompositionsimple}, we conclude that
$$
\langle \mathfrak G,T_t^{coup}(1\otimes F)\rangle_\infty=\lim_{\kappa\to\infty}\int_{[-\kappa,\kappa]^d} g\,d\xi\,(G,T_{t,2}^{env}F)_{L^2(\Gamma;\mu)}=\langle \mathfrak G,1\otimes T_{t,2}^{env}F\rangle_\infty.
$$
Since the linear span of functions $\mathfrak G$ as above is dense in $L^1$, the assertion follows.
\end{proof}

\section{Representation of the displacement of the tagged particle in terms of the environment process}\label{sec:displacement}

The aim of this section is to prove (under the assumptions stated below) that all information on the evolution of the tagged particle with the exception of its initial position, can be obtained from the environment process, at least if suitable initial distributions are chosen for both processes. We make this precise: Let $0\leq h\in L^1(\R^d)\cap L^\infty(\R^d)$ be a probability density w.r.t.~Lebesgue measure. For $\mu\in\mathcal G_{ibp}^{gc}(\Phi_\phi,z\,\sigma_\phi)$ we define the probability measure $h\hat\mu:=(h\,d\xi)\otimes \mu$ on $\R^d\times\Gamma$. Then $\mathbf P_{h\hat\mu}^{coup}(A):=\int_{\R^d\times\Gamma} \mathbf P_{(\xi,\gamma)}^{coup}(A) d(h\hat\mu)(\xi,\gamma)$, $A\in\mathbf \Omega^{coup}$, can be considered as a law on $C([0,\infty);\R^d\times\ddot\Gamma)$. The analogously defined law $\mathbf P_\mu^{env}$ can be considered as a law on $C([0,\infty);\ddot\Gamma)$. Paths from $C([0,\infty);\R^d\times\ddot\Gamma)$ and $C([0,\infty;\ddot\Gamma)$ are denoted below by $(\xi_t,\gamma_t)_{t\geq 0}$, $(\gamma_t)_{t\geq 0}$, respectively. Our aim is to prove the following result.

\begin{theorem}\label{thm:displacement}
Let $\phi$ fulfill (SS), (LR), (I) and (D$\mbox{L}^p$) for some $p\in (d,\infty)\cap [2,\infty)$, let $\mu\in \mathcal G_{ibp}^{gc}(\Phi_\phi,z\,\sigma_\phi)$, and let $\mathbf P_{h\hat\mu}^{coup}$ and $\mathbf P_{\mu}^{env}$ be laws on $C([0,\infty);\R^d\times\ddot\Gamma)$, $C([0,\infty);\ddot\Gamma)$, respectively, as described above. Then there exists a $C([0,\infty);\R^d)$-valued random variable $(\eta_t)_{t\geq 0}$ on $C([0,\infty);\ddot\Gamma)$ with the following properties:
\begin{enumerate}
\item For all $0\leq a<b$ the random variable $\eta_b-\eta_a$ coincides $\mathbf P^{env}_\mu$-a.s.~with a random variable that is measurable w.r.t.~the $\sigma$-field generated by $\{\gamma_s\,|\,a\leq s\leq b\}$. In particular, $(\eta_t)_{t\geq 0}$ is adapted to the $\mathbf P^{env}_\mu$-completion of the filtration $(\sigma(\gamma_s\,|\,0\leq s\leq t))_{t\geq 0}$ relative to the Borel $\sigma$-field on $C([0,\infty);\ddot\Gamma)$.
\item The distribution of the paths $(\eta_t,\gamma_t)_{t\geq 0}$ under $\mathbf P_{\mu}^{env}$ coincides with the distribution of $(\xi_t-\xi_0,\gamma_t)_{t\geq 0}$ under $\mathbf P_{h\hat\mu}^{coup}$.
\end{enumerate}
\end{theorem}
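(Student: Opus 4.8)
The plan is to combine the semimartingale decomposition of the tagged particle from Theorem~\ref{thm:martingallemma} with the averaging argument sketched in \cite{DeM89} for recovering the driving Brownian motion from the environment, and then to push the resulting functional from the coupled process to the environment process using Theorem~\ref{thm:decomposition}. Throughout I would work under the mixed law $\mathbf P^{coup}_{h\hat\mu}$, for which the local martingales produced by Theorem~\ref{thm:martingallemma} are genuine $L^2$-martingales on compact time intervals (by $\hat\mu$-invariance and the integrability in Lemma~\ref{lem:integrability}; cf.\ Remark~\ref{rem:martingale}). Theorem~\ref{thm:martingallemma}(2) gives
\[
\xi_t-\xi_0=\int_0^t\langle\nabla\phi,\gamma_s\rangle\,ds+\sqrt2\,B^0_t,\qquad t\ge0,
\]
with $(B^0_t)$ a $d$-dimensional Brownian motion for the coupled filtration. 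The drift term is manifestly a measurable functional of $(\gamma_s)_{s\le t}$ whose increment over $[a,b]$ depends only on $\{\gamma_s:a\le s\le b\}$ (it is the time integral of a fixed, $\mu$-a.e.\ finite function of the configuration, by Lemma~\ref{lem:integrability}). Hence the whole task reduces to realising $\sqrt2\,B^0_t$ as such a functional.

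For $\sqrt2\,B^0_t$ I would adapt the heuristic of \cite{DeM89}: averaging \eqref{eqn:env} over the first $N$ environment particles and letting $N\to\infty$ should return $-\sqrt2\,B^0_t$. To make this rigorous from martingale-problem data I would first construct continuous trajectories $(y^i_t)$ for individually tracked environment particles together with their semimartingale decompositions $y^i_t-y^i_0=\int_0^t b^i_s\,ds+\widetilde M^i_t$ (obtained by localization from Theorem~\ref{thm:martingallemma}, patching local charts across the stopping times at which particles come close; $b^i$ is the drift on the right of \eqref{eqn:env}, $\widetilde M^i$ a continuous martingale). Computing quadratic covariations through the coupled energy form (as in the proof of Theorem~\ref{thm:martingallemma}) one obtains the \emph{deterministic} values
\[
d\langle\widetilde M^i_k,\widetilde M^j_l\rangle_t=2(\delta_{ij}+1)\,\delta_{kl}\,dt,\qquad d\langle\widetilde M^i_k,\sqrt2\,B^0_l\rangle_t=-2\,\delta_{kl}\,dt,
\]
i.e.\ the $\widetilde M^i$ carry precisely the joint law of $\sqrt2(B^i-B^0)$ for i.i.d.\ standard Brownian motions $B^0,B^1,\dots$ From these identities a direct computation gives $\langle\,\sqrt2\,B^0+\tfrac1N\sum_{i\le N}\widetilde M^i\,\rangle_t=\tfrac{2t}{N}\,I_d\to0$, whence $\tfrac1N\sum_{i=1}^N\widetilde M^i_t\to-\sqrt2\,B^0_t$ in $L^2(\mathbf P^{coup}_{h\hat\mu})$, uniformly on compacts. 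Enumerating so that $y^1_0,\dots,y^N_0$ are the particles closest to the origin (legitimate since $\sharp(\gamma_0\cap B_R)\to\infty$ and empirical densities converge $\mu$-a.s., cf.\ Remark~\ref{rem:measuresmod}), the sets $\{y^i_s:i\le N\}$ fill, up to a boundary layer, a ball of radius $\sim N^{1/d}$; so by antisymmetry of $\nabla\phi$ the bulk of $\tfrac1N\sum_{i\le N}\sum_{j\ne i}\nabla\phi(y^i_s-y^j_s)$ cancels and the straddling-pair remainder, together with $\tfrac1N\sum_{i\le N}\nabla\phi(y^i_s)$, tends to $0$; dominating via the decay and $L^1$-properties of $\nabla\phi$ from (D$\mathrm L^p$) and the Ruelle bound (as in Lemma~\ref{lem:integrability}) then yields $\tfrac1N\sum_{i\le N}\int_0^t b^i_s\,ds\to-\int_0^t\langle\nabla\phi,\gamma_s\rangle\,ds$. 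Adding the two limits and using the displayed decomposition of $\xi_t-\xi_0$,
\[
\xi_t-\xi_0=-\lim_{N\to\infty}\frac1N\sum_{i=1}^N(y^i_t-y^i_0)\qquad\mathbf P^{coup}_{h\hat\mu}\text{-a.s.}
\]
(along a subsequence realising the a.s.\ convergence, fixed once and for all under the single measure $\mathbf P^{coup}_{h\hat\mu}$). Since each $y^i_\cdot$ is a continuous functional of the environment path, so is the right-hand side, and its increment over $[a,b]$ depends only on $\{\gamma_s:a\le s\le b\}$; write it as $\Theta\big((\gamma_s)_{s\ge0}\big)=(\Theta_t)_{t\ge0}$.

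It remains to transport $\Theta$. By Theorem~\ref{thm:decomposition} the finite-dimensional distributions of $(\gamma_t)_{t\ge0}$ under $\mathbf P^{coup}_{h\hat\mu}$ are obtained by applying $T^{env}_{t,2}$ to functions of the configuration, and since the $\Gamma$-marginal of $h\hat\mu$ is $\mu$ it follows that the law of $(\gamma_t)_{t\ge0}$ under $\mathbf P^{coup}_{h\hat\mu}$ equals $\mathbf P^{env}_\mu$. Defining $(\eta_t)_{t\ge0}:=\Theta\big((\gamma_s)_{s\ge0}\big)$ as a $C([0,\infty);\R^d)$-valued random variable on $C([0,\infty);\ddot\Gamma)$ with the measure $\mathbf P^{env}_\mu$, property (i) is immediate from the construction of $\Theta$, while property (ii) holds because the joint law of $(\eta_t,\gamma_t)_{t\ge0}$ under $\mathbf P^{env}_\mu$ coincides with that of $\big(\Theta_t((\gamma_s)_{s\le t}),\gamma_t\big)_{t\ge0}$ under the law of $(\gamma_t)_{t\ge0}$ coming from $\mathbf P^{coup}_{h\hat\mu}$, which by the preceding a.s.\ identity is the joint law of $(\xi_t-\xi_0,\gamma_t)_{t\ge0}$ under $\mathbf P^{coup}_{h\hat\mu}$.

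The main obstacle is the middle step: converting the formal computation of \cite{DeM89}, which uses the stochastic differential equations \eqref{eqn:env}, into a statement relying only on the semimartingale decompositions delivered by the martingale problem. Concretely, one must build the continuous trajectories of individually tracked environment particles and their decompositions out of functions in $C_0^\infty(\R^d)\otimes\mathcal FC_b^\infty(\mathcal D,\Gamma)$ (handling non-collision, non-explosion and the patching of local charts, with the expected extra care for $d=1$ on $\ddot\Gamma$), and then control the boundary and tail contributions that appear in the limit $N\to\infty$; here the standing assumptions (SS), (LR), (I), (D$\mathrm L^p$), the Ruelle bound, and the convergence of empirical particle densities are exactly what is needed.
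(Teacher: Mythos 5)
The overall architecture of your argument matches the paper's: first identify the law of the environment marginal of $\mathbf P^{coup}_{h\hat\mu}$ with $\mathbf P^{env}_\mu$ (your transport step via Theorem \ref{thm:decomposition} is exactly Lemma \ref{lem:identification1}), then show that $\xi_b-\xi_a$ is $\mathbf P^{coup}_{h\hat\mu}$-a.s.\ measurable w.r.t.\ $\sigma(\gamma_s\,|\,a\le s\le b)$ and assemble a continuous version. The gap lies in how you realize the Brownian part of $\xi_t-\xi_0$ as a functional of the environment. Your middle step requires constructing individually tracked environment trajectories $(y^i_t)$ with semimartingale decompositions, deterministic covariations $d\langle\widetilde M^i,\widetilde M^j\rangle_t=2(\delta_{ij}+1)I\,dt$, and an enumeration by initial distance to the origin such that $\{y^i_s: i\le N\}$ fills a ball at \emph{later} times $s$. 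None of this is available from the solution concept at hand: the process solves only the martingale problem for functions in (the Friedrichs extension of the generator on) $C_0^\infty(\R^d)\otimes\mathcal FC_b^\infty(\mathcal D,\Gamma)$, and it is precisely not known that one can enumerate particles, follow them continuously, rule out entrance from infinity, or compute their individual brackets --- the test functions needed for such computations (coordinates of a labelled particle) are not in the form domain, and "patching local charts across stopping times" is exactly the unresolved passage from martingale solutions to weak solutions of \eqref{eqn:env} that the paper repeatedly flags. You acknowledge this as "the main obstacle" but leave it unproven, so the proof is incomplete at its crucial point; moreover, even granting labelled trajectories, the cancellation of $\tfrac1N\sum_{i\le N}b^i_s$ is argued from the geometry of the initial positions, whereas at time $s>0$ the initially-closest $N$ particles need not occupy an approximate ball, so the boundary-pair control would not follow from the stated estimates.

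The paper circumvents all of this by averaging over particles in \emph{fixed spatial cubes} rather than over labelled particles: it applies Theorem \ref{thm:martingallemma}(i) to $\mathfrak F(\xi,\gamma)=F_n^\delta(\gamma)+\xi_i$ with
\[
F_n^\delta:=\frac{\langle f_n^\delta,\cdot\rangle}{r_n n^d+\langle c_n^\delta,\cdot\rangle}\in\mathcal FC_b^\infty(\mathcal D,\Gamma),
\]
where $f_n^\delta(x)=x_i$ on $[-n,n]^d$, removes the increments at times when particles sit in the boundary layer by integrating against the indicator $H_n^\delta$ (stochastic integrals, then $\delta\to0$), and then lets $n\to\infty$: the martingale part vanishes by Lemma \ref{lem:aslemma} (growth of $\langle c_n^0,\gamma\rangle/(r_n n^d)$) and Burkholder--Davis--Gundy, while the drift error $\widetilde Y_i^n$ vanishes in $L^1$ by Lemma \ref{lem:Ylemma}, which is where the improved Ruelle bound and (D$\mathrm L^p$) control the straddling pairs for fixed cubes. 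This yields $\xi_{i,t}-\xi_{i,0}$ as an a.s.\ limit of explicit functionals of $(\gamma_s)_{s\le t}$ without ever labelling a particle. If you want to salvage your route, you would have to supply the (substantial, and in this generality open) construction of labelled particle dynamics from the martingale solution; otherwise the argument should be rewritten along the fixed-cube averaging just described.
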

\begin{proof}
This theorem is a consequence of Lemmas \ref{lem:identification1} and \ref{lem:identification2} below. More precisely, Lemma \ref{lem:identification2} gives for each $0\leq a<b$ an $\R^d$-valued random-variable $X_{[a,b]}$ on $C([0,\infty);\R^d\times\ddot\Gamma)$ which is a function of $(\gamma_s)_{a\leq s\leq b}$ and coincides $\mathbf P_{h\hat\mu}^{coup}$-a.s.~with $\xi_b-\xi_a$. This random variable $X_{[a,b]}$ can therefore be defined as a measurable function $X_{[a,b]}=X_{[a,b]}((\gamma_s)_{a\leq s\leq b})$ on $C([0,\infty);\ddot\Gamma)$, and setting $\tilde \eta_t:=X_{[0,t]}$, we find by Lemma \ref{lem:identification1} that for each $t$ the distribution of $(\eta_t,\gamma_t)$ under $\mathbf P_{\mu}^{env}$ coincides with the distribution of $(\xi_t-\xi_0,\gamma_t)$ under $\mathbf P_{h\hat\mu}^{coup}$. In particular $(\tilde \eta_t)_{t\geq 0}$ is $\mathbf P_\mu^{env}$-a.s.~continuous, so the desired family $(\eta_t)_{t\geq 0}$ can be chosen as a continuous version of $(\tilde \eta_t)_{t\geq 0}$.
\end{proof}

\begin{lemma}\label{lem:identification1}
Under the assumptions of Theorem \ref{thm:displacement}, the image law $\mathbf P_{h\hat\mu}^{coup,2}$ of $\mathbf P_{h\hat\mu}^{coup}$ under the projection $C([0,\infty);\R^d\times\ddot\Gamma)\ni (\xi_t,\gamma_t)_{t\geq 0}\mapsto (\gamma_t)_{t\geq 0}\in C([0,\infty);\ddot\Gamma)$ coincides with $\mathbf P_{\mu}^{env}$.
\end{lemma}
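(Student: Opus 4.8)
The plan is to show that $\mathbf P_{h\hat\mu}^{coup,2}$ and $\mathbf P_\mu^{env}$ have the same finite-dimensional distributions; since cylinder sets generate the Borel $\sigma$-field on $C([0,\infty);\ddot\Gamma)$, this gives the claim. Equivalently, I want to prove that for every $n\in\N$, all times $0\le t_1<\dots<t_n$ and all bounded $\mathcal B(\ddot\Gamma)$-measurable $F_1,\dots,F_n$ one has
\[
\mathbf E_{h\hat\mu}^{coup}\Big[\prod_{i=1}^n F_i(\mathbf X_{t_i}^{coup,2})\Big]=\mathbf E_\mu^{env}\Big[\prod_{i=1}^n F_i(\mathbf X_{t_i}^{env})\Big].
\]
The sole analytic ingredient is equation \eqref{eqn:decomposition} of Theorem \ref{thm:decomposition}. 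Transferred to the Markov transition operators $p_t^{coup}$, $p_t^{env}$ of $\mathbf M^{coup}$, $\mathbf M^{env}$ (which by proper association are $\hat\mu$- resp.\ $\mu$-versions of $(T_{t,\infty}^{coup})_{t\ge0}$, $(T_{t,2}^{env})_{t\ge0}$), it reads $p_t^{coup}(1\otimes F)=1\otimes p_t^{env}F$ $\hat\mu$-a.e.\ for $F\in L^\infty(\Gamma;\mu)$; here one also uses that both processes are conservative.

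Given this, I would run the usual iterated Markov computation. Conditioning the left-hand side successively on $\mathbf F_{t_{n-1}}^{coup},\mathbf F_{t_{n-2}}^{coup},\dots,\mathbf F_{t_1}^{coup}$, and reading each $F_i$ as the function $1\otimes F_i$ on $\R^d\times\Gamma$, turns it into
\[
\int_{\R^d\times\Gamma} p_{t_1}^{coup}\Big(F_1\cdot p_{t_2-t_1}^{coup}\big(F_2\cdots p_{t_n-t_{n-1}}^{coup}F_n\big)\Big)\,d(h\hat\mu).
\]
Now I peel the iterated semigroup off from the inside: by the displayed identity $p_{t_n-t_{n-1}}^{coup}(1\otimes F_n)=1\otimes p_{t_n-t_{n-1}}^{env}F_n$; multiplying by $1\otimes F_{n-1}$ keeps a bounded function of $\gamma$ only, so the identity applies again to produce $1\otimes p_{t_{n-1}-t_{n-2}}^{env}\big(F_{n-1}\,p_{t_n-t_{n-1}}^{env}F_n\big)$, and so forth. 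After $n$ steps the integrand is $1\otimes\Psi$ with
\[
\Psi:=p_{t_1}^{env}\Big(F_1\cdot p_{t_2-t_1}^{env}\big(F_2\cdots p_{t_n-t_{n-1}}^{env}F_n\big)\Big)\in L^\infty(\Gamma;\mu),
\]
and since $h\hat\mu=(h\,d\xi)\otimes\mu$ with $\int_{\R^d}h\,d\xi=1$ the integral collapses to $\int_\Gamma\Psi\,d\mu$. Running the same iterated Markov formula for $\mathbf M^{env}$ started in $\mu$ identifies $\int_\Gamma\Psi\,d\mu$ with the right-hand side above, which finishes the proof.

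I expect the only real care to be needed in the measure-theoretic bookkeeping, since \eqref{eqn:decomposition} holds only $\hat\mu$-a.e.\ while the iterated computation composes transition operators applied to functions defined merely up to $\hat\mu$- resp.\ $\mu$-null sets. This is unproblematic: the initial distributions $h\hat\mu$ and $\mu$ are absolutely continuous w.r.t.\ the symmetrizing measures $\hat\mu$ and $\mu$; by $\hat\mu$- resp.\ $\mu$-symmetry the kernels $p_t^{coup}((\xi,\gamma),\cdot)$, $p_t^{env}(\gamma,\cdot)$ are absolutely continuous w.r.t.\ $\hat\mu$ resp.\ $\mu$ for a.e.\ starting point, so replacing a function by another version never affects the integrals; and the statement that $p_t^{coup}\mathfrak G$ is a $\hat\mu$-version of $T_{t,\infty}^{coup}\mathfrak G$ for bounded measurable $\mathfrak G$ (note $1\otimes F\notin L^2(\R^d\times\Gamma;\hat\mu)$, so one cannot stay in the $L^2$-setting) follows by the routine truncation $1\otimes F=\lim_n(1_{B_n(0)}\otimes F)$ combined with dominated convergence for the kernels and weak-$*$ continuity of the $L^\infty$-semigroup. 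With these standard remarks in place the computation is rigorous, all the substance having already been provided by Theorem \ref{thm:decomposition}.
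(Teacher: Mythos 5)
Your proposal is correct and matches the paper's own argument: both compare finite-dimensional distributions by writing them as iterated semigroup expressions and peeling off $T^{coup}_t$ to $T^{env}_t$ via Theorem \ref{thm:decomposition}. The measure-theoretic caveats you add (proper associatedness for $1\otimes F\notin L^2$, absolute continuity of the law at positive times) are sound remarks on points the paper leaves implicit, but they do not alter the substance of the argument.
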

\begin{proof}
It suffices to compare the finite dimensional distributions of both laws. Let $F_1,\cdots,F_k: \ddot\Gamma\to \R$ be bounded $\mathcal B(\ddot\Gamma)$-measurable functions and let $0\leq t_1<\cdots t_k<\infty$, $k\in\N$. Then by Theorem \ref{thm:decomposition}
\begin{eqnarray*}
\lefteqn{\mathbf E_{h\hat\mu}^{coup,2}[F_1(\gamma_{t_1})\cdots F_k(\gamma_{t_k})]}\\
& &=\int_{\R^d\times\Gamma} T_{t_1}^{coup}\left((1\otimes F_1)\cdot T_{t_2-t_1}^{coup}\left(\cdots \left((1\otimes F_{k-1})\cdot T_{t_k-t_{k-1}}^{coup} (1\otimes F_k)\right)\cdots\right)\right)\,d(h\hat\mu)\\
& &=\int_{\Gamma} T_{t_1}^{env} \left(F_1\cdot T_{t_2-t_1}^{env}\left(\cdots \left(F_{k-1}\cdot T^{env}_{t_k-t_{k-1}}F_k\right)\cdots\right)\right)\,d\mu=\mathbf E_{\mu}^{env}[F_1(\gamma_{t_1})\cdots F_k(\gamma_{t_k})],
\end{eqnarray*}
which shows the assertion.
\end{proof}
The above lemma allows us to work only with the coupled process for constructing the displacement of the tagged particle from the environment process. For doing so we need some technical preparations: At first we note that the (D$\mbox{L}^p$)-assumption implies that
$$
q_n:=n^{-1}\left(1+\int_0^n \int_{\R^d\setminus [-r,r]^d} \vert \nabla\phi(x)\vert_2\,e^{-\phi(x)}\,dx\,dr\right)\to 0
$$
as $n\to\infty$. Thus, setting $r_n:=\sqrt{q_n}$, $n\in\N$, it follows that $r_n=q_n/r_n\to 0$ as $n\to\infty$. For $n\in\N$ set $c_n^0:=1_{[-n,n]^d}: \R^d\to\R$. Then $\langle c_n^0,\cdot\rangle: \Gamma\to\R$ counts the points in $[-n,n]^d$ of a given configuration.

\begin{lemma}\label{lem:aslemma}
Let $\mu\in \mathcal G_{ibp}^{gc}(\Phi_\phi,z\,\sigma_\phi)$. Then
\begin{equation}\label{eqn:wachstum}
\lim_{n\to\infty} \frac{\langle c_n^0,\gamma\rangle}{r_n n^d}=\infty\quad \mbox{for $\mu$-a.e.~$\gamma\in\Gamma$.}
\end{equation}
\end{lemma}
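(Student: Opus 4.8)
The plan is to reduce \eqref{eqn:wachstum} to a law-of-large-numbers-type statement for the configuration under $\mu$, exploiting the fact that $r_n n^d$ grows slower than $n^d$. First I would recall that $\mu\in\mathcal G_{ibp}^{gc}(\Phi_\phi,z\,\sigma_\phi)\subset\mathcal G_t^{gc}(\Phi_\phi,z\,\sigma_\phi)$, so by the discussion in Section~\ref{sec:conditions} the measure $\mu$ fulfills a Ruelle bound with respect to $z\sigma_\phi$; in particular its first two correlation functions are bounded, and hence there is a constant $c<\infty$ with $\mathbb E_\mu[\langle c_n^0,\cdot\rangle]\geq c^{-1}n^d$ and $\operatorname{Var}_\mu(\langle c_n^0,\cdot\rangle)\leq c\,n^d$ for all $n$ (the variance bound follows from the boundedness of the first and second correlation functions, writing $\langle c_n^0,\cdot\rangle$ as a sum over the configuration and computing the second factorial moment). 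Actually, for the lower bound on the mean one can alternatively use superstability/translation-invariance-type estimates, but the Ruelle bound on the one-point function combined with \eqref{eqn:ibp2} (or directly with the defining DLR equations) already gives that the mean density is bounded below by a positive constant; the key input is simply that $\mu$ is not supported on configurations that are eventually empty, which is guaranteed here because $\mathcal G_{ibp}^{gc}$ measures are ``translation invariant except for the nonconstant intensity.''

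Next I would run a second-moment (Chebyshev) argument along a sparse subsequence. Fix $\delta>0$; by Chebyshev,
$$
\mu\!\left(\left|\frac{\langle c_n^0,\cdot\rangle}{n^d}-\frac{\mathbb E_\mu[\langle c_n^0,\cdot\rangle]}{n^d}\right|>\delta\right)\leq \frac{\operatorname{Var}_\mu(\langle c_n^0,\cdot\rangle)}{\delta^2 n^{2d}}\leq \frac{c}{\delta^2 n^{d}}.
$$
This is summable in $n$ as soon as $d\geq 2$, so Borel--Cantelli gives $\liminf_{n\to\infty} \langle c_n^0,\cdot\rangle/n^d\geq c^{-1}$ for $\mu$-a.e.\ $\gamma$; for $d=1$ one passes to a subsequence $n_k=k^2$ (over which the bound is summable) and interpolates using monotonicity of $n\mapsto\langle c_n^0,\gamma\rangle$ together with the fact that $n_{k+1}^d/n_k^d\to 1$, recovering $\liminf_n \langle c_n^0,\cdot\rangle/n^d>0$ $\mu$-a.s.\ in every dimension. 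Finally, since $r_n\to 0$ by the construction preceding the lemma (this is exactly where the (D$\mathrm L^p$) integrability of $\nabla\phi$ with respect to $e^{-\phi}dx$ enters, forcing $q_n\to 0$ and hence $r_n=\sqrt{q_n}\to 0$), we get
$$
\frac{\langle c_n^0,\gamma\rangle}{r_n n^d}=\frac{1}{r_n}\cdot\frac{\langle c_n^0,\gamma\rangle}{n^d}\geq \frac{1}{r_n}\cdot\frac{c^{-1}}{2}\longrightarrow\infty
$$
for $\mu$-a.e.\ $\gamma$, which is \eqref{eqn:wachstum}.

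The main obstacle is the lower bound on the mean particle count, i.e.\ ensuring that $\mathbb E_\mu[\langle c_n^0,\cdot\rangle]\gtrsim n^d$ rather than merely $o(n^d)$: a priori a Gibbs measure could have very low or even spatially decaying density, and the Ruelle bound only gives an upper bound on correlation functions. The resolution is to use that elements of $\mathcal G_{ibp}^{gc}(\Phi_\phi,z\,\sigma_\phi)$ correspond via Lemma~\ref{lem:measures} to translation-invariant Gibbs measures for intensity $z\,dx$ (up to the explicit density change $Z_{\mu_0}^{-1}e^{-\langle\phi,\cdot\rangle}$), and translation invariance pins the one-point correlation function to a positive constant; alternatively one invokes superstability (SS), which lower-bounds the density for any tempered superstable Gibbs measure. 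Either route is standard but must be invoked carefully — I expect the authors to cite the Ruelle-bound/correlation-function machinery (e.g.\ \cite{KK02}, \cite{KK03}, \cite{Rue70}) for the second-moment bound and the translation-invariance structure from Lemma~\ref{lem:measures}/Remark~\ref{rem:measuresmod} for the lower bound on the mean.
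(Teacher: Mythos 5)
There is a genuine gap at the heart of your argument: the variance bound $\operatorname{Var}_\mu(\langle c_n^0,\cdot\rangle)\leq c\,n^d$ does \emph{not} follow from boundedness of the first and second correlation functions. Writing $\Lambda_n:=[-n,n]^d$, one has $\operatorname{Var}(\langle 1_{\Lambda_n},\cdot\rangle)=\int_{\Lambda_n}k_1\,dx+\int_{\Lambda_n^2}\bigl(k_2(x,y)-k_1(x)k_1(y)\bigr)\,dx\,dy$, and the Ruelle bound (even the improved one) only gives an upper bound of constant order on $k_2$, hence only $\operatorname{Var}\lesssim n^{2d}$. To get the $O(n^d)$ bound you would need the \emph{truncated} pair correlation $k_2-k_1\otimes k_1$ to be integrable, i.e.\ a decay-of-correlations property; this is available in the low-activity/cluster-expansion regime but is precisely what cannot be assumed here, since the lemma is meant to hold for arbitrary pure phases (including possible phase-transition points). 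With only $\operatorname{Var}=O(n^{2d})$ (or even $o(n^{2d})$, which extremality does give qualitatively), Chebyshev plus Borel--Cantelli yields nothing, so your route to the a.s.\ statement breaks down. A secondary, fixable point: your moment computations should be carried out under the translation-invariant measure $\mu_0$ rather than under $\mu$ itself (which is not translation invariant), the a.s.\ conclusion then transferring by equivalence of $\mu$ and $\mu_0$.

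The paper avoids any quantitative second-moment control. It reduces, via Lemma \ref{lem:measures} and Remark \ref{rem:measuresmod} (and equivalence of $\mu$ and $\mu_0$), to $\mu_0\in\textnormal{ex}\,\mathcal G_{\theta}^{gc}(\Phi_\phi,z\,dx)$, and then applies the multidimensional pointwise ergodic theorem (\cite[Theorem 4]{Pit42}) to $f=\langle 1_{[0,1]^d},\cdot\rangle$: the spatial averages $\langle 1_{[-n,n]^d},\gamma\rangle/(2n)^d$ converge a.s.\ to the conditional expectation of $f$ on the $\sigma$-field of a.s.\ translation-invariant events, which is the constant $\rho>0$ by extremality. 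Your final step (combining a positive a.s.\ lower density with $r_n\to 0$) coincides with the paper's, and your instinct to use the correspondence with translation-invariant measures for the mean is correct; but the a.s.\ lower bound on the density must come from ergodicity/extremality, not from a Chebyshev argument whose variance input is unavailable in this generality.
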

\begin{proof}
By Remark \ref{rem:measuresmod} and Lemma \ref{lem:measures} we only need to show the assertion for measures $\mu\in \textnormal{ex}\,\mathcal G_{\theta}^{gc}(\Phi_\phi,z\,dx)$. For any such measure we have $\int_{\Gamma} \langle 1_{[0,1]^d},\cdot\rangle\,d\mu=:\rho>0$. Using the multidimensional ergodic theorem (one can e.g.~directly apply \cite[Theorem 4]{Pit42}), one finds that
$$
f_n(\gamma):= \frac{1}{(2n)^d}\langle 1_{[-n,n]^d},\gamma\rangle\to \hat f(\gamma)
$$
as $n\to\infty$ where $\hat f$ is the conditional expectation of $f=\langle 1_{[0,1]^d},\cdot\rangle$ w.r.t.~the sub-$\sigma$-field of $\mathcal B(\Gamma)$ of the sets which are $\mu$-a.e.~invariant w.r.t.~translations in $\R^d$. By triviality of $\mu$ on this $\sigma$-field we obtain $\hat f(\gamma)=\rho$ $\mu$-a.s., and \eqref{eqn:wachstum} follows.
\end{proof}

\begin{lemma}\label{lem:Ylemma}
For $1\leq i\leq d$ and $n\in\N$ we define a function $\widetilde Y^n_i: \Gamma\to\R$ by
\begin{multline*}
\widetilde Y^n_i(\gamma):=-\frac{1}{r_n n^d +\langle c_n^0,\gamma\rangle}\Bigg(\langle \partial_i\phi\, c_n^0,\gamma\rangle+\langle \partial_i\phi,\gamma\rangle\langle c_n^0,\gamma\rangle\\
+\sum_{\{x,y\}\subset\gamma}\partial_i \phi(x-y) (c_n^0(x)-c_n^0(y))\Bigg)+\langle \partial_i\phi,\gamma\rangle,
\end{multline*}
for $\gamma\in\Gamma$ such that all sums converge absolutely (which is $\mu$-a.e.~true). Then $\widetilde Y_i^n\to 0$ in $L^1(\Gamma;\mu)$ as $n\to\infty$.
\end{lemma}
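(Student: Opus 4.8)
The plan is to split $\widetilde Y^n_i$ into three pieces, each tending to $0$ in $L^1(\Gamma;\mu)$. Abbreviate $N_n:=\langle c_n^0,\cdot\rangle$, $A_n:=\langle\partial_i\phi\,c_n^0,\cdot\rangle$, $S:=\langle\partial_i\phi,\cdot\rangle$ and, for $\gamma\in\Gamma$ such that all sums converge absolutely, $C_n(\gamma):=\sum_{\{x,y\}\subset\gamma}\partial_i\phi(x-y)\,(c_n^0(x)-c_n^0(y))$. By the argument in the proof of Lemma~\ref{lem:integrability} (i.e.\ \cite[Theorem~4.1]{KK02} together with the Ruelle bound for $\mu$) one has $\langle\vert\partial_i\phi\vert,\cdot\rangle\in L^1(\Gamma;\mu)$, so $\vert A_n\vert\le\langle\vert\partial_i\phi\vert,\cdot\rangle$ and $\vert S\vert\in L^1(\Gamma;\mu)$; absolute convergence of the pair sum defining $C_n$ will follow from the bound on $\int_\Gamma D_n\,d\mu$ obtained below. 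Using $S=(S\,r_nn^d+S\,N_n)/(r_nn^d+N_n)$ one rewrites
$$
\widetilde Y^n_i=-\frac{A_n}{r_nn^d+N_n}+\frac{S\,r_nn^d}{r_nn^d+N_n}-\frac{C_n}{r_nn^d+N_n}.
$$

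For the first two terms I would use dominated convergence. Since $q_nn\ge1$ by the definition of $q_n$, one has $r_nn^d=\sqrt{q_n}\,n^d\ge n^{d-1/2}\to\infty$; hence the first term is $\mu$-a.e.\ bounded by $\langle\vert\partial_i\phi\vert,\cdot\rangle\in L^1(\Gamma;\mu)$ and tends to $0$ $\mu$-a.e., whence to $0$ in $L^1(\Gamma;\mu)$. The second term is bounded by $\vert S\vert\in L^1(\Gamma;\mu)$ and tends to $0$ $\mu$-a.e.\ because $r_nn^d/(r_nn^d+N_n)=(1+N_n/(r_nn^d))^{-1}\to0$ $\mu$-a.e.\ by Lemma~\ref{lem:aslemma}; dominated convergence applies again.

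The third term is the heart of the matter. Since $c_n^0(x)-c_n^0(y)$ is nonzero only if exactly one of $x,y$ lies in $[-n,n]^d$ and $\partial_i\phi$ is odd, one has $C_n(\gamma)=\sum_{x\in\gamma\cap[-n,n]^d}\sum_{y\in\gamma\setminus[-n,n]^d}\partial_i\phi(x-y)$, hence $\vert C_n\vert\le D_n$, where $D_n(\gamma):=\sum_{x\in\gamma\cap[-n,n]^d}\sum_{y\in\gamma\setminus[-n,n]^d}\vert\nabla\phi(x-y)\vert_2$. By the same type of estimate used in Section~\ref{sec:conditions} (the improved Ruelle bound for $\mu$, \cite[Theorem~4.1]{KK02}, and boundedness of $e^{-\phi}$, a consequence of (SS)) there is $C<\infty$, independent of $n$, with
$$
\int_\Gamma D_n\,d\mu\le C\int_{[-n,n]^d}\int_{\R^d\setminus[-n,n]^d}\vert\nabla\phi(x-y)\vert_2\,e^{-\phi(x-y)}\,dy\,dx .
$$
Substituting $u=x-y$ and applying Tonelli, using the identity $\int_{[-n,n]^d}1_{\R^d\setminus[-n,n]^d}(x-u)\,dx=(2n)^d-\prod_{j=1}^d(2n-\vert u_j\vert)^+$ and the elementary bound $(2n)^d-\prod_{j=1}^d(2n-\vert u_j\vert)^+\le 2^dd\,n^{d-1}\int_0^n 1_{\R^d\setminus[-r,r]^d}(u)\,dr$, the right-hand side is at most $2^dd\,C\,n^{d-1}\int_0^n\int_{\R^d\setminus[-r,r]^d}\vert\nabla\phi(x)\vert_2 e^{-\phi(x)}\,dx\,dr\le 2^dd\,C\,q_nn^d$, by the very definition of $q_n$. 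Consequently
$$
\Bigl\Vert\frac{C_n}{r_nn^d+N_n}\Bigr\Vert_{L^1(\Gamma;\mu)}\le\frac{\Vert C_n\Vert_{L^1(\Gamma;\mu)}}{r_nn^d}\le\frac{2^dd\,C\,q_nn^d}{\sqrt{q_n}\,n^d}=2^dd\,C\,r_n\longrightarrow0,
$$
since $r_n\to0$; combining the three bounds gives $\widetilde Y^n_i\to0$ in $L^1(\Gamma;\mu)$.

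I expect the only genuinely nontrivial step to be this chain of estimates for $\int_\Gamma D_n\,d\mu$: one has to retain the factor $e^{-\phi(x-y)}$ furnished by the improved Ruelle bound (without it the $u$-integral diverges near the origin, where $\nabla\phi$ has a non-integrable singularity by (D$\mbox{L}^p$)), and one has to establish the geometric inequality $(2n)^d-\prod_j(2n-\vert u_j\vert)^+\le 2^dd\,n^{d-1}\int_0^n 1_{\R^d\setminus[-r,r]^d}(u)\,dr$, which is precisely what makes the bound on $\int_\Gamma D_n\,d\mu$ match the quantity $q_nn^d$ built into the definitions of $q_n$ and $r_n$. Everything else is routine bookkeeping and dominated convergence.
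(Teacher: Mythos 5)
Your proof is correct and follows essentially the same route as the paper's: the same three-term decomposition, dominated convergence via $\langle\vert\partial_i\phi\vert,\cdot\rangle\in L^1$ and Lemma~\ref{lem:aslemma} for the first two terms, and for the boundary-pair sum the same use of the improved Ruelle bound together with the change of variables $u=x-y$ and the geometric estimate $(2n)^d-\prod_j(2n-\vert u_j\vert)^+\le 2^d d\,n^{d-1}\int_0^n 1_{\R^d\setminus[-r,r]^d}(u)\,dr$, which matches the quantity $q_n n^d$ built into the definitions of $q_n$ and $r_n$. (The parenthetical mention of boundedness of $e^{-\phi}$ is superfluous here; the improved Ruelle bound already furnishes the factor $e^{-\phi(x-y)}$, so no separate boundedness is invoked.)
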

\begin{proof}
We first note that for $n\in\N$ it holds $\mu$-a.e.
\begin{equation}\label{eqn:urgh}
\left\vert \frac{1}{r_n n^d+\langle c_n^0,\cdot\rangle}\langle \partial_i\phi c_n^0,\cdot\rangle\right\vert\leq \frac{\langle \vert\partial_i\phi\vert,\cdot\rangle}{1+\langle c_n^0,\cdot\rangle}.
\end{equation}
Since the numerator on the right-hand side is integrable and by Lemma \ref{lem:aslemma} the denominator converges $\mu$-a.s.~to $\infty$ as $n\to\infty$, we obtain convergence of \eqref{eqn:urgh} to $0$ in $L^1(\Gamma;\mu)$. Moreover, Lemma \ref{lem:aslemma} and $\mu$-integrability of $\langle \partial_i\phi,\cdot\rangle$ imply
$$
\left(1-\frac{\langle c_n^0,\cdot\rangle}{r_n n^d+\langle c_n^0,\cdot\rangle}\right)\langle \partial_i \phi,\cdot\rangle\to 0
$$
in $L^1(\Gamma;\mu)$ as $n\to\infty$, so we are left to prove convergence to $0$ of
\begin{multline}\label{eqn:lastYestimate}
\frac{1}{r_n n^d+\langle c_n^0,\cdot\rangle} \left\vert \sum_{\{x,y\}\subset\cdot} \partial_i \phi(x-y) (1_{[-n,n]^d}(x)-1_{[-n,n]^d}(y))\right\vert\\
\leq \frac{1}{r_n n^d} \sum_{\{x,y\}\subset \cdot} \left(1_{[-n,n]^d}(x)1_{\R^d\setminus [-n,n]^d}(y)+1_{[-n,n]^d}(y)1_{\R^d\setminus [-n,n]^d}(x)\right)\vert \partial_i \phi(x-y)\vert.
\end{multline}
Since $\mu$ fulfills an improved Ruelle bound (see Section \ref{sec:conditions}), there exists a constant $C<\infty$ such that the $L^1(\Gamma;\mu)$-norm of the right-hand side can be estimated by
\begin{eqnarray*}
\lefteqn{C\int_{\R^d\times\R^d} 1_{[-n,n]^d}(x)1_{\R^d\setminus [-n,n]^d}(y) \,\vert \partial_i \phi(x-y)\vert e^{-\phi(x-y)}\,dx\,dy}\\
& &\leq C\int_{[-n,n]^d}\int_{\R^d\setminus [-n,n]^d} \vert \partial_i\phi(x-y)\vert e^{-\phi(x-y)}\,dx\,dy\\
& &\leq 2dC (2n)^{d-1}\int_0^n \int_{\R^d\setminus [-r,r]^d}\vert \partial_i\phi(y)\vert e^{-\phi(y)}\,dy\,dr.
\end{eqnarray*}
(Here we again used \cite[Theorem 4.1]{KK02}.) It follows that the right-hand side of \eqref{eqn:lastYestimate} converges to $0$ in $L^1(\Gamma;\mu)$. This completes the proof.
\end{proof}

\begin{lemma}\label{lem:identification2}
Under the assumptions of Theorem \ref{thm:displacement}, for $0\leq a<b$, $\xi_b-\xi_a$ is $\mathbf P^{coup}_{h\hat\mu}$-a.s.~$\sigma(\{\gamma_s,|,a\leq s\leq b\})$-measurable (i.e.~coincides except on a null set with a $\sigma(\{\gamma_s\,|\,a\leq s\leq b\})$-measurable random variable).
\end{lemma}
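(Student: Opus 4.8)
The plan is to show that everything in the tagged particle's displacement except the noise of its own Brownian motion is already contained in the drift of the environment, and that the noise itself is recovered from the environment path as (minus) the limit of the centre of mass of the environment particles in a large box — the device used in \cite{DeM89}. Throughout I would use two facts: by Lemma \ref{lem:identification1} the $\gamma$-marginal of $\mathbf P^{coup}_{h\hat\mu}$ is $\mathbf P^{env}_\mu$, so $\gamma_s\sim\mu$ for every $s\ge 0$ (stationarity), and $h\hat\mu\ll\hat\mu$ charges no $\mathcal E^{\R^d\times\Gamma,\hat\mu}_{coup}$-exceptional set, so the quasi-everywhere statements of Theorem \ref{thm:martingallemma} hold $\mathbf P^{coup}_{h\hat\mu}$-a.s. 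Fix a coordinate $1\le i\le d$ and set $W^{(i)}_t:=\xi^{(i)}_t-\xi^{(i)}_0-\int_0^t\langle\partial_i\phi,\gamma_s\rangle\,ds$. By Theorem \ref{thm:martingallemma}(ii), $(W^{(i)})_{1\le i\le d}$ is under $\mathbf P^{coup}_{h\hat\mu}$ equal to $\sqrt2$ times a $d$-dimensional Brownian motion, and $\xi^{(i)}_b-\xi^{(i)}_a=\int_a^b\langle\partial_i\phi,\gamma_s\rangle\,ds+(W^{(i)}_b-W^{(i)}_a)$; the integral term being $\sigma(\{\gamma_s\mid a\le s\le b\})$-measurable (path-continuity plus Fubini), it suffices to prove that each $W^{(i)}_b-W^{(i)}_a$ coincides $\mathbf P^{coup}_{h\hat\mu}$-a.s.\ with a $\sigma(\{\gamma_s\mid a\le s\le b\})$-measurable random variable.

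For $n\in\N$ I would pick $\rho_n\in C_0^\infty(\R^d)$ agreeing with $x\mapsto x_i$ on $[-n,n]^d$ and cut off smoothly just outside, a smoothed indicator $c_n^0\in C_0^\infty(\R^d)$ of $[-n,n]^d$, and $r_n=\sqrt{q_n}$ as in the discussion preceding Lemma \ref{lem:aslemma} (so $r_n\to0$ while $r_n n^d\to\infty$), and let $F_n\in\mathcal FC_b^\infty(\mathcal D,\Gamma)$ be obtained from $\langle\rho_n,\gamma\rangle/(r_n n^d+\langle c_n^0,\gamma\rangle)$ by composing with a smooth, bounded-with-bounded-derivatives modification of $(a,b)\mapsto a/(r_n n^d+b)$ that agrees with it on the range of $(\langle\rho_n,\cdot\rangle,\langle c_n^0,\cdot\rangle)$ away from an event of vanishing probability. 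Applying Theorem \ref{thm:martingallemma}(i) to $\mathfrak F_n(\xi,\gamma):=F_n(\gamma)+(e_i,\xi)_{\R^d}$ with $c=e_i$, and using that $L^{\R^d\times\Gamma,\hat\mu}_{coup}$ acts as $L^{\Gamma,\mu}_{env}$ on functions of $\gamma$ alone, one obtains
\[
M^{\mathfrak F_n}_t=N^{F_n}_t+W^{(i)}_t,\qquad N^{F_n}_t:=F_n(\gamma_t)-F_n(\gamma_0)-\int_0^t L^{\Gamma,\mu}_{env}F_n(\gamma_s)\,ds,
\]
where $N^{F_n}_t$ is a $\sigma(\{\gamma_s\mid 0\le s\le t\})$-measurable functional of the environment path and $\langle M^{\mathfrak F_n}\rangle_t=2\int_0^t(\nabla^\Gamma F_n,\nabla^\Gamma F_n)_{T_{\gamma_s}\Gamma}+|\nabla_\gamma^\Gamma F_n-e_i|^2\,ds$; by Remark \ref{rem:martingale} (we work under the mixture $\mathbf P^{coup}_{h\hat\mu}$, and the relevant integrals are finite by the Ruelle bound) $M^{\mathfrak F_n}$ is a genuine $L^2$-martingale. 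Since $W^{(i)}_b-W^{(i)}_a=(M^{\mathfrak F_n}_b-M^{\mathfrak F_n}_a)-(N^{F_n}_b-N^{F_n}_a)$ and $N^{F_n}_b-N^{F_n}_a$ is $\sigma(\{\gamma_s\mid a\le s\le b\})$-measurable, it is enough to show $M^{\mathfrak F_n}_b-M^{\mathfrak F_n}_a\to0$ in $L^1(\mathbf P^{coup}_{h\hat\mu})$: then $W^{(i)}_b-W^{(i)}_a$ is an $L^1$-limit of $\sigma(\{\gamma_s\mid a\le s\le b\})$-measurable variables, hence a.s.\ of the desired form.

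By stationarity and the quadratic-variation formula, $\mathbf E^{coup}_{h\hat\mu}[(M^{\mathfrak F_n}_b-M^{\mathfrak F_n}_a)^2]=2(b-a)\int_\Gamma\big((\nabla^\Gamma F_n,\nabla^\Gamma F_n)_{T_\gamma\Gamma}+|\nabla_\gamma^\Gamma F_n-e_i|^2\big)\,d\mu$, so the whole matter reduces to proving $\nabla^\Gamma F_n\to0$ in $L^2(\Gamma\to T_\gamma\Gamma;\mu)$ and $\nabla_\gamma^\Gamma F_n\to e_i$ in $L^2(\Gamma\to\R^d;\mu)$. Both follow by expanding $\nabla^\Gamma$ and $\nabla_\gamma^\Gamma$ of $F_n$ with the quotient rule: the leading term of $\nabla_\gamma^\Gamma F_n$ is $e_i\,\langle c_n^0,\gamma\rangle/(r_n n^d+\langle c_n^0,\gamma\rangle)$, which tends to $e_i$ $\mu$-a.s.\ by Lemma \ref{lem:aslemma}, while every remaining ``surface'' term (those carrying $\nabla c_n^0$, $\Delta\rho_n$, or the excess of $\nabla\rho_n$ over $e_i c_n^0$) is estimated, via translation invariance of $\mu$ and the improved Ruelle bound (cf.\ \cite[Theorem 4.1]{KK02}), by exactly the type of surface-to-volume quantities shown to vanish in the proof of Lemma \ref{lem:Ylemma} — the choice $r_n=\sqrt{q_n}$ being what makes the factors $(2n)^{d-1}/(r_n n^d)$ times the tail integrals of $|\nabla\phi|e^{-\phi}$ tend to $0$. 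Dominated convergence (with integrable dominating functions furnished by the Ruelle bound) closes both limits, so $M^{\mathfrak F_n}_b-M^{\mathfrak F_n}_a\to0$ in $L^2\subset L^1$, and the reduction above yields the lemma.

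The main obstacle is precisely this last step: because the normalisation $r_n n^d+\langle c_n^0,\gamma\rangle$ is random and varies along the path, $L^{\Gamma,\mu}_{env}F_n$ and the gradients $\nabla^\Gamma F_n$, $\nabla_\gamma^\Gamma F_n$ pick up a whole family of quotient-, boundary- and second-order correction terms, and one must show that every one of them is negligible in $L^2(\mu)$; this is where the ergodic-type growth of $\langle c_n^0,\cdot\rangle$ (Lemma \ref{lem:aslemma}), the Ruelle-bound bookkeeping behind Lemma \ref{lem:Ylemma}, and the carefully tuned radii $r_n$ (both $r_n\to0$ and $r_n n^d\to\infty$) are all essential. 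A secondary nuisance is keeping $F_n$ genuinely inside $\mathcal FC_b^\infty(\mathcal D,\Gamma)$, which forces the truncation of the outer nonlinearity together with a short argument that this truncation is triggered only on events whose probability tends to $0$ as $n\to\infty$.
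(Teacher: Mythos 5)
Your overall scheme (reduce to showing that the martingale differences $M^{\mathfrak F_n}_b-M^{\mathfrak F_n}_a$ vanish, so that $W^{(i)}_b-W^{(i)}_a$ becomes an $L^1$-limit of environment functionals) is in the spirit of the paper, but its central analytic claim is false: $\nabla_\gamma^\Gamma F_n$ does \emph{not} converge to $e_i$ in $L^2(\Gamma;\mu)$. The ``excess of $\nabla\rho_n$ over $e_i c_n^0$'' is not a small surface term. Since $\rho_n$ is compactly supported, $\int_{\R^d}\partial_i\rho_n\,dx=0$, so the cutoff region, where $\partial_i\rho_n\sim -n/\delta$ on the two faces perpendicular to $e_i$, contributes to $\langle\partial_i\rho_n,\gamma\rangle$ an amount which for a typical configuration is $\approx-\rho(2n)^d$, i.e.\ of exactly the same order as the bulk term $\approx+\rho(2n)^d$, and it cancels it. (Heuristically: $F_n$ is a windowed centre-of-mass with a \emph{fixed} window, hence essentially invariant under uniform translations of a homogeneous configuration, so $\nabla_\gamma^\Gamma F_n\to 0$ rather than $e_i$.) Consequently $\vert\nabla_\gamma^\Gamma F_n-e_i\vert^2\to 1$ on typical configurations, the stationarity identity gives $\mathbf E^{coup}_{h\hat\mu}[(M^{\mathfrak F_n}_b-M^{\mathfrak F_n}_a)^2]\geq 2\int_\Gamma\vert\nabla_\gamma^\Gamma F_n-e_i\vert^2\,d\mu\,(b-a)$ bounded away from $0$, and your reduction collapses. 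The comparison with Lemma \ref{lem:Ylemma} is also misleading: the surface terms there carry the factor $\vert\nabla\phi\vert e^{-\phi}$ whose tail integrals are small (this is what $q_n$ and the choice $r_n=\sqrt{q_n}$ encode), whereas the cutoff-gradient terms carry no smallness at all; no choice of cutoff width ($\delta$ fixed, $\delta_n\to\infty$, etc.) removes the obstruction, because the total ``descent'' of $\rho_n$ across its support is always $\sim n\cdot n^{d-1}$.

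This is precisely the difficulty the paper's proof is built to avoid: instead of estimating the cutoff terms in $L^2(\mu)$, it multiplies by the indicator $H_n^\delta$ that no particle lies in the annulus $A_n^\delta$ and works with the stochastic integrals $\widetilde F^{n,\delta}_t=\int_0^t H_n^\delta(\gamma_s)\,dF_n^\delta(\gamma_s)$ and $\widetilde M^{n,\delta}_t=\int_0^t H_n^\delta(\gamma_s)\,dM^{n,\delta}_s$; on $\{H_n^\delta=1\}$ all terms involving $\nabla f_n^\delta-e_i c_n^0$, $\Delta f_n^\delta$, $\nabla c_n^\delta$ vanish \emph{identically}, so the drift reduces to $\widetilde Y^n_i$ and the quadratic variation to the expression involving only $\langle c_n^0,\cdot\rangle$, which does tend to $0$ by Lemma \ref{lem:aslemma}. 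One then needs the double limit (first $\delta\to 0$, so that the ignored increments disappear because $\mu$ charges no configuration touching $\partial[-n,n]^d$; then $n\to\infty$), and finally an extra argument — Riemann--Stieltjes approximation of the stochastic integrals — to see that $\widetilde F^{n,\delta}_b-\widetilde F^{n,\delta}_a$ is $\sigma(\gamma_s\,\vert\,a\leq s\leq b)$-measurable, a step your scheme would have avoided but which cannot be dispensed with once the indicator device is used. A secondary (fixable) point: as you note, the quotient defining $F_n$ is unbounded on $\Gamma$, so strictly one must modify the outer function to stay in $\mathcal FC_b^\infty(\mathcal D,\Gamma)$; but the fatal problem is the order-one cutoff term described above, not this truncation.
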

\begin{proof}
In \cite[p.~849]{DeM89} it is proposed to obtain the stochastic part of the uniform motion of the environment from a solution of \eqref{eqn:env} by taking the limit $\lim_{N\to\infty}\frac{1}{N}\sum_{i=1}^N (B_t^i-B_t^0)$ (which together with the fact that the drift part is clearly measurable w.r.t.~the environment shows the desired adaptedness/measurability property). We emphasize that this idea cannot be applied directly in the present setting, since the dynamics is not obtained as $(\xi_t)_{t\geq 0}$, $(y^i_t)_{t\geq 0}$ fulfilling \eqref{eqn:env}, but as a solution of the martingale problem for the generator corresponding formally to the equations \eqref{eqn:tag}, \eqref{eqn:env}. However, the general idea, namely obtaining the uniform motion of the environment by averaging (namely over the particles contained in subsets of $\R^d$) is applicable in this setting, but leads to the following proof, which is a bit more involved than one might expect. 

Let $1\leq i\leq d$. For $n\in\N$ and $\delta>0$ choose functions $f_n^\delta, c_n^\delta\in C_0^\infty(\R^d)$ such that $f_n^\delta(x)=x_i$ and $c_n^\delta(x)=1$ for $x=(x_1,\cdots,x_d)\in [-n,n]^d$, $f_n^\delta(x)=c_n^\delta(x)=0$ for $x\in\R^d\setminus [-n-\delta,n+\delta]^d$ and set
$$
F_n^\delta:=\frac{\langle f_n^\delta,\cdot\rangle}{r_n n^d+\langle c_n^\delta,\cdot\rangle}.
$$
Then $F_n^\delta\in\mathcal FC_b^\infty(\mathcal D,\Gamma)$. It follows from Theorem \ref{thm:martingallemma}(i) that
$$
M_t^{n,\delta}:=F_n^\delta(\gamma_t)-F_n^\delta(\gamma_0)+\xi_{i,t}-\xi_{i,0}-\int_0^t L_{env}^{\Gamma,\mu}F_n^\delta(\gamma_s)+\langle \partial_i\phi,\gamma_s\rangle\,ds,
$$
$t\geq 0$, defines a continuous local martingale under $\mathbf P_{h\hat\mu}^{coup}$ with quadratic variation process given by
$$
\langle M^{n,\delta}\rangle_t=2\int_0^t (\nabla^\Gamma F_n^\delta(\gamma_s),\nabla^\Gamma F_n^\delta(\gamma_s))_{T_{\gamma_s}\Gamma}+\left(\nabla_\gamma^\Gamma F_n^\delta(\gamma_s)-e_i,\nabla_\gamma^\Gamma F_n^\delta(\gamma_s)-e_i\right)_{\R^d}\,ds,
$$
$t\geq 0$, where $e_i$ denotes the $i$-th unit vector in $\R^d$. Since by Lemma \ref{lem:identification1} and invariance of $\mu$ w.r.t.~$\mathbf M^{env}$ it holds
$$
\mathbf E_{h\hat\mu}^{coup}\left[\int_0^\infty e^{-t}Z(\gamma_s)\,ds\right]=\mathbf E_{\mu}^{env}\left[\int_0^\infty e^{-t}Z(\gamma_s)\,ds\right]\leq \int_0^\infty e^{-t}\Vert T_{t,1}^{env}Z\Vert_{L^1(\Gamma;\mu)}\,ds<\infty
$$ 
for any $Z\in L^1(\Gamma;\mu)$, we may replace the ``local'' in the previous statement by square integrable, see also Remark \ref{rem:martingale}. We now ignore the increments of $(F_n^\delta(\gamma_t))_{t\geq 0}$ at times $t\geq 0$ where $\gamma_t\cap A_n^\delta\neq \emptyset$ with $A_n^\delta:=[-n-\delta,n+\delta]^d\setminus (-n,n)^d$: Let $H_n^\delta: \Gamma\to \{0,1\}$ be the indicator function of the set $\{\gamma\in\Gamma\,|\,\gamma\cap A_n^\delta=\emptyset\}$ and set
$$
\widetilde F_t^{n,\delta}:=\int_0^t H_n^\delta(\gamma_s)dF_n^\delta(\gamma_s).
$$
Then
\begin{align}\label{eqn:themartingale}
\widetilde M_t^{n,\delta}&:=\int_0^t H_n^\delta(\gamma_s)\,dM^{n,\delta}_s\nonumber\\
&=\widetilde F_t^{n,\delta}+\int_0^t H_n^\delta(\gamma_s)\,d\xi_{i,s}-\int_0^t H_n^\delta(\gamma_s)\left(L_{env}^{\Gamma,\mu}F_n^\delta(\gamma_s)+\langle \partial_i\phi,\gamma_s\rangle\right)ds\nonumber\\
&=\widetilde F_t^{n,\delta}+\int_0^t H_n^\delta(\gamma_s)\,d\xi_{i,s}-\int_0^t H_n^\delta(\gamma_s)\widetilde Y_i^n(\gamma_s)\,ds,\quad t\geq 0,
\end{align}
defines a martingale under $\mathbf P_{h\hat\mu}^{coup}$ with quadratic variation process given by
\begin{eqnarray*}
\lefteqn{\langle \widetilde M^{n,\delta}\rangle_t=\int_0^t (H_n^\delta(\gamma_s))^2 d\langle M^{n,\delta}\rangle_s}\\
& &=2\int_0^t H_n^\delta (\gamma_s)\left(\left(\nabla^\Gamma F_n^\delta(\gamma_s),\nabla^\Gamma F_n^\delta(\gamma_s)\right)_{T_{\gamma_s}\Gamma}+\left(\nabla_\gamma^\Gamma F_n^\delta(\gamma_s)-e_i,\nabla_\gamma^\Gamma F_n^\delta(\gamma_s)-e_i\right)_{\R^d}\right)\,ds\\
& &=2\int_0^t H_n^\delta(\gamma_s)\left(\frac{\langle c_n^0,\gamma_s\rangle}{(r_n n^d+\langle c_n^0,\gamma_s\rangle)^2} +\left(\frac{\langle c_n^0,\gamma_s\rangle}{r_n n^d+\langle c_n^0,\gamma_s\rangle}-1\right)^2\right)\,ds,\quad\quad t \geq 0.
\end{eqnarray*}

Below we use the following (standard) notion of convergence for continuous processes: If $\mathbf P$ is a probability measure on a measurable space $(\Omega,\mathcal B)$ and $(Z_t^n)_{t\geq 0}$, $n\in\N$, and $(Z_t)_{t\geq 0}$ are $C([0,\infty);\R)$-valued random variables on $\Omega$, then we write $(Z_t^n)_{t\geq 0}\to (Z_t)_{t\geq 0}$ in $L^r(\mathbf P)$, $r\in [1,\infty]$, iff for any $T>0$ it holds $\int_{\Omega} \sup_{0\leq t\leq T} \vert Z_t^n-Z_t\vert^rd\mathbf P\to 0$ as $n\to\infty$. We consider the limits in this sense of the processes occurring in \eqref{eqn:themartingale}. First note that for any $T>0$ and any $Y\in L^1(\Gamma;\mu)$ it holds by Lemma \ref{lem:identification1} and invariance of $\mu$ w.r.t.~$\mathbf M^{env}$
\begin{align}\label{eqn:pathconvergence}
\mathbf E_{h\hat\mu}^{coup}\left[\int_0^T \vert (H_n^{\delta}(\gamma_s)-1)Y(\gamma_s)\vert \,ds\right]&=\int_0^T \mathbf E_{\mu}^{env}[\vert (H_n^\delta(\gamma_s)-1)Y(\gamma_s)\vert]\nonumber\\
&=T \Vert (H_n^\delta-1)Y\Vert_{L^1(\Gamma;\mu)}\to 0
\end{align}
as $\delta\to 0$ by Lebesgue's dominated convergence theorem, since $1-H_n^\delta$ decreases to the indicator function of the set $\{\gamma\in\Gamma\,|\,\gamma\cap \partial [-n,n]^d\neq \emptyset\}$ which has $\mu$-measure $0$. This already implies that $\left(\int_0^t H_n^\delta(\gamma_s) \widetilde Y_i^n(\gamma_s)\,ds\right)_{t\geq 0}\to \left(\int_0^t \widetilde Y_i^n(\gamma_s)\,ds\right)_{t\geq 0}$ in $L^1(\mathbf P_{h\hat\mu}^{coup})$. Since for $\delta>\delta'$ the martingale $\left(\int_0^t H_n^\delta(\gamma_s)\,d( M^{n,\delta}- M^{n,\delta'})_s\right)_{t\geq 0}$ is easily verified to be equal to $0$ $\mathbf P^{coup}_{h\hat\mu}$-a.s.~(by considering its quadratic variation process, which can be computed with the help of Theorem \ref{thm:martingallemma}(i)), by the Burkholder-Davis-Gundy inequality there is some $C<\infty$ such that for any $0<T<\infty$ and $\delta>\delta'$
\begin{eqnarray*}
\lefteqn{\mathbf E_{h\hat\mu}^{coup}\left[ \sup_{t\in [0,T]} \left\vert\int_0^t H_n^\delta(\gamma_s)d M^{n,\delta}_s-\int_0^t H_n^{\delta'}(\gamma_s)d M^{n,\delta'}_s\right\vert^2\right]}\\
& &=\mathbf E_{h\hat\mu}^{coup}\left[\sup_{t\in [0,T]} \left\vert\int_0^t (H_n^\delta(\gamma_s)- H_n^{\delta'}(\gamma_s))d M^{n,\delta'}_s\right\vert^2\right]\\
& &\leq C\, \mathbf E_{h\hat\mu}^{coup}\left[\int_0^T \vert H_n^\delta(\gamma_s)- H_n^{\delta'}(\gamma_s)\vert^2 \left(\frac{\langle c_n^0,\gamma_s\rangle}{(r_n n^d+\langle c_n^0,\gamma_s\rangle)^2} +\left(\frac{\langle c_n^0,\gamma_s\rangle}{r_n n^d+\langle c_n^0,\gamma_s\rangle}-1\right)^2\right)\,ds\right]\\
& &\leq 2C\,T\Vert H_n^\delta-H_n^{\delta'}\Vert_{L^1(\Gamma;\mu)}.
\end{eqnarray*}
This shows that $(\widetilde M^{n,\delta}_t)_{t\geq 0}$ converges as $\delta\to 0$ in $L^2(\mathbf P_{h\hat\mu}^{coup})$-sense (hence in $L^1(\mathbf P_{h\hat\mu}^{coup})$-sense) to some continuous process $(\widetilde M^n_t)_{t\geq 0}$, which one verifies to be a square integrable continuous martingale with quadratic variation process
$$
\langle \widetilde M^n\rangle_t=2\int_0^t \frac{\langle c_n^0,\gamma_s\rangle}{(r_n n^d+\langle c_n^0,\gamma_s\rangle)^2} +\left(\frac{\langle c_n^0,\gamma_s\rangle}{r_n n^d+\langle c_n^0,\gamma_s\rangle}-1\right)^2\,ds,\quad\quad t\geq 0.
$$
Similar arguments based on the semimartingale decomposition of $(\xi_{i,t})_{t\geq 0}$ (see Theorem \ref{thm:martingallemma}(ii)) show that $\left(\int_0^t H_n^\delta(\gamma_s)\,d\xi_{i,s}\right)_{t\geq 0}\to (\xi_{i,t}-\xi_{i,0})_{t\geq 0}$ in $L^1(\mathbf P_{h\hat\mu}^{coup})$. Hence $(\widetilde F^{n,\delta}_t)_{t\geq 0}$ converges in $L^1(\mathbf P_{h\hat\mu}^{coup})$ to some continuous process $(\widetilde F^n_t)_{t\geq 0}$ fulfilling
$$
\widetilde M^n_t=\widetilde F^{n}_t+\xi_{i,t}-\xi_{i,0}-\int_0^t \widetilde Y_n^i(\gamma_s)\,ds,\quad t\geq 0.
$$
Letting $n\to\infty$, it is shown analogously as \eqref{eqn:pathconvergence} using Lemma \ref{lem:Ylemma} that $\left(\int_0^t \widetilde Y_n^i(\gamma_s)\,ds\right)_{t\geq 0}$ converges to $0$ in $L^1(\mathbf P_{h\hat\mu}^{coup})$ and using Lemma \ref{lem:aslemma} we find that the same holds for $(\langle \widetilde M^n\rangle_t)_{t\geq 0}$. By the Burkholder-Davis-Gundy inequality the latter implies that $(\widetilde M^n_t)_{t\geq 0}\to 0$ in $L^2(\mathbf P_{h\hat\mu}^{coup})_{t\geq 0}$, and it follows that in $L^1$-sense $(-\widetilde F^n_t)_{t\geq 0}$ converges as $n\to\infty$ to $(\xi_{i,t}-\xi_{i,0})_{t\geq 0}$. A subsequence converges a.s.~and we denote the limit of such a subsequence by $(\eta_t)_{t\geq 0}$. In order to prove that $\eta_b-\eta_a$ is $\mathbf P_{h\hat\mu}^{coup}$-a.s.~$\sigma(\gamma_s\,|\,a\leq s\leq b)$-measurable, we only need to verify this measurability property for $\tilde F_b^{n,\delta}-\tilde F_a^{n,\delta}$. To verify this note first that $H_n^\delta$ is the indicator function of a closed set and can be represented as limit of continuous functions dominated by $1$, implying that the stochastic integral defining $\tilde F_b^{n,\delta}-\tilde F_a^{n,\delta}$ can be approximated in $\mathbf P^{coup}_{h\hat\mu}$-measure (and hence for a subsequence also a.s.) by stochastic integrals w.r.t.~continuous integrands (see e.g.~\cite[Theorem IV.2.12]{RY01}), which in turn can be approximated by Riemann-Stieltjes sums depending only on $\gamma_s$ with $s\in [a,b]$ (see e.g.~\cite[Proposition IV.2.13]{RY01}).

\end{proof}

\section{Ergodicity of the environment process}\label{sec:ergodicity}

As is well-known, in order to prove time-ergodicity (under suitable assumptions on $\mu$) of the law $\mathbf P_{\mu}^{env}$ defined in the previous section, we have to verify that the form $(\mathcal E_{env}^{\Gamma,\mu},D(\mathcal E_{env}^{\Gamma,\mu}))$ is irreducible, i.e.~that if $F\in D(\mathcal E_{env}^{\Gamma,\mu})$ fulfills $\mathcal E_{env}^{\Gamma,\mu}(F,F)=0$, then $F$ is constant. This property can be shown at least in two ways: Firstly, one could prove it for the closure of the bilinear form $(\mathcal FC_b^\infty(\mathcal D,\Gamma))^2\ni(F,G)\mapsto \mathcal \int_{\Gamma} (\nabla^\Gamma F,\nabla^\Gamma G)_{T_\cdot\Gamma}\,d\mu$. This can presumably be done along the lines of \cite[Theorem 6.6]{AKR98b} by proving at first an integration by parts characterization of the set of canonical Gibbs measures for $\phi$ with intensity measure $z\,e^{-\phi}dx$. The disadvantage of this method in the present setting is that one would have to assume that there exists $\mu\in\mathcal G_{ibp}^{gc}(\Phi_\phi,z\,\sigma_\phi)$ such that $\mu$ is extremal in the set of canonical Gibbs measures for $\phi$ with intensity measure $\sigma_\phi$. Even if all elements of $\textnormal{ex}\,\mathcal G_{t}^{gc}(\Phi_\phi,z\,\sigma_\phi)$ have this property (which is a natural assumption, see e.g.~\cite{AKR98b} and \cite[Theroem 6.14]{Ge79}), we would still need $(\textnormal{ex}\mathcal G_t^{gc}(\Phi_\phi,z\sigma_\phi))\cap \mathcal G_{ibp}^{gc}(\Phi_\phi,z\sigma_\phi)\neq \emptyset$, which seems not to be clear in general. The second way to prove irreducibility of $\mathcal E_{env}^{\Gamma,\mu}$, which is the way we use below, is to show this property for the closure of the form $(\mathcal E_{fro}^{\Gamma,\mu},\mathcal FC_b^\infty(\mathcal D,\Gamma))$, defined by
$$
\mathcal E_{fro}^{\Gamma,\mu}(F,G):=\int_{\Gamma}(\nabla_\gamma^\Gamma F,\nabla_\gamma^\Gamma G)_{\R^d}\,d\mu,
$$
$F,G\in \mathcal FC_b^\infty(\mathcal D,\Gamma)$. It is not difficult to see that for $\mu\in\mathcal G_{ibp}(\Phi_\phi,z\,\sigma_\phi)$ this form is closable and its closure $(\mathcal E_{fro}^{\Gamma,\mu},D(\mathcal E_{fro}^{\Gamma,\mu}))$ is a Dirichlet form. It corresponds to the dynamics of a frozen environment seen from a moving particle interacting with the environment via $\phi$. In particular, we cannot expect to obtain irreducibility of $\mathcal E_{fro}^{\Gamma,\mu}$ for the case when $d=1$ and $\phi$ is singular. (However, one should note that this case is not interesting concerning the derivation of an invariance principle for the tagged particle process using \cite{DeM89}, since diffusive scaling seems not appropriate in one-dimensional systems with collisions, see \cite{Ha65}.) On the other hand, in the following theorem we will see that in all other cases we obtain a quite natural - and nonvoid - irreducibility statement.

\begin{theorem}\label{thm:ergodicity}
Assume that $\phi$ fulfills (SS), (LR), (I) and (D$\mbox{L}^2$). Let $d\geq 2$ or $\phi$ be bounded. Assume that $\mu=\frac{1}{Z_{\mu_0}} e^{-\langle \phi,\cdot\rangle}\,d\mu_0$ for some $\mu_0\in \textnormal{ex}\,\mathcal G_{\theta}^{gc}(\Phi_\phi,z\,dx)$. Then $(\mathcal E_{fro}^{\Gamma,\mu},D(\mathcal E_{fro}^{\Gamma,\mu}))$ and hence $(\mathcal E_{env}^{\Gamma,\mu},D(\mathcal E_{env}^{\Gamma,\mu}))$ is irreducible. Thus $\mathbf P_{\mu}^{env}$ (as defined at the beginning of Section \ref{sec:displacement}) is time-ergodic.
\end{theorem}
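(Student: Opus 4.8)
The plan is to reduce the whole statement to irreducibility of $(\mathcal E_{fro}^{\Gamma,\mu},D(\mathcal E_{fro}^{\Gamma,\mu}))$, and then to prove the latter by feeding the ergodicity of $\mu_0$ under spatial translations through the shift group acting on $L^2(\Gamma;\mu)$. First I would dispose of the routine reductions. On $\mathcal FC_b^\infty(\mathcal D,\Gamma)$ one has $\mathcal E_{env}^{\Gamma,\mu}=\int_\Gamma(\nabla^\Gamma\cdot,\nabla^\Gamma\cdot)_{T_\gamma\Gamma}\,d\mu+\mathcal E_{fro}^{\Gamma,\mu}$, so $D(\mathcal E_{env}^{\Gamma,\mu})\subseteq D(\mathcal E_{fro}^{\Gamma,\mu})$ with $\mathcal E_{fro}^{\Gamma,\mu}\le\mathcal E_{env}^{\Gamma,\mu}$ (pass to limits along an approximating sequence); hence irreducibility of $\mathcal E_{fro}^{\Gamma,\mu}$ gives irreducibility of $\mathcal E_{env}^{\Gamma,\mu}$, which implies time-ergodicity of $\mathbf P_\mu^{env}$ by the criterion recalled above. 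Since $\mathcal E_{fro}^{\Gamma,\mu}$ is a Dirichlet form, normal contractions operate, so truncation reduces the claim to: every bounded $F$ with $\mathcal E_{fro}^{\Gamma,\mu}(F,F)=\|\nabla_\gamma^\Gamma F\|^2_{L^2(\Gamma\to\R^d;\mu)}=0$ is $\mu$-a.e.\ constant; moreover, since truncations of a form-bounded sequence stay form-bounded, a Banach--Saks argument provides $F_n\in\mathcal FC_b^\infty(\mathcal D,\Gamma)$, uniformly bounded, converging to $F$ in $L^2(\Gamma;\mu)$ and $\mu$-a.e., with $\nabla_\gamma^\Gamma F_n\to 0$.

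The heart of the proof is to show such an $F$ is translation invariant. Put $\rho:=d\mu/d\mu_0=Z_{\mu_0}^{-1}e^{-\langle\phi,\cdot\rangle}$. Because $\mu_0$ is translation invariant, $\mu$ is quasi-invariant under each shift $\tau_{te_i}\colon\gamma\mapsto\gamma+te_i$ with Radon--Nikodym cocycle $\rho(\cdot+te_i)/\rho(\cdot)$, and this cocycle is $\mu$-a.e.\ finite and strictly positive --- here is exactly where the hypothesis $d\ge 2$ (or $\phi$ bounded) enters: for $d\ge 2$ the $e_i$-axis is Lebesgue-null, hence $\mu$-a.s.\ avoided by every configuration, so a rigidly translated configuration $\mu$-a.s.\ still avoids the singularity of $\phi$. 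Consequently $(\mathcal S^i_tF)(\gamma):=F(\gamma+te_i)\sqrt{\rho(\gamma+te_i)/\rho(\gamma)}$ defines a strongly continuous unitary group on $L^2(\Gamma;\mu)$ (strong continuity follows from $L^2(\Gamma;\mu_0)$-continuity of $t\mapsto\sqrt{\rho(\cdot+te_i)}$), and its generator acts on cylinder functions as $A_i:=\nabla_{\gamma,i}^\Gamma-\tfrac12\langle\partial_i\phi,\cdot\rangle$. Applying $\mathcal S^i_t$ to $F_n$, the identity $\mathcal S^i_tF_n=F_n+\int_0^t\mathcal S^i_s(A_iF_n)\,ds$ holds; since $A_iF_n=\nabla_{\gamma,i}^\Gamma F_n-\tfrac12\langle\partial_i\phi,\cdot\rangle F_n\to-\tfrac12\langle\partial_i\phi,\cdot\rangle F$ in $L^2(\Gamma;\mu)$ --- using $\langle\nabla\phi,\cdot\rangle\in L^2(\Gamma;\mu)$ (Lemma \ref{lem:integrability}) together with uniform boundedness and a.e.\ convergence of $(F_n)$ --- letting $n\to\infty$ yields $\mathcal S^i_tF=F-\tfrac12\int_0^t\mathcal S^i_s\bigl(\langle\partial_i\phi,\cdot\rangle F\bigr)\,ds$.

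Reading this identity off pointwise, and using $\mathcal S^i_s(\langle\partial_i\phi,\cdot\rangle F)(\gamma)=\langle\partial_i\phi,\gamma+se_i\rangle(\mathcal S^i_sF)(\gamma)$, the function $u(\cdot,\gamma):=(\mathcal S^i_\cdot F)(\gamma)$ solves, for $\mu$-a.e.\ $\gamma$, the scalar linear Volterra equation $u(t,\gamma)=F(\gamma)-\tfrac12\int_0^t\langle\partial_i\phi,\gamma+se_i\rangle\,u(s,\gamma)\,ds$. For $\mu$-a.e.\ $\gamma$ the coefficient $s\mapsto\langle\partial_i\phi,\gamma+se_i\rangle$ is locally integrable (local finiteness of $\gamma$, $\mu$-a.s.\ avoidance of the Lebesgue-null set of axis points, and $\nabla\phi\in L^1(\R^d;e^{-\phi}dx)$ to control the far particles), so the unique solution is $u(t,\gamma)=F(\gamma)\exp\bigl(-\tfrac12[\langle\phi,\gamma+te_i\rangle-\langle\phi,\gamma\rangle]\bigr)=F(\gamma)\sqrt{\rho(\gamma+te_i)/\rho(\gamma)}$. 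Comparing with the definition of $\mathcal S^i_t$ and cancelling the positive cocycle forces $F(\gamma+te_i)=F(\gamma)$ for $\mu$-a.e.\ $\gamma$; running over $i=1,\dots,d$ and composing, $F(\gamma+v)=F(\gamma)$ $\mu$-a.e.\ for each fixed $v\in\R^d$. To finish, the level sets $\{F>\lambda\}$ are then $\mu$-a.s.\ invariant under all translations, hence (by equivalence of $\mu$ and $\mu_0$) $\mu_0$-a.s.\ translation invariant; since $\mu_0\in\textnormal{ex}\,\mathcal G_\theta^{gc}(\Phi_\phi,z\,dx)$ is trivial on such events (Remark \ref{rem:measuresmod}), $\mu(\{F>\lambda\})=\mu_0(\{F>\lambda\})\in\{0,1\}$ for all $\lambda$, i.e.\ $F$ is $\mu$-a.e.\ constant, and irreducibility follows.

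I expect the delicate point to be exactly the pointwise passage in the third paragraph: although $\langle\nabla\phi,\cdot\rangle$ is $\mu$-integrable as the force felt by the tagged particle sitting at the origin, it ceases to be integrable once the configuration is rigidly shifted (a translated configuration no longer shields the singularity of $\phi$), so the above computation cannot be carried out at the level of $L^2$-operators and has to be run configuration by configuration; it is precisely the requirement that the cocycle $\rho(\cdot+te_i)/\rho(\cdot)$ remain $\mu$-a.e.\ finite and strictly positive along the whole flow that excludes the one-dimensional singular case and accounts for the hypothesis $d\ge 2$ or $\phi$ bounded.
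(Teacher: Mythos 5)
Your high-level structure is the same as the paper's (pass from $\mathcal E_{env}^{\Gamma,\mu}$ to $\mathcal E_{fro}^{\Gamma,\mu}$, reduce to bounded $F$ with $\nabla_\gamma^\Gamma F=0$ in $L^2(\Gamma;\mu)$, derive translation invariance of $F$, invoke extremality/triviality of $\mu_0$), and those framing reductions are fine. The central step, however, has a genuine gap, and it is exactly the one you flag as "delicate" without resolving: your pointwise Volterra argument requires (a) local integrability in $s$ of the coefficient $\langle\partial_i\phi,\gamma+se_i\rangle$ for $\mu$-a.e.\ $\gamma$, (b) that $t\mapsto\langle\phi,\gamma+te_i\rangle$ is locally absolutely continuous with derivative $\langle\partial_i\phi,\gamma+te_i\rangle$ (to check that $F(\gamma)\sqrt{\rho(\gamma+te_i)/\rho(\gamma)}$ solves the same equation), and (c) that $\rho(\gamma+te_i)>0$ $\mu$-a.s.\ so the cocycle can be cancelled. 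None of these follow from what you cite. $(\mathrm{D}\mathrm{L}^p)$ gives $\nabla\phi\in L^1_{loc}(\R^d\setminus\{0\})$ as a $d$-dimensional statement, and restrictions of $L^1_{loc}$ functions to line segments (Lebesgue-null sets) need not be integrable --- a Fubini/Sobolev-trace argument is required. The weight $e^{-\phi}$ in $(\mathrm{D}\mathrm{L}^p)$ controls far particles but is of no help here, since along a rigidly shifted configuration the damping factor is evaluated at the wrong point. And avoiding the axis in $d\ge 2$ only makes each single term $\phi(x+se_i)$ finite; it says nothing about $\langle\phi^+,\gamma+se_i\rangle<\infty$ (you know this at $s=0$ because $\rho>0$ $\mu$-a.e., but $\mu$ is not translation invariant, so the conclusion does not transport along the flow). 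You would also need to verify $\mathcal FC_b^\infty(\mathcal D,\Gamma)\subset D(A_i)$, which hits the same trace problem through the difference quotient of $\sqrt{\rho(\cdot+te_i)/\rho(\cdot)}$.

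The paper addresses all of this at once and in a leaner way. It picks a version $\tilde\phi^+$ of $\phi^+$ which, for Lebesgue-a.e.\ $(x_2,\dots,x_d)$, is weakly differentiable and continuous along the $x_1$-line, sets $\tilde\phi^+_{\sup}(x):=\sup_{s\in[0,1]}\tilde\phi^+(x+se_1)$, and uses $W^{1,1}(0,1)\hookrightarrow C([0,1])$ together with $(\mathrm{I})$ and $(\mathrm{D}\mathrm{L}^1)$ to get $\tilde\phi^+_{\sup}\wedge K\in L^1(\R^d)$. This yields $\mu$-a.s.\ finiteness of $\sup_{s\in[0,1]}\langle\phi^+,\gamma+se_1\rangle$, hence a lower bound $e^{-N}$ for the density $Z\rho(\gamma+se_1)$ on the cutoff set $\{\langle\tilde\phi^+_{\sup}\wedge N,\cdot\rangle\le N\}$, and the whole argument is then run in $L^1(\Gamma;\mu_0)$ by writing $F_n(\gamma+e_1)-F_n(\gamma)=\int_0^1(e_1,\nabla_\gamma^\Gamma F_n(\gamma+se_1))\,ds$ for cylinder $F_n$, shifting by translation invariance of $\mu_0$, and bounding $d\mu_0\le e^N Z\,d\mu$ on the cutoff set --- so it never needs to differentiate the density or solve an ODE. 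Your unitary-group reformulation is not a shortcut: once the missing trace estimate is supplied it becomes essentially the paper's proof with extra machinery on top.
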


\begin{remark}
If in addition to the assumptions in the above theorem we have $\nabla\phi\in L^1(\R^d\setminus B_r(0);dx)$ for some (hence all) $r>0$ (or $d=1$ and $\phi$ bounded), the assumption on $\mu$ can be replaced by the (in this case) equivalent and more natural assumption $\mu\in \textnormal{ex}\mathcal G_{ibp}^{gc}(\Phi_\phi,z\,\sigma_\phi)$, see the discussion at the end of Section \ref{sec:conditions}.
\end{remark}

\begin{proof}
Let $F\in D(\mathcal E_{fro}^{\Gamma,\mu})$ be such that $\mathcal E_{fro}^{\Gamma,\mu}(F,F)=0$. We have to prove that $F$ is constant. By the proof of \cite[Proposition I.4.17]{MaRo92} we may assume that $F$ is bounded. By Remark \ref{rem:measuresmod} (set $A=\{\gamma\in\Gamma\,|\,F(\gamma)>c\}$ for $c\in\R$) we only need to prove that for any $v\in\R^d$ it holds $F(\gamma+v)=F(\gamma)$ for $\mu$-a.e.~$\gamma\in\Gamma$.\\
We only prove this for $v=e_1$, the first standard unit vector of $\R^d$ and consider only the case $d\geq 2$, since the case $d=1$ and $\phi$ bounded is easier to treat. Note that by (D$\mbox{L}^1$) it is possible to fix a Lebesgue-version $\tilde\phi^+$ of $\phi^+$ and a Lebesgue nullset $N\subset\R^{d-1}$ such that for all $(x_2,\cdots,x_d)\in\R^{d-1}\setminus N$ the function $\tilde\phi^+(\cdot,x_2,\cdots,x_d)$ is weakly differentiable and continuous. Setting $\tilde\phi^+_{\sup}(x_1,\cdots,x_d):=\sup_{s\in [0,1]}\tilde\phi^+(x_1+s,x_2,\cdots,x_d)$, $(x_1,\cdots,x_d)\in\R^d$, we therefore obtain a Lebesgue-a.e.~finite function $\tilde\phi^+_{\sup}$. Moreover, since for any element $g$ of the Sobolev space $W^{1,1}(0,1)$ it holds $\sup_{s\in [0,1]}\vert g(s)\vert\leq \Vert g\Vert_{W^{1,1}(0,1)}$, we obtain for any $K\in\N$
\begin{eqnarray*}
\lefteqn{\int_{\R^d} \tilde\phi_{\sup}^+\wedge K\,dx=\int_{\R^{d}} \sup_{s\in [0,1]} (\tilde\phi^+\wedge K)(x_1+s,x_2,\cdots,x_d)\,dx_1dx_2\cdots dx_d}\\
& &\hspace{-0.25cm}\leq \int_{\R^{d}} \int_0^1 \vert (\phi^+\wedge K)(x_1+s,x_2,\cdots,x_d)\vert+\vert\partial_{x_1} (\phi^+\wedge K)(x_1+s,x_2,\cdots,x_d)\vert\,ds\,dx_1\cdots dx_d\\
& &\hspace{-0.25cm}=\Vert \phi^+\wedge K\Vert_{L^1(\R^d)}+\Vert \partial_{x_1}(\phi^+\wedge K)\Vert_{L^1(\R^d)}\leq e^K\Vert 1-e^{\phi}\Vert_{L^1(\R^d)}+e^K\Vert \partial_{x_1}\phi e^{-\phi}\Vert_{L^1(\R^d)}<\infty
\end{eqnarray*}
by (I), (D$\mbox{L}^1$) and \cite[Lemma 7.6]{GiTr77}, i.e.~$\tilde\phi_{\sup}^+\wedge K\in L^1(\R^d)$ holds for any $K\in\N$. In particular, by \cite[Theorem 4.1]{KK02} we have that $\langle\tilde\phi_{\sup}^+\wedge K,\cdot\rangle$ is finite $\mu$-a.s. It follows that for any $K\in\N$ we have
\begin{multline*}
\limsup_{N\to\infty} \int_{\Gamma} 1_{\{\langle\tilde\phi_{\sup}^+\wedge N,\cdot\rangle> N\}}\,d\mu\leq \mu(A_K)+\limsup_{N\to\infty} \int_{\Gamma\setminus A_K} 1_{\{\langle\tilde\phi_{\sup}^+\wedge N,\cdot\rangle> N\}}d\mu\\
\leq \mu(A_K)+\limsup_{N\to\infty} \int_{\Gamma\setminus A_K} 1_{\{\langle\tilde\phi_{\sup}^+\wedge K,\cdot\rangle> N\}}d\mu=\mu(A_K),
\end{multline*}
where $A_K:=\{\gamma\in\Gamma\,|\,\gamma\cap \{\tilde\phi^+_{\sup}>K\}\neq\emptyset\}$. Since $\tilde\phi_{\sup}^+$ is finite Lebesgue-a.e., it follows $\mu(\bigcap_{K\in\N}A_K)=0$. Hence $\mu(A_K)\to 0$ as $K\to\infty$. We conclude that
\begin{equation}\label{eqn:phi+geqN}
1_{\{\langle \tilde\phi_{\sup}^+\wedge N,\cdot\rangle\leq N\}}\to 1\quad \mbox{in $L^1(\Gamma;\mu)$}
\end{equation}
as $N\to\infty$.

By the definition of $\mathcal E_{fro}^{\Gamma,\mu}$ there exists a sequence $(F_n)_{n\in\N}\subset \mathcal FC_b^\infty(\mathcal D,\Gamma)$ such that $F_n\to F$ w.r.t.~the norm $\sqrt{(\cdot,\cdot)_{L^2(\Gamma;\mu)}+\mathcal E_{fro}^{\Gamma,\mu}(\cdot,\cdot)}$. (In particular $\nabla_\gamma^\Gamma F_n\to 0$ in $L^2(\Gamma;\mu)$.) Moreover, by \cite[Theorem I.4.12]{MaRo92} we may assume that $(F_n)_{n\in\N}$ is uniformly bounded in $L^\infty(\Gamma;\mu)$, hence also in $L^\infty(\Gamma;\mu_0)$, and converges also pointwise $\mu_0$-a.s.~to $F$. Since for any $\gamma\in\Gamma$ and $n\in\N$ it holds $F_n(\gamma+e_1)-F_n(\gamma)=\int_0^1 (e_1,\nabla_\gamma^\Gamma F_n(\gamma+se_1))_{\R^d}\,ds$, we find that for any $n,N\in\N$ and $G\in\mathcal FC_b^\infty(\mathcal D,\Gamma)$
\begin{eqnarray}\label{eqn:urghestimate}
\lefteqn{\left\vert \int_{\Gamma} 1_{\{\langle \tilde\phi^+_{\sup}\wedge N,\cdot\rangle\leq N\}}(\gamma)G(\gamma) (F_n(\gamma+e_1)-F_n(\gamma))\,d\mu_0(\gamma)\right\vert}\nonumber\\
& &\leq \int_0^1 \left\vert \int_{\Gamma} G(\gamma-s e_1)1_{\{\langle \tilde\phi^+_{\sup}\wedge N,\cdot\rangle\leq N\}} (\gamma-s e_1) (e_1,\nabla_\gamma^\Gamma F_n(\gamma))_{\R^d}d\mu_0 \right\vert \,ds\nonumber\\
& &\leq e^N\int_0^1 \left\vert \int_\Gamma G(\gamma-s e_1) (e_1,\nabla_\gamma^\Gamma F_n(\gamma))_{\R^d}d\mu\right\vert\,ds.
\end{eqnarray}
For the first inequality we use the translation invariance of $\mu_0$, and for the second inequality we use the fact that for $\gamma\in\Gamma$ fulfilling $\langle \tilde \phi^+\wedge N,\gamma\rangle\leq N$ we automatically have $\tilde\phi^+(x)\leq N$ for all $x\in\gamma$ and hence $\langle \tilde\phi^+,\gamma\rangle=\langle \tilde\phi^+\wedge N,\gamma\rangle\leq N$. Letting $n\to\infty$ in \eqref{eqn:urghestimate} and using the properties of $(F_n)_{n\in\N}$, we find that
$$
\int_{\Gamma} 1_{\{\langle \tilde \phi^+_{\sup}\wedge N,\cdot\rangle \leq N\}}(\gamma) G(\gamma) F(\gamma+e_1)\,d\mu_0(\gamma)=\int_{\Gamma} 1_{\{\langle \tilde \phi^+_{\sup}\wedge N,\cdot\rangle \leq N\}}(\gamma) G(\gamma) F(\gamma)\,d\mu_0(\gamma).
$$
Finally, we let $N\to\infty$ and take \eqref{eqn:phi+geqN} into account to obtain $\int_{\Gamma} G(\gamma)F(\gamma+e_1)d\mu_0(\gamma)=\int_{\Gamma} G(\gamma)F(\gamma)d\mu_0(\gamma)$, which by denseness of $\mathcal FC_b^\infty(\mathcal D,\Gamma)$ in $L^1(\Gamma;\mu_0)$ implies that $F(\cdot+e_1)=F$ holds $\mu_0$-a.s.
\end{proof}

\section{An invariance principle for the tagged particle process}\label{sec:last}

In this section we apply the main result from \cite{DeM89}, which (essentially, see Remark \ref{rem:dmfgw} below) reads as follows:

\begin{theorem}[De Masi, Ferrari, Goldstein, Wick]
Let $(\Omega,\mathcal F,(Y_t)_{t\geq 0},(\mathbf P_y)_{y\in \mathcal X})$ be a Markov process with (measurable) state space $\mathcal X$ which has an invariant and ergodic measure $\nu$ and coordinate maps $\Omega\times [0,\infty)\ni (\omega,t)\mapsto Y_t(\omega)\in \mathcal X$ that are jointly measurable. Let $(X_{[a,b]})_{[a,b]\subset [0,\infty)}$ be a family of random variables on $\Omega$, indexed by all closed bounded subintervals $[a,b]\subset [0,\infty)$ and having the following properties:
\begin{enumerate}
\item $X_{[a,b]}$ is $\mathbf P_\nu$-a.s.~$\sigma(Y_s:\,a\leq s\leq b)$-measurable. (I.e.~$X_{[a,b]}$ is measurable w.r.t.~the smallest $\sigma$-field containing $\sigma(Y_s:\,a\leq s\leq b)$ and all $\mathbf P_\nu$-null sets from $\mathcal F$.)
\item $X_{[a,b]}\in L^1(\mathbf P_\nu)$ for all intervals $[a,b]\subset [0,\infty)$.
\item $X_{[a,b]}$ is antisymmetric w.r.t.~time-reversal and covariant w.r.t.~time-shift.
\item $X_{[a,b]}+X_{[b,c]}=X_{[a,c]}$ $\mathbf P_\nu$-a.s.~for $0\leq a\leq b\leq c$.
\item $(X_t)_{t\geq 0}:=(X_{[0,t]})_{t\geq 0}$ is c\` adl\` ag.
\end{enumerate}

Moreover, assume that the mean forward velocity
\begin{equation}\label{eqn:meanforward}
\lim_{\delta\to 0} \frac{1}{\delta}\mathbf P_{\mu}(X_\delta\,|\,\sigma(Y_0))=\varphi
\end{equation}
exists as a limit in $L^1(\mathcal X;\nu)$ (the conditional expectation in the above formula is interpreted as a function of the initial value, i.e.~a function on $\mathcal X$), and that the martingale
$$
M_t=X_t-\int_0^t\varphi(Y_s)\,ds,\quad t\geq 0,
$$
is square-integrable. For $\varepsilon>0$ and $t\geq 0$ one defines $X_t^\varepsilon:=\varepsilon X_{\varepsilon^{-2}t}$, $X^\varepsilon:=(X_t^\varepsilon)_{t\geq 0}$. Then $X^\varepsilon$ converges in finite dimensional distribution towards an $\R^d$-dimensional Brownian motion scaled by a constant diffusion matrix $D=(D_{ij})_{1\leq i,j\leq d}$. If additionally $\varphi\in L^2(\mathcal X;\nu)$, then this convergence holds weakly in $\nu$-measure.
\end{theorem}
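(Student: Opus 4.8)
The plan is to follow the Kipnis--Varadhan martingale approximation scheme in the form used by De Masi, Ferrari, Goldstein and Wick, in which reversibility of $(Y_t)_{t\ge 0}$ plays the decisive role. Write $R_t:=\int_0^t\varphi(Y_s)\,ds$, so that by hypothesis $X_t=M_t+R_t$ with $M$ a square-integrable forward martingale, $M_0=0$, and both $X_t$ and $R_t$ are functionals of $(Y_s)_{s\le t}$ by properties (i) and (iv).

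The first step is a \emph{backward} martingale decomposition. Let $\theta_t$ denote reversal of the path on $[0,t]$; by reversibility the reversed path has law $\mathbf P_\nu$ again, by the time-reversal antisymmetry in (iii) we have $X_{[0,t]}\circ\theta_t=-X_{[0,t]}$, and $R_t\circ\theta_t=R_t$ by the change of variables $s\mapsto t-s$. Substituting the forward decomposition gives $M_t\circ\theta_t=-M_t-2R_t$, so that
\begin{equation*}
N_t:=M_t+2R_t=-\,(M_t\circ\theta_t)
\end{equation*}
is, for each horizon, a square-integrable \emph{backward} martingale (time reversal turns forward martingales into backward ones), with the same law as $-M_t$ under $\mathbf P_\nu$, and
\begin{equation*}
X_t=\tfrac12\,(M_t+N_t).
\end{equation*}
Thus $X$ is a half-sum of a forward and a reversed backward martingale with identical increment covariances; this identity is the structural core and the only place where reversibility is essential.

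Next come the functional martingale CLT and an ergodic theorem. The increments of $M$ are stationary (functionals of the stationary ergodic process $(Y_t)$) and ergodic, and the matrix covariation $\langle M^{(i)},M^{(j)}\rangle_t$ is an additive functional of $(Y_t)$ with $\mathbf E_\nu\langle M\rangle_1<\infty$; hence $t^{-1}\langle M\rangle_t\to\Sigma:=\mathbf E_\nu[\langle M\rangle_1]$ $\mathbf P_\nu$-a.s.\ and in $L^1$, and the martingale FCLT yields $(\varepsilon M_{\varepsilon^{-2}t})_{t\ge 0}\Rightarrow B^\Sigma$, likewise for $N$ via $\theta$. To pass to the \emph{sum}, hence to $X$, I would regularise by a resolvent at level $\lambda>0$: write $X_t$ as a forward martingale $M^\lambda_t$ plus a stationary boundary term (negligible after multiplication by $\varepsilon$) plus a term of order $\lambda$, with $M^\lambda$ the martingale produced by applying $(\lambda-L)^{-1}$ to the relevant function. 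The spectral theorem for the self-adjoint operator $-L$, \emph{powered by the square-integrability of $M=X-R$ rather than any a priori regularity of $\varphi$}, is what guarantees that this can be carried out, that the asymptotic covariance
\begin{equation*}
D:=\lim_{t\to\infty}t^{-1}\,\mathbf E_\nu\!\big[X_t\,X_t^{\mathsf T}\big]
\end{equation*}
exists, and that $t^{-1}\langle M^\lambda\rangle_t\to D$ on letting first $t\to\infty$ and then $\lambda\to 0$. Feeding this into the martingale FCLT together with Birkhoff's theorem for $\langle M^\lambda\rangle$ gives convergence of $X^\varepsilon$ in finite-dimensional distributions to $B^D$ under $\mathbf P_\nu$; path tightness follows from Burkholder--Davis--Gundy and the c\`adl\`ag property (v).

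Finally, under the additional hypothesis $\varphi\in L^2(\mathcal X;\nu)$ one upgrades to weak convergence in $\nu$-measure: the resolvent errors can then be controlled \emph{uniformly in the starting point} in $\nu$-measure --- the boundary term by a Garsia/maximal inequality together with $\|u_\lambda\|_{L^2}\to\|u_0\|_{L^2}<\infty$, the order-$\lambda$ term by stationarity and $\lambda\|u_\lambda\|_{L^2}\to 0$ --- which, with the Markov property, promotes the mode of convergence, tightness again coming from the martingale part. I expect the main obstacle to be precisely this resolvent/approximation step when $\varphi$ is only integrable, so that $u_\lambda$ need belong to no obvious function space: one must extract the needed compactness from $X_t-R_t\in L^2$ and from the simultaneous forward/backward martingale structure of the first step rather than from $\varphi$ itself, and then identify the limiting covariance $D$ produced by this procedure.
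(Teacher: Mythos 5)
The paper does not prove this theorem: it is the main result of \cite{DeM89} (building on \cite{KV86}), quoted here in simplified form so that it can be applied to the environment process, and the only proof-related content in the paper is Remark \ref{rem:dmfgw}, which observes that relaxing assumption (i) to $\mathbf P_\nu$-a.s.\ measurability requires no change in the original argument. So the only meaningful comparison is with the cited original proof, and your sketch does follow its route: the identity $X_t=\tfrac12(M_t+N_t)$ with $N_t=-(M_t\circ\theta_t)$ a backward martingale, obtained from antisymmetry, reversibility and $R_t\circ\theta_t=R_t$, is exactly the structural decomposition on which \cite{DeM89} rests.

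As a proof, however, your proposal has a genuine gap at the step that carries all the weight. The half-sum of a forward and a backward martingale is not itself a martingale, and the substance of \cite{KV86}, \cite{DeM89} is precisely the control of the additive functional $R_t=\int_0^t\varphi(Y_s)\,ds$: your resolvent regularisation needs $u_\lambda=(\lambda-L)^{-1}\varphi$ to make sense in $L^2$ (equivalently an $H_{-1}$-type bound on $\varphi$), which the hypotheses do not supply --- for the finite-dimensional statement $\varphi$ is only assumed to lie in $L^1(\mathcal X;\nu)$, and the square-integrability of $M$ converts into such a bound only once $\varphi\in L^2$. You explicitly acknowledge this obstacle but do not resolve it, and both the existence of $D=\lim_{t\to\infty}t^{-1}\mathbf E_\nu[X_tX_t^{\mathsf T}]$ and the identification of the limiting covariance are asserted rather than derived. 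Two smaller points: the backward martingale $N$ depends on the reversal horizon $t$, so it is not a single process indexed by $t\geq 0$ to which a functional CLT applies as written; and tightness via Burkholder--Davis--Gundy is neither required for the stated conclusions (finite-dimensional convergence, resp.\ weak convergence in $\nu$-measure) nor justified for the drift part under the $L^1$ hypothesis. Finally, note that the argument uses reversibility ($\nu$-symmetry) of $(Y_t)_{t\geq 0}$, which you correctly invoke and which holds in the paper's application, although the restatement above only lists invariance and ergodicity.
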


\begin{remark}\label{rem:dmfgw}
The statement of the above theorem is a simplified version of the theorem from \cite{DeM89}. In fact, in that paper also results are derived for the case when the mean forword velocity is only in $L^1$. Moreover, in \cite{DeM89} it is assumed that $X_{[a,b]}$ is even $\sigma(Y_s:\,a\leq s\leq b)$-measurable (i.e.~measurable w.r.t.~the non-completed $\sigma$-fields). However, this assumption can be relaxed to $\mathbf P_\nu$-a.s.~measurability (as we do in assumption (i) of the above theorem) without making any changes to the proof of the theorem.
\end{remark}

\begin{remark}
\begin{enumerate}
\item Weak convergence in $\nu$-measure means in the above theorem that for any bounded continuous function $F$ on the space $D([0,\infty);\R^d)$ of c\` adl\`ag paths (equipped with the Skorokhod topology) and any $\delta>0$ it holds
$$
\int \vert \mathbf E_y[F((\varepsilon X_{\varepsilon^{-2}t})_{t\geq 0})]-\mathbf E[F]\vert>\delta\,d\mu(y)\to 0,
$$
where $\mathbf E$ is the expectation corresponding to the law of the limiting process on $D([0,\infty);\R^d)$. Note that weak convergence in $\nu$-measure implies weak convergence of the laws of $(\varepsilon X_{\varepsilon^{-2}t})_{t\geq 0}$ under $\mathbf P_\nu$.
\item Using an analogon of the portmanteau theorem for weak convergence in measure (see (i)) and recalling that the Skorokhod topology coincides on the space $C([0,\infty);\R^d)$ with the topology of locally uniform convergence and that this space is closed in $D([0,\infty);\R^d)$ (for the latter two facts see \cite[Exercise 3.25]{EK86}), one finds that if $(X_t)_{t\geq 0}$ is not only c\` adl\` ag but even continuous, then from weak convergence in $\nu$-measure it even follows that one has the analogous convergence with $D([0,\infty);\R^d)$ replaced by $C([0,\infty);\R^d)$.
\end{enumerate}
\end{remark}

It is our aim to apply the above theorem for deriving the limit after diffusive scaling of the displacement $(\xi_t-\xi_0)_{t\geq 0}$ of the tagged particle distributed according to $\mathbf P_{h\hat\mu}^{coup}$ (for the definition of this law see Section \ref{sec:displacement}), under the assumptions from Section \ref{sec:conditions} and for $\mu\in \textnormal{ex}\,\mathcal G_{ipb}^{gc}(\Phi_\phi,z\sigma_\phi)$. As we have seen in Theorem \ref{thm:displacement}, we may equivalently consider the limit of the random variables $(\mathbf \eta_t)_{t\geq 0}$ on $C([0,\infty),\ddot\Gamma)$. Using also Lemma \ref{thm:ergodicity}, we find that the Markov process $\mathbf M^{env}$ fulfills the assumptions of the above theorem, and it is not difficult to prove properties (i)-(v) for $X_{[a,b]}$ being the version of $\eta_b-\eta_a$ from Theorem \ref{thm:displacement}(ii), $a,b\in [0,\infty)$, $a<b$. Moreover, we already know from Theorem \ref{thm:martingallemma}(ii) that $X_t-\int_0^t\langle\nabla\phi,\gamma_s\rangle\,ds$, $t\geq 0$, is $\sqrt{2}$ times a Brownian motion and hence in particular a square-integrable martingale. It remains to prove that the mean forward velocity exists and equals $\langle \nabla\phi,\cdot\rangle$.

\begin{lemma}\label{lem:meanforward}
\eqref{eqn:meanforward} holds in the above described setting with $\varphi=\langle \nabla\phi,\cdot\rangle$.
\end{lemma}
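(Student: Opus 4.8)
The plan is to transfer the computation from the environment process to the coupled process, where the semimartingale decomposition of the tagged particle's coordinate from Theorem \ref{thm:martingallemma} is available, and then to recognize the resulting conditional expectation as a Ces\`aro average of a strongly continuous semigroup.

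First I would use Theorem \ref{thm:displacement}(ii): the joint law of $((\eta_t)_{t\geq0},(\gamma_t)_{t\geq0})$ under $\mathbf P_\mu^{env}$ coincides with that of $((\xi_t-\xi_0)_{t\geq0},(\gamma_t)_{t\geq0})$ under $\mathbf P_{h\hat\mu}^{coup}$ for any probability density $h$ as in Section \ref{sec:displacement}; since $\int h\,d\xi=1$, the $\gamma_0$-marginal of $h\hat\mu$ is $\mu$, so $\tfrac1\delta\mathbf E_\mu^{env}[\eta_\delta\mid\sigma(\gamma_0)]$ and $\tfrac1\delta\mathbf E_{h\hat\mu}^{coup}[\xi_\delta-\xi_0\mid\sigma(\gamma_0)]$ coincide as $\mu$-classes of functions of $\gamma_0$, and it suffices to show the latter converges to $\langle\nabla\phi,\cdot\rangle$ in $L^1(\Gamma;\mu)$. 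Then I would apply Theorem \ref{thm:martingallemma}(i) with $F=0$ and $c$ running through the standard unit vectors, stacking components: the $\R^d$-valued process $M_t:=(\xi_t-\xi_0)-\int_0^t\langle\nabla\phi,\gamma_s\rangle\,ds$ is, for $\mathcal E_{coup}^{\R^d\times\Gamma,\hat\mu}$-quasi every starting point (hence $h\hat\mu$-a.e., as exceptional sets are $\hat\mu$-null and $h\hat\mu\ll\hat\mu$), a continuous $(\mathbf F_t^{coup})_{t\geq0}$-martingale with $M_0=0$, and by part (ii) it is $\sqrt2$ times a Brownian motion, so a genuine square-integrable martingale under $\mathbf P_{h\hat\mu}^{coup}$.

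Since $\sigma(\gamma_0)\subseteq\mathbf F_0^{coup}$ and $\mathbf E_{h\hat\mu}^{coup}[M_\delta\mid\mathbf F_0^{coup}]=M_0=0$, the tower property kills the martingale part: $\mathbf E_{h\hat\mu}^{coup}[M_\delta\mid\sigma(\gamma_0)]=0$. For the drift I would note $\langle\nabla\phi,\cdot\rangle\in L^2(\Gamma;\mu)$ by Lemma \ref{lem:integrability}, and that by Lemma \ref{lem:identification1} the second component of the coupled process is distributed as $\mathbf M^{env}$; invariance of $\mu$ then gives $\mathbf E_{h\hat\mu}^{coup}\int_0^\delta|\langle\nabla\phi,\gamma_s\rangle|\,ds=\delta\|\langle\nabla\phi,\cdot\rangle\|_{L^1(\Gamma;\mu)}<\infty$, so a conditional Fubini reduces the drift to $\int_0^\delta\mathbf E_{h\hat\mu}^{coup}[\langle\nabla\phi,\gamma_s\rangle\mid\sigma(\gamma_0)]\,ds$. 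Because this conditional expectation, as a $\mu$-class of functions of $\gamma_0$, depends only on the joint law of $(\gamma_0,\gamma_s)$ -- identical under $\mathbf P_{h\hat\mu}^{coup}$ and $\mathbf P_\mu^{env}$ -- the Markov property of $\mathbf M^{env}$ identifies it with $(T_{s,2}^{env}\langle\nabla\phi,\cdot\rangle)(\gamma_0)$ (one could instead invoke \eqref{eqn:decomposition}). Collecting the two parts yields $\tfrac1\delta\mathbf E_{h\hat\mu}^{coup}[\xi_\delta-\xi_0\mid\sigma(\gamma_0)]=\tfrac1\delta\int_0^\delta T_{s,2}^{env}\langle\nabla\phi,\cdot\rangle\,ds$, an identity in $L^2(\Gamma;\mu)$, hence in $L^1(\Gamma;\mu)$.

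Finally, $(T_s^{env})_{s\geq0}$ is strongly continuous on $L^1(\Gamma;\mu)$ (as for the coupled semigroup in Section \ref{sec:uniqueness}, via Beurling-Deny), so $\tfrac1\delta\int_0^\delta T_s^{env}g\,ds\to g$ in $L^1(\Gamma;\mu)$ as $\delta\to0$ for every $g$, and with $g=\langle\nabla\phi,\cdot\rangle$ this gives \eqref{eqn:meanforward} with $\varphi=\langle\nabla\phi,\cdot\rangle$. I do not expect a real obstacle; the only slightly delicate points are the bookkeeping between the quasi-everywhere statements of Theorem \ref{thm:martingallemma} and the mixed initial law $h\hat\mu$, and the justification that $\mathbf E[\langle\nabla\phi,\gamma_s\rangle\mid\sigma(\gamma_0)]$ really is the $L^2$-semigroup applied to $\langle\nabla\phi,\cdot\rangle$ -- both light, given Theorem \ref{thm:martingallemma}, Lemma \ref{lem:identification1} and Theorem \ref{thm:decomposition} are already available.
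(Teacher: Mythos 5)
Your proof is correct and follows essentially the same route as the paper's: decompose $X_\delta$ via Theorem~\ref{thm:martingallemma} into a martingale part (which vanishes under conditioning on $\sigma(\gamma_0)$) plus $\int_0^\delta\langle\nabla\phi,\gamma_s\rangle\,ds$, identify the conditional expectation of the drift with the semigroup $T_{s}^{env}\langle\nabla\phi,\cdot\rangle$, and conclude by strong continuity of $(T^{env}_{t,1})_{t\ge0}$ on $L^1(\Gamma;\mu)$ applied to the Ces\`aro average. The only difference is bookkeeping: the paper relies on the paragraph preceding the lemma, which already records (via Theorem~\ref{thm:displacement} and Theorem~\ref{thm:martingallemma}(ii)) that $X_t-\int_0^t\langle\nabla\phi,\gamma_s\rangle\,ds$ is a square-integrable martingale directly under $\mathbf P_\mu^{env}$, so its proof is a three-line computation purely on the environment side; you instead redo the transfer to $\mathbf P_{h\hat\mu}^{coup}$ inside the proof, which is harmless but duplicates that setup.
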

\begin{proof}
Observe that $\mathbf E_{\mu}^{env}[\langle \nabla\phi,\gamma_s\rangle\,|\,\sigma(\gamma_0)]=T_{s}^{env}\langle\nabla\phi,\cdot\rangle$ holds $\mu$-a.e.~for any $s\geq 0$. Using Theorem \ref{thm:martingallemma}(ii) it follows that for $\varphi=\langle\nabla\phi,\cdot\rangle$
\begin{align*}
\left\Vert \frac{1}{\delta}\mathbf E_{\mu}^{env}[X_\delta\,|\,\sigma(\gamma_0)]-\varphi\right\Vert_{L^1(\Gamma;\mu)}&=\mathbf E_{\mu}^{env}\left[\left\vert \frac{1}{\delta}\mathbf E_{\mu}^{env}[X_\delta\,|\,\sigma(\gamma_0)]-\varphi(\gamma_0)\right\vert\right]\\
&=\mathbf E_{\mu}^{env}\left[ \frac{1}{\delta} \left\vert \mathbf E_{\mu}^{env}\left[\int_0^\delta \varphi(\gamma_s)-\varphi(\gamma_0)\,ds\bigg|\sigma(\gamma_0)\right]\right\vert\right]\\
&\leq \frac{1}{\delta}\int_0^\delta \mathbf E_{\mu}^{env}\left[\vert T_{s,1}^{env}\varphi(\gamma_0)-\varphi(\gamma_0)\vert\right]\,ds\\&=\frac{1}{\delta}\int_0^\delta \Vert T_{s,1}^{env}\varphi-\varphi\Vert_{L^1(\Gamma;\mu)}\,ds\to 0
\end{align*}
as $\delta\to 0$ by the strong continuity of $(T_{t,1}^{env})_{t\geq 0}$.
\end{proof}

From the above considerations we conclude our final theorem.
\begin{theorem}
Suppose that $\phi$ fulfills the assumptions (SS), (LR), (I) and (D$\mbox{L}^p$) for some $p\in (d,\infty)\cap [2,\infty)$ and that $\mu\in\textnormal{ex}\,\mathcal G_{ipb}^{gc}(\Phi_\phi,z\,\sigma_\phi)$ and assume that $d\geq 2$ or $\phi$ is bounded. Let $0\leq h\in L^1(\R^d)\cap L^\infty(\R^d)$ be a probability density w.r.t.~Lebesgue measure and define $h\hat\mu:=(hd\xi)\otimes\mu$. Denote by $(\mathbf X^{tag})_{t\geq 0}$ the first component of $(\mathbf X^{coup}_t)_{t\geq 0}$. Then, as $\varepsilon\to 0$, $(\varepsilon \mathbf X_{\varepsilon^{-2}t}^{tag})_{t\geq 0}$ converge weakly in $\mu$-measure (in $C([0,\infty),\R^d)$) to a $d$-dimensional Brownian motion scaled with a (constant) diffusion matrix $D=(D_{ij})_{1\leq i,j\leq d}$ and starting in $0$.
\end{theorem}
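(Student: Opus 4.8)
The plan is to apply the theorem of De Masi, Ferrari, Goldstein and Wick (\cite{DeM89}) quoted above, with the reversible ergodic Markov process taken to be the environment process $\mathbf M^{env}$ on the state space $\ddot\Gamma$, with invariant measure $\nu=\mu$. First I would pass, via Lemma~\ref{lem:measures}, to the representation $\mu=\frac{1}{Z_{\mu_0}}e^{-\langle\phi,\cdot\rangle}\,d\mu_0$ with $\mu_0\in\textnormal{ex}\,\mathcal G_\theta^{gc}(\Phi_\phi,z\,dx)$, so that Theorem~\ref{thm:ergodicity} supplies ergodicity of $\mathbf P_\mu^{env}$; reversibility is just the $\mu$-symmetry of $\mathbf M^{env}$ recorded in Section~\ref{sec:existence}. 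For the family $(X_{[a,b]})$ I would take, for every bounded interval $[a,b]\subset[0,\infty)$, the $\R^d$-valued random variable $\eta_b-\eta_a$ furnished by Theorem~\ref{thm:displacement}(ii) (applied to the pair $a<b$), regarded as a functional on $C([0,\infty);\ddot\Gamma)$.

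Next I would verify hypotheses (i)--(v) for this choice. Property~(i), $\mathbf P_\mu^{env}$-a.s.\ $\sigma(\gamma_s\colon a\le s\le b)$-measurability, is exactly Theorem~\ref{thm:displacement}(i). Properties~(iv) (additivity $X_{[a,b]}+X_{[b,c]}=X_{[a,c]}$) and (v) (continuity of $t\mapsto X_{[0,t]}$) are immediate from $X_{[a,b]}=\eta_b-\eta_a$ for the continuous process $(\eta_t)_{t\ge0}$. For property~(ii) one writes, using Theorem~\ref{thm:displacement}(ii) together with Theorem~\ref{thm:martingallemma}(ii), $X_{[0,t]}=M_t+\int_0^t\langle\nabla\phi,\gamma_s\rangle\,ds$ where $(M_t)$ is $\sqrt{2}$ times a Brownian motion (hence in every $L^r$) and the drift term lies in $L^1$ by Lemma~\ref{lem:integrability} and stationarity of $\mu$; then $X_{[a,b]}=X_{[0,b]}-X_{[0,a]}\in L^1(\mathbf P_\mu^{env})$. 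Property~(iii) splits: covariance under time-shift follows from the Markov property and stationarity, while antisymmetry under time-reversal follows from the reversibility of $\mathbf M^{env}$ combined with the way $\eta_b-\eta_a$ was built in the proof of Theorem~\ref{thm:displacement} (as an $L^1$-limit of averages of particle-increments, which change sign when the environment path is run backwards) -- the rigorous form of the heuristic that the uniform motion of the environment is the reversed motion of the tagged particle.

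It then remains to produce the mean forward velocity and the square-integrable martingale. By Lemma~\ref{lem:meanforward} the limit \eqref{eqn:meanforward} exists in $L^1(\Gamma;\mu)$ and equals $\varphi=\langle\nabla\phi,\cdot\rangle$, and by Theorem~\ref{thm:martingallemma}(ii) the process $M_t=X_{[0,t]}-\int_0^t\varphi(\gamma_s)\,ds$ is $\sqrt{2}$ times a $d$-dimensional Brownian motion, in particular a square-integrable martingale (for its own and hence for the environment filtration). Moreover $\varphi\in L^p(\Gamma;\mu)\subset L^2(\Gamma;\mu)$ by Lemma~\ref{lem:integrability} and $p\ge2$, so the theorem of \cite{DeM89} yields that $(\varepsilon\,\eta_{\varepsilon^{-2}t})_{t\ge0}$ converges as $\varepsilon\to0$, weakly in $\mu$-measure, to a Brownian motion scaled by a constant matrix $D$ and started at $0$ (the latter because $X_{[0,0]}=0$); since $(\eta_t)$ is continuous, the convergence takes place in $C([0,\infty);\R^d)$.

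Finally I would transfer this to the tagged particle. By Theorem~\ref{thm:displacement}(ii) the law of $(\eta_t,\gamma_t)_{t\ge0}$ under $\mathbf P_\mu^{env}$ equals the law of $(\xi_t-\xi_0,\gamma_t)_{t\ge0}$ under $\mathbf P_{h\hat\mu}^{coup}$, and using Theorem~\ref{thm:decomposition} and Lemma~\ref{lem:identification1} to disintegrate over the initial environment $\gamma_0$, $\mathbf X^{coup,1}_t-\mathbf X^{coup,1}_0$ is $\mathbf P_{(\xi,\gamma)}^{coup}$-a.s.\ the same environment functional for $\hat\mu$-a.e.\ $(\xi,\gamma)$; hence the scaling limit above describes $(\varepsilon(\mathbf X_{\varepsilon^{-2}t}^{tag}-\mathbf X_0^{tag}))_{t\ge0}$ with the $\mu$-measure refinement retained. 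Since $h\in L^1(\R^d)\cap L^\infty(\R^d)$ is a probability density, $\mathbf X_0^{tag}$ has the fixed (finite) law $h\,dx$, so $\varepsilon\mathbf X_0^{tag}\to0$ and a Slutsky-type argument gives that $(\varepsilon\mathbf X_{\varepsilon^{-2}t}^{tag})_{t\ge0}$ has the same limit, which is the assertion. The main work has already been done in the preceding sections; the obstacle that remains here is the careful bookkeeping of the transfer -- in particular attaching a precise meaning to ``weakly in $\mu$-measure'' for $\mathbf X^{tag}$, whose initial position is \emph{not} a functional of the environment -- together with the verification of axiom~(iii), the time-reversal antisymmetry, from the explicit construction of the displacement functional.
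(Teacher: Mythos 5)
Your proposal follows essentially the same route as the paper's proof of this theorem: apply the cited result of De Masi, Ferrari, Goldstein and Wick to the environment process $\mathbf M^{env}$ with invariant measure $\mu$, obtaining ergodicity from Theorem~\ref{thm:ergodicity} via the representation in Lemma~\ref{lem:measures}, representing the displacement as the functional $(\eta_t)_{t\geq 0}$ from Theorem~\ref{thm:displacement}, identifying the martingale part via Theorem~\ref{thm:martingallemma}(ii), and using Lemma~\ref{lem:meanforward} for the mean forward velocity $\varphi=\langle\nabla\phi,\cdot\rangle$ together with $\varphi\in L^p\subset L^2$ from Lemma~\ref{lem:integrability}. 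Where the paper asserts ``it is not difficult to prove properties (i)--(v)'' and simply ``concludes'' the theorem, you usefully spell out the verification of (i)--(v) (in particular the time-reversal antisymmetry from the reversibility of $\mathbf M^{env}$ and the sign-change of the approximating Riemann--Stieltjes sums) and the Slutsky-type argument absorbing the negligible term $\varepsilon\mathbf X_0^{tag}$ when passing from the displacement $\xi_t-\xi_0$ back to $\mathbf X_t^{tag}$ itself; these are precisely the details the paper leaves implicit, and your treatment of them is sound.
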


\section*{Appendix}
\setcounter{theorem}{0}
\def\thetheorem{A\arabic{theorem}}
Here we give a refinement of \cite[Theorem 3.3]{FaGr08} enabling us to drop the condition (LS) on $\phi$ which is assumed there. In contrast to the situation in \cite{AKR98b} this is possible since we are only dealing with grand canonical (in contrast to canonical) Gibbs measures. We remark that an integration by parts formula for grand canonical Gibbs measures (with intensity measure $z\,dx$) has been shown without condition (LS) already in \cite{Yo96}. The present proof has the advantage that it makes use only of the improved Ruelle bound for tempered grand canonical Gibbs measures instead of deriving the necessary estimates by an approximation with finite dimensional systems with empty boundary conditions, and, in our opinion, is much more transparent than the proof given there. (Moreover, it becomes clear that it works for any tempered grand canonical Gibbs measures, not only for those for which the approximation with the mentioned finite dimensional systems holds.) Recall that we defined $B_v^{\phi,\mu}(\gamma):=\langle \textnormal{div}\,v,\gamma\rangle-\sum_{x\in\gamma} (\nabla\phi(x),v(x))_{\R^d}-\sum_{\{x,y\}\subset\gamma} (\nabla\phi(x-y),v(x)-v(y))_{\R^d}$, $\gamma\in\Gamma$. This expression is well-defined in the sense of $\mu$-a.e.~absolute convergence of the sums and integrable w.r.t.~$\mu$ by \cite[Theorem 4.1]{KK02} and the improved Ruelle bound.

\begin{theorem}\label{thm:intbyparts}
Let $\phi$ fulfill (SS), (LR), (I), (D$\mbox{L}^1$) and let $z>0$. Then for any $\mu\in \mathcal G_t^{gc}(\Phi_\phi,ze^{-\phi} \,dx)$, $F\in\mathcal FC_b^\infty(\mathcal D,\Gamma)$ and $v\in C_0^\infty(\R^d)$ it holds
$$
\int_{\Gamma} \nabla_v^\Gamma F\,d\mu=-\int_\Gamma F B_v^{\phi,\mu}\,d\mu.
$$
\end{theorem}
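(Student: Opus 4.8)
The plan is to obtain the formula from the Georgii--Nguyen--Zessin (GNZ) equation characterizing the measures in $\mathcal G_t^{gc}(\Phi_\phi,ze^{-\phi}\,dx)$, combined with an ordinary integration by parts on $\R^d$. The point of this route is that the GNZ identity turns the sum over the points of $\gamma$ defining $\nabla_v^\Gamma F$ into an integral over $\R^d$ against the Papangelou conditional intensity, which for our pair potential and reference intensity measure has the explicit Lebesgue density $x\mapsto z\,e^{-\phi(x)-E(x,\gamma)}$, where $E(x,\gamma):=\sum_{y\in\gamma}\phi(x-y)$. All the differentiability hypotheses (D$\mathrm{L}^1$) then enter only through a classical integration by parts in the single variable $x$; afterwards one undoes the GNZ step. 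No approximation by finite-volume systems with empty boundary conditions is used, and only the improved Ruelle bound --- available for every tempered grand canonical Gibbs measure --- enters.

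Concretely: for $x\in\gamma$ write $F(\gamma)=g_F(\langle f_1,\gamma\setminus\{x\}\rangle+f_1(x),\dots,\langle f_N,\gamma\setminus\{x\}\rangle+f_N(x))=:G_{\gamma\setminus\{x\}}(x)$, so that $\nabla^\Gamma F(\gamma)(x)=\nabla_y G_{\gamma\setminus\{x\}}(y)\big|_{y=x}$ and hence $\nabla_v^\Gamma F(\gamma)=\sum_{x\in\gamma}\bigl(\nabla_y G_{\gamma\setminus\{x\}}(x),v(x)\bigr)_{\R^d}$. Applying GNZ for $\mu$ with the integrand $h(x,\eta):=\bigl(\nabla_y G_\eta(x),v(x)\bigr)_{\R^d}$ yields
\[
\int_\Gamma\nabla_v^\Gamma F\,d\mu=\int_\Gamma\int_{\R^d}\bigl(\nabla_y G_\gamma(x),v(x)\bigr)_{\R^d}\,w_\gamma(x)\,dx\,d\mu(\gamma),\qquad w_\gamma(x):=z\,e^{-\phi(x)-E(x,\gamma)}.
\]
For $\mu$-a.e.\ $\gamma$, the weight $w_\gamma$ is weakly differentiable on $\R^d$ with $\nabla w_\gamma(x)=-w_\gamma(x)\bigl(\nabla\phi(x)+\sum_{y\in\gamma}\nabla\phi(x-y)\bigr)$, and $x\mapsto G_\gamma(x)\,w_\gamma(x)\,v(x)$ is a compactly supported $W^{1,1}$-vector field, so integration by parts in $x$ (no boundary term) gives
\[
\int_{\R^d}\bigl(\nabla_y G_\gamma(x),v(x)\bigr)w_\gamma(x)\,dx=-\int_{\R^d}G_\gamma(x)\,w_\gamma(x)\Bigl(\mathrm{div}\,v(x)-(\nabla\phi(x),v(x))-\sum_{y\in\gamma}(\nabla\phi(x-y),v(x))\Bigr)dx.
\]
Substituting back and reading the right-hand side through GNZ a second time (now with $h(x,\eta)=-F(\eta\cup\{x\})\bigl(\mathrm{div}\,v(x)-(\nabla\phi(x),v(x))-\sum_{y\in\eta}(\nabla\phi(x-y),v(x))\bigr)$, using $G_\eta(x)=F(\eta\cup\{x\})$ and noting that $w$ supplies exactly the weight $z\,e^{-\phi(x)-E(x,\eta)}$) gives
\[
\int_\Gamma\nabla_v^\Gamma F\,d\mu=-\int_\Gamma F(\gamma)\sum_{x\in\gamma}\Bigl(\mathrm{div}\,v(x)-(\nabla\phi(x),v(x))-\sum_{y\in\gamma\setminus\{x\}}(\nabla\phi(x-y),v(x))\Bigr)d\mu(\gamma).
\]
Finally, splitting the ordered double sum into unordered pairs and using that $\nabla\phi$ is odd (as $\phi$ is even), $\sum_{x\in\gamma}\sum_{y\in\gamma\setminus\{x\}}(\nabla\phi(x-y),v(x))=\sum_{\{x,y\}\subset\gamma}(\nabla\phi(x-y),v(x)-v(y))$, so the bracket becomes exactly $B_v^{\phi,\mu}(\gamma)$, which is the claim.

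The substantial work lies in justifying the two GNZ applications and the inner integration by parts. For the former one uses that GNZ holds for every $\mu\in\mathcal G_t^{gc}(\Phi_\phi,ze^{-\phi}\,dx)$ (the Nguyen--Zessin theorem; superstability and lower regularity supply the needed local integrability), and that it applies to the signed integrands above --- for which one first checks $\mu\otimes dx$-integrability of $|h|$, i.e.\ essentially of $|v|$ times the single-point and pair sums in $B_v^{\phi,\mu}$, via the improved Ruelle bound and \cite[Theorem 4.1]{KK02}; this is precisely the well-definedness and integrability of $B_v^{\phi,\mu}$ recalled just before the theorem. For the inner step one needs, for $\mu$-a.e.\ $\gamma$, that $\sum_{y\in\gamma}\nabla\phi(\cdot-y)$ converges locally absolutely (by (D$\mathrm{L}^1$) and the Ruelle bound) and that $\prod_{y\in\gamma}e^{-\phi(\cdot-y)}$ converges (by (I)) and is weakly differentiable on all of $\R^d$ with the product-rule derivative; here it is essential to use the weak differentiability of $e^{-\phi}$ on $\R^d$ --- not merely that of $\phi$ on $\R^d\setminus\{0\}$ --- in order to cross the non-integrable singularity, exactly as in (D$\mathrm{L}^1$). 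A parallel, essentially equivalent, route is to localize $\mu$ by the DLR equation in a bounded window containing $\mathrm{supp}\,v$ and all $\mathrm{supp}\,f_i$, and to differentiate $t\mapsto\int_\Gamma F(\psi_t\gamma)\,d\mu(\gamma)$ at $t=0$ along the flow $(\psi_t)$ of $v$ on $\R^d$, collecting the Jacobian term ($\mathrm{div}\,v$), the $e^{-\phi}$-reference-measure term ($(\nabla\phi,v)$) and the interaction-energy term ($\sum(\nabla\phi(x-y),v(x)-v(y))$); the dominated-convergence estimates there are controlled by the same Ruelle-bound tools.
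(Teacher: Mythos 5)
Your GNZ-based route is correct, but it is genuinely different from the paper's argument. The paper localizes via the Dobrushin--Lanford--Ruelle equation to a bounded window $\Lambda\supset\mathrm{supp}\,v$, writes $\int_\Gamma\nabla_v^\Gamma F\,d\mu$ as a sum of $n$-particle Lebesgue integrals against the density $e^{-U_\phi^{\gamma,\Lambda}}$, integrates by parts there, and then reassembles the terms into $B_v^{\phi,\mu}$; you instead apply the Georgii--Nguyen--Zessin identity twice, sandwiching a single-variable integration by parts against the Papangelou density $w_\gamma(x)=z\,e^{-\phi(x)-W_\phi(x,\gamma)}$, and the final pair-symmetrization using oddness of $\nabla\phi$ is the same in both proofs. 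Your route is arguably cleaner in that it avoids the finite-volume partition-function bookkeeping and makes the role of the conditional intensity explicit; it is legitimate here because the paper's reference [KK03] characterizes the tempered grand canonical Gibbs measures precisely so that the GNZ/Mecke identity is available, and your integrability checks for the two (signed) GNZ applications are exactly the ones the paper uses to make $B_v^{\phi,\mu}$ well defined (improved Ruelle bound plus [KK02, Theorem 4.1]). What both proofs share as their hardest ingredient is the $\mu$-a.e.\ statement that $e^{-W_\phi(\cdot,\gamma)}$ (hence $w_\gamma$, after multiplying by the bounded $W^{1,1}$-function $e^{-\phi}$) lies in $W^{1,1}_{loc}(\R^d)$ with the product-rule derivative given by the absolutely convergent sum $-\,e^{-W_\phi(\cdot,\gamma)}\sum_{y\in\gamma}\nabla\phi(\cdot-y)$: you correctly identify this, including the key point that it is the weak differentiability of $e^{-\phi}$ on all of $\R^d$ (not of $\phi$ off the origin) that carries you across the non-integrable singularity, but you only assert it; in the paper this is exactly Lemma A2, whose proof (restriction to the events $A_M=\{W_{\phi^-}\le M \text{ on } \Lambda\}$, approximation of $\gamma$ by $\gamma_{B_n(0)}$, and dominated convergence in $L^1(\Lambda\times\Gamma;dx\otimes\mu)$ via the Ruelle bound, (LR) and (I)) constitutes the bulk of the work and would have to be supplied verbatim to complete your argument. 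Your closing alternative (DLR localization plus differentiation along the flow of $v$) is essentially the paper's proof in disguise, with the flow derivative replacing the direct finite-dimensional integration by parts.
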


\begin{remark}
\begin{enumerate}
\item Note that the above theorem implies \eqref{eqn:ibp1} (replace $F$ by $FG$ and use the product rule). 
\item It is also possible to derive analogous results for tempered Gibbs measures $\mu\in\mathcal G_t^{gc}(\Phi_\phi,z\rho\,dx)$, when $\rho$ is a bounded function from $W^{1,1}(\R^d)$, if in the definition of $B_v^{\phi,\mu}$ one replaces $\sum_{x\in\gamma} (\nabla\phi(x),v(x))_{\R^d}$ by $\sum_{x\in\gamma} \left(-\frac{\nabla\rho}{\rho}(x),v(x)\right)_{\R^d}$, where one sets $\frac{\nabla\rho}{\rho}(x)=0$ for $\rho(x)=0$. (The proof of the main Lemma \ref{lem:ibpprep} below remains exactly the same, since all such measures $\mu$ fulfill a Ruelle bound).
\item Condition (D$\mbox{L}^1$) can be weakend to just assuming that $e^{-\phi}$ is weakly differentiable and $\nabla e^{-\phi}\in L^1(\R^d)$. One then sets $\nabla\phi:= e^{\phi} 1_{\{\phi<\infty\}} \nabla e^{-\phi}$ and observes that $\nabla e^{-\phi}=- e^{-\phi}\nabla\phi$ Lebesgue-a.e.
\end{enumerate}
\end{remark}

The main work for proving the above theorem is contained in the following lemma.
\begin{lemma}\label{lem:ibpprep}
Let $\phi$ and $\mu$ be as in the above theorem. There is a set $\mathcal M\subset \Gamma$ with $\mu$-measure $0$ such that for all $\gamma\in\Gamma\setminus \mathcal M$ the function
$$
W_\phi(\cdot,\gamma):=\sum_{x\in\gamma} \phi(\cdot-x),
$$
exists - in the sense of convergence of the negative parts of the sum - as a locally bounded below function $\R^d\to\R\cup\{\infty\}$ and is almost everywhere finite, and that $e^{-W_\phi(\cdot,\gamma)}\in W_{loc}^{1,1}(\R^d)$, and the derivative is given by $\nabla e^{-W_\phi(\cdot,\gamma)}\sum_{x\in\gamma} \nabla\phi(x-\cdot)$ a.e.~on $\R^d$, where the sum converges absolutely Lebesgue-a.e.
\end{lemma}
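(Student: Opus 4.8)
The plan is to establish all three assertions for $\mu$-a.e.\ $\gamma$ by reducing each of them to the finiteness of a quantity obtained after also integrating over $\gamma$ against $\mu$ and over $z$ over a bounded box, and then invoking Fubini together with a countable exhaustion of $\R^d$ by boxes. The tool for the $\gamma$-integration is \cite[Theorem 4.1]{KK02}, which expresses $\int_\Gamma\langle h,\cdot\rangle\,d\mu$ (and its pair analogue, and the variants carrying the extra factor $e^{-W_\phi(z,\cdot)}$) in terms of the correlation functions of $\mu$; combined with the ordinary and the \emph{improved} Ruelle bound (cf.\ \cite[Eq.~(4.29)]{AKR98b}) for $\mu$ this is exactly what makes the $L^1(\sigma_\phi)$-type integrability in (D$\mbox{L}^1$) and (I) sufficient, even though $\nabla\phi$ and $\phi^+$ themselves need not be in $L^1(dx)$. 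Two elementary observations will be used throughout: testing (SS) against two-point configurations shows $\phi\ge c_0$ on $\R^d\setminus\{0\}$ for some $c_0\in\R$, so that $\phi^-$ is bounded, $\psi$ in (LR) may be taken bounded, $e^{-\phi}$ is bounded, and $\rho^{(1)}_\mu$ is bounded; and, using $t\,e^{-t}\le e^{-1}$ together with $\phi\ge c_0$ and (I), both $\phi^+e^{-\phi}$ and $|\nabla\phi|e^{-\phi}$ lie in $L^1(\R^d)$.

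For the existence, local lower boundedness and a.e.\ finiteness of $W_\phi(\cdot,\gamma)$, fix a bounded box $\Lambda$. From $\phi^-(z-x)\le\psi(|z-x|_2)$ and monotonicity and boundedness of $\psi$, for $z\in\Lambda$ we have $\sum_{x\in\gamma}\phi^-(z-x)\le \sum_{x\in\gamma,\,\mathrm{dist}(x,\Lambda)<1}\sup\psi\;+\;\langle\psi(\mathrm{dist}(\cdot,\Lambda))\,1_{\{\mathrm{dist}(\cdot,\Lambda)\ge1\}},\gamma\rangle$; the first sum is finite by local finiteness of $\gamma$, and \cite[Theorem 4.1]{KK02} with the Ruelle bound gives $\int_\Gamma\langle\psi(\mathrm{dist}(\cdot,\Lambda))1_{\{\cdot\ge1\}},\cdot\rangle\,d\mu\le C\int_{\{\mathrm{dist}(x,\Lambda)\ge1\}}\psi(\mathrm{dist}(x,\Lambda))\,dx<\infty$ by $\int_0^\infty\psi(t)t^{d-1}\,dt<\infty$. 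Hence $C_\gamma(\Lambda):=\sup_{z\in\Lambda}\sum_{x\in\gamma}\phi^-(z-x)<\infty$ for $\mu$-a.e.\ $\gamma$, which yields convergence of the negative parts and local lower boundedness of $W_\phi(\cdot,\gamma)$. For a.e.\ finiteness: since $\{\phi>1\}$ has finite Lebesgue measure by (I) and $\gamma$ is locally finite, off the Lebesgue-null set $\gamma$ the singular part of $\sum_x\phi^+(z-x)$ is a finite sum, while $\int_\Gamma\langle\phi^+\wedge1,\cdot\rangle\,d\mu<\infty$ (as $\phi^+\wedge1\lesssim|1-e^{-\phi}|\in L^1$) forces $\sum_x(\phi^+\wedge1)(z-x)<\infty$ for a.e.\ $z$. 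A countable exhaustion by boxes $\Lambda$ produces a single $\mu$-null exceptional set.

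For $e^{-W_\phi(\cdot,\gamma)}\in W^{1,1}_{\mathrm{loc}}(\R^d)$ and the derivative formula I would approximate: for $\Lambda_n\uparrow\R^d$ put $W^{(n)}(z,\gamma):=\sum_{x\in\gamma\cap\Lambda_n}\phi(z-x)$, a finite sum. By (D$\mbox{L}^1$) and $\phi\ge c_0$ each factor $e^{-\phi(\cdot-x)}$ belongs to $W^{1,1}_{\mathrm{loc}}(\R^d)\cap L^\infty(\R^d)$ with $\nabla e^{-\phi}=-e^{-\phi}\nabla\phi\in L^1(\R^d)$, so by the product rule $e^{-W^{(n)}(\cdot,\gamma)}=\prod_{x\in\gamma\cap\Lambda_n}e^{-\phi(\cdot-x)}\in W^{1,1}_{\mathrm{loc}}(\R^d)$ with gradient $e^{-W^{(n)}(\cdot,\gamma)}\sum_{x\in\gamma\cap\Lambda_n}\nabla\phi(x-\cdot)$, where $\phi(u)=\phi(-u)$ is used to pass between $\nabla\phi(z-x)$ and $\nabla\phi(x-z)$. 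Letting $n\to\infty$ on a box $\Lambda$, one has $W^{(n)}(z,\gamma)\to W_\phi(z,\gamma)$ in $(-\infty,\infty]$ for a.e.\ $z$ (the negative parts being summable), so $e^{-W^{(n)}(\cdot,\gamma)}\to e^{-W_\phi(\cdot,\gamma)}$ pointwise a.e., dominated by the constant $e^{C_\gamma(\Lambda)}\in L^1(\Lambda)$. Keeping one factor $e^{-\phi(z-x)}$ gives the termwise bound $e^{-W^{(n)}(z,\gamma)}|\nabla\phi(z-x)|\le e^{C_\gamma(\Lambda)}|\nabla\phi(z-x)|e^{-\phi(z-x)}$, and $\int_\Gamma\langle|\nabla\phi(z-\cdot)|e^{-\phi(z-\cdot)},\cdot\rangle\,d\mu\le C\|\nabla\phi\|_{L^1(\R^d;\sigma_\phi)}<\infty$ by \cite[Theorem 4.1]{KK02} and boundedness of $\rho^{(1)}_\mu$, so $e^{C_\gamma(\Lambda)}\sum_{x\in\gamma}|\nabla\phi(\cdot-x)|e^{-\phi(\cdot-x)}\in L^1(\Lambda)$ is a dominating function for the gradients; combined with the a.e.\ finiteness of $W_\phi$ this also forces $\sum_{x\in\gamma}\nabla\phi(x-\cdot)$ to converge absolutely a.e. Dominated convergence then gives $e^{-W^{(n)}(\cdot,\gamma)}\to e^{-W_\phi(\cdot,\gamma)}$ and $\nabla e^{-W^{(n)}(\cdot,\gamma)}\to e^{-W_\phi(\cdot,\gamma)}\sum_{x\in\gamma}\nabla\phi(x-\cdot)$ in $L^1(\Lambda)$, and closedness of the weak-gradient operator in $L^1_{\mathrm{loc}}$ yields the claim with the derivative as asserted in the lemma.

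The main obstacle is getting these weighted $L^1_{\mathrm{loc}}$-bounds right — concretely, that the dominating functions ($e^{C_\gamma(\Lambda)}$ and $e^{C_\gamma(\Lambda)}\sum_x|\nabla\phi(\cdot-x)|e^{-\phi(\cdot-x)}$) are genuinely $\mu$-a.s.\ finite/integrable — which is exactly where one needs the improved Ruelle bound (equivalently a Georgii--Nguyen--Zessin/Mecke-type identity for $\mu$) rather than the bare estimate \cite[Theorem 4.1]{KK02}, and where the only-$L^1(\sigma_\phi)$ integrability of $\nabla\phi$ and the $|1-e^{-\phi}|$-integrability of (I) are used to stay clear of the non-integrability of $\nabla\phi$ and $\phi^+$ near the singularity. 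A secondary but necessary point is the order of quantifiers: one must fix the $\mu$-null exceptional set via a countable family of boxes (and, for the a.e.-finiteness step, via the splitting of $\phi^+$ into $\phi^+\wedge1$ and its singular part) before running the $n\to\infty$ limit, so that the final statement holds simultaneously on all of $\R^d$ for $\mu$-a.e.\ $\gamma$.
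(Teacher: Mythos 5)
Your proposal is essentially the paper's argument: the same three ingredients (local boundedness of $W_{\phi^-}$ from (LR), a.e.\ finiteness of $W_\phi$ by splitting $\phi^+$ at level $1$ using (I), and weak differentiability by truncating $\gamma$ to bounded balls/boxes and passing to the limit in $W^{1,1}(\Lambda)$), all fed by \cite[Theorem 4.1]{KK02} plus the Ruelle bound. The only structural difference is in how the limit is organized: the paper proves convergence of the truncated gradients in $L^1(\Lambda\times\Gamma;dx\otimes\mu)$ after stratifying by the sets $A_M=\{W_{\phi^-}(\cdot,\gamma)\le M\mbox{ on }\Lambda\}$ and then extracts an a.s.\ convergent subsequence, while you fix $\gamma$ and run dominated convergence in $L^1(\Lambda)$ with the random constant $e^{C_\gamma(\Lambda)}$ and the a.s.\ integrable dominating function $e^{C_\gamma(\Lambda)}\sum_{x\in\gamma}\vert\nabla\phi(\cdot-x)\vert e^{-\phi(\cdot-x)}$; this avoids the stratification and the subsequence and is a legitimate, slightly cleaner variant. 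Your preliminary observations ($\phi$ bounded below by testing (SS) on two-point configurations, hence $\psi$ may be taken bounded) are correct and in fact tidy up a point the paper leaves implicit.

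One justification as written is wrong, though easily repaired: in the a.e.-finiteness step you claim that, because $\{\phi>1\}$ has finite Lebesgue measure and $\gamma$ is locally finite, for a.e.\ $z$ only finitely many $x\in\gamma$ satisfy $\phi(z-x)>1$. This is not a valid deterministic deduction: a set of finite measure need not be bounded, and one can construct locally finite configurations and finite-measure sets $S$ such that $z-S$ contains infinitely many points of $\gamma$ for all $z$ in a set of positive measure. The finiteness is a $\mu$-a.e.\ statement and must be obtained exactly as the paper does (and as your own announced scheme prescribes), namely from $\int_\Gamma\int_\Lambda\sum_{x\in\gamma}1_{\{\phi(z-x)\ge 1\}}\,dz\,d\mu(\gamma)\le C\vert\Lambda\vert\,\vert\{\phi\ge 1\}\vert<\infty$; the same remark applies to your line ``$\int_\Gamma\langle\phi^+\wedge 1,\cdot\rangle\,d\mu<\infty$ forces $\sum_x(\phi^+\wedge 1)(z-x)<\infty$ for a.e.\ $z$'', which needs the translated test function integrated over $z\in\Lambda$, not the sum over the points of $\gamma$ themselves. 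Note also that this a.e.-finite-multiplicity statement is what you later use to pass from summability of $\sum_x\vert\nabla\phi(z-x)\vert e^{-\phi(z-x)}$ to absolute convergence of the unweighted sum, so it cannot be skipped. Finally, a small inaccuracy: for this lemma the plain Ruelle bound (first correlation function) together with \cite[Theorem 4.1]{KK02} suffices, exactly as in your displayed estimates; the \emph{improved} Ruelle bound (and no Mecke/GNZ identity) is only needed elsewhere, for the pair sums in $B_v^{\phi,\mu}$.
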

\begin{proof}
We compute for any open relatively compact $\Lambda\subset\R^d$ using the Ruelle bound (which holds for $\mu$, as mentioned in Section \ref{sec:conditions})
$$
\int_\Gamma \int_\Lambda \sum_{x\in\gamma} \vert (\phi\wedge 1)(y-x)\vert\,dy\,d\mu(\gamma)\leq C\int_{\R^d}\int_\Lambda \vert (\phi\wedge 1)\vert(y-x)\,dy\,dx\leq C\vert\Lambda\vert\,\Vert \phi\wedge 1\Vert_{L^1(\R^d)}<\infty,
$$
for some $C<\infty$, where $\vert\Lambda\vert$ denotes the volume of $\Lambda$. This implies that $\mu$-a.e.~we have $dx$-a.e.~finiteness of $W_{\phi\wedge 1}$. Moreover, by (I) we have
$$
\int_\Gamma\int_\Lambda \sum_{x\in\gamma} 1_{\{\phi(y-x)\geq 1\}}\,dy\,d\mu(\gamma)\leq C\vert \Lambda\vert \cdot \vert \{x\in\R^d\,|\,\phi(x)\geq 1\}\vert<\infty,
$$
which implies that for $\mu$-a.e.~$\gamma$ we have $\int_\Lambda \sharp\{x\in\gamma\,|\,\phi(y-x)\geq 1\}\,dy<\infty$, so Lebesgue-a.e.~we have only finitely many particles in the region which makes the difference between $W_{\phi\wedge 1}$ and $W_\phi$. Thus and since $\mu$-a.s.~we have $\phi(x)<\infty$ for all $x\in \gamma$ (because $\phi<\infty$ Lebesgue-a.e.), it follows that for $\mu$-a.e.~$\gamma$ we have Lebesgue-a.e.~finiteness of $W_\phi(\cdot,\gamma)$. Moreover, we have using the Ruelle bound and (LR)
$$
\int_\Gamma \sup_{y\in\Lambda} W_{\phi^-}(y,\gamma)\,dy\leq \int_\Gamma \left\langle \sup_{y\in\Lambda} \psi(\vert y-\cdot\vert_2),\gamma\right\rangle\,d\mu(\gamma)\leq C \int_{\R^d} \sup_{y\in\Lambda}\psi(\vert y-x\vert_2)\,dx,
$$
which is finite because $\psi$ is decreasing. It follows that $\mu$-a.s.~we have that $W_{\phi^-}(\cdot,\gamma)$ is locally bounded and thus $W_{\phi}(\cdot,\gamma)$ is locally bounded from below.

Now we prove the weak differentiability statement, and we may restrict to $\Lambda\subset\R^d$ as above. At first let us have a look at what we consider to be the weak derivative. It holds for $A_M:=\{\gamma\in\Gamma\,|\,W_{\phi^-}(\cdot,\gamma)\leq M\,\mbox{on $\Lambda$}\}$
\begin{multline}\label{eqn:beschrankt}
\int_{A_M}\int_\Lambda \sum_{x\in\gamma} \vert\nabla\phi(y-x)\vert e^{-W_\phi(y,\gamma)}\,dy\,d\mu(\gamma)\leq \int_{A_M} \int_\Lambda \sum_{x\in\gamma} \vert\nabla e^{-\phi(y-x)}\vert e^{W_{\phi^-}(y,\gamma\setminus\{x\})}\,dy\,d\mu(\gamma)\\
\leq e^M \int_\Gamma \int_\Lambda \left\langle \vert \nabla e^{-\phi(\cdot-y)}\vert,\gamma\right\rangle\,dy\,d\mu(\gamma)\leq C\,e^M \vert \Lambda\vert \int_{\R^d} \vert\nabla e^{-\phi}\vert\,dx<\infty,
\end{multline}
which implies existence for $\mu$-a.e.~$\gamma$ of what we consider to be the weak derivative in the sense of Lebesgue-a.e.~convergence of the sum.

Clearly, the function $e^{-W_\phi(\cdot,\gamma_{B_n(0)})}$ is weakly differentiable for all $n$ and $\gamma$, where $\gamma_{B_n(0)}:=\gamma\cap B_n(0)$. The weak derivative is given by
$$
\sum_{x\in \gamma_{B_n(0)}}-(\nabla\phi(\cdot-x)) e^{-W_\phi(\cdot,\gamma_{B_n(0)})}.
$$
We need to prove that $\mu$-a.e.~this converges to the above expression in $L^1(\Lambda)$, at least for a subsequence; therefore, it suffices to prove convergence in $L^1(\Lambda\times\Gamma;dx\otimes\mu)$. We may restrict to $A_M$:
\begin{eqnarray}\label{eqn:dadeldu}
\lefteqn{\int_{A_M}\left\vert\int_\Lambda \sum_{x\in\gamma} \nabla\phi(y-x) e^{-W_\phi(y,\gamma)}-\sum_{x\in\gamma_{B_n(0)}} \nabla\phi(y-x) e^{-W_\phi(y,\gamma_{B_n(0)})}\right\vert\,dy\,d\mu(\gamma)}\nonumber\\
& &\leq \int_{A_M} \int_\Lambda \left\vert \sum_{x\in\gamma\setminus{B_n(0)}} \nabla\phi(y-x) e^{-W_\phi(y,\gamma)}\right\vert\,dy\,d\mu(\gamma)\nonumber\\
& &\quad\quad +\int_{A_M} \int_\Lambda \sum_{x\in \gamma_{B_n(0)}} \vert \nabla(e^{-\phi(y-x)})\vert\,\left\vert e^{-W_\phi(y,\gamma\setminus\{x\})}-e^{-W_\phi(y,\gamma_{B_n(0)}\setminus \{x\})}\right\vert\,dy\,d\mu(\gamma).\quad\quad
\end{eqnarray}
For the first summand we find by an estimate as in \eqref{eqn:beschrankt} that it converges to $0$. For the second one we write
$$
\int_{A_M} \int_\Lambda \sum_{x\in \gamma_{B_n(0)}} \vert \nabla(e^{-\phi(y-x)})\vert\,\left\vert e^{-W_\phi(y,\gamma\setminus\{x\})}-e^{-W_\phi(y,\gamma_{B_n(0)}\setminus \{x\})}\right\vert\,dy\,d\mu(\gamma)=\int \theta_n\,d\mu^*(y,x,\gamma),
$$
where $\mu^*(A):=\int_\Lambda \int_\Gamma \sum_{x\in\gamma} 1_A(y,x,\gamma)\,d\mu(\gamma)dy$, $A\in\mathcal B(\R^d)\otimes \mathcal B(\R^d)\otimes \mathcal B(\Gamma)$ and
$$
\theta_n(y,x,\gamma):=\vert \nabla e^{-\phi(y-x)}\vert 1_{B_n(0)}(x)\left\vert e^{-W_\phi(y,\gamma\setminus\{x\})}-e^{-W_\phi(y,\gamma_{B_n(0)}\setminus\{x\})}\right\vert,\quad y\in\Lambda, \gamma\in \Gamma, x\in \gamma.
$$
Now we know that for a.e.~$\gamma$ and $y$ we have that $\phi(y-x)<\infty$ and $\vert \nabla e^{-\phi(y-x)}\vert<\infty$ for all $x\in\gamma$ and $W_\phi(y,\gamma_{B_n(0)})\setminus\{x\})\to W_\phi(y,\gamma\setminus \{x\})$. Hence we have $\mu^*$-a.s.~convergence to $0$ of the $\theta_n$. Moreover, for $\gamma\in A_M$, $y\in\Lambda$ and $x\in\gamma$ we have
$$
\theta_n(y,x,\gamma)\leq 2 \vert \nabla e^{-\phi(y-x)}\vert e^M,
$$
but
$$
\int \vert \nabla e^{-\phi(y-x)}\vert \,d\mu^*(y,x,\gamma)=\int_\Lambda\int_\Gamma \left\langle \vert\nabla e^{-\phi(y-\cdot)}\vert,\gamma\right\rangle\,d\mu(\gamma)\,dy\leq C\vert \Lambda\vert\,\Vert \nabla e^{-\phi}\Vert_{L^1(\R^d)}<\infty,
$$
hence by the dominated convergence theorem we obtain $\theta_n\to 0$ in $L^1(\Lambda\times\R^d\times\Gamma;\mu^*)$, and thus convergence of the left-hand side of \eqref{eqn:dadeldu} to $0$ as $n\to\infty$. It follows that for a sequence $(n_k)_{k\in\N}\subset \N$ we have for $\mu$-a.e.~$\gamma\in A_M$ that $\sum_{x\in\gamma_{B_{n_k}(0)}} \nabla\phi(\cdot-x) e^{-W_\phi(\cdot,\gamma_{B_{n_k}(0)})}\to \sum_{x\in\gamma} \nabla\phi(\cdot-x) e^{-W_\phi(\cdot,\gamma)}$ in $L^1(\Lambda)$. Thus the mentioned weak differentiability holds for $\mu$-a.e.~$\gamma\in A_M$, and taking the union over all $M\in\N$, the assertion follows.
\end{proof}

\begin{proof}[Proof of Theorem \ref{thm:intbyparts}]

Choose $\Lambda\subset\R^d$ open, relatively compact and large enough such that $F$ does not depend on the configuration outside $\Lambda$, and the support of $v$ is contained in $\Lambda$. We set 
$$
U_{\phi}^{\gamma,\Lambda}(\{x_1,\cdots,x_n\}):=\sum_{i<j}\phi(x_i-x_j)+\sum_{i=1}^n W_\phi(x_i,\gamma\setminus \Lambda)+\sum_{i=1}^n \phi(x_i)
$$ 
for $x_1,\cdots,x_n\in \Lambda$ and $\gamma\in\Gamma$. The Dobrushin-Lanford-Ruelle equation implies that
\begin{equation}\label{eqn:DLR}
\int_\Gamma \nabla_v^\Gamma F\,d\mu=\frac{1}{Z}\int_\Gamma\sum_{n=0}^\infty \frac{z^n}{n!}\int_{\Lambda^n} \sum_{i=1}^n v(x_i)\nabla_{x_i}F(\{x_1,\cdots,x_n\}) e^{-U_{\phi}^{\gamma,\Lambda}(\{x_1,\cdots,x_n\})}\,dx_1\cdots dx_n\,d\mu(\gamma),
\end{equation}
where $Z$ is a normalization constant. Due to Lemma \ref{lem:ibpprep} we may apply integration by parts for all $\gamma\in\Gamma\setminus \mathcal M$ (and thus for $\mu$-a.e.~$\gamma$) to the inner integral on the right-hand side, which therefore equals
\begin{multline*}
-\sum_{i=1}^n \int_{\Lambda^n} F(\{x_1,\cdots,x_n\})\,e^{-U_\phi^{\gamma,\Lambda}(\{x_1,\cdots,x_n\},\gamma)}\bigg(\textnormal{div}\,v(x_i)-v(x_i) \nabla\phi(x_i)\\-\sum_{j\neq i} v(x_i)\nabla\phi(x_i-x_j)-\sum_{y\in\gamma} v(x_i)\nabla\phi(x_i-y)\bigg)dx_1\cdots dx_n.
\end{multline*}
Taking the sum inside the brackets in the above expression and summing up the first summand in the bracket yields $\langle \textnormal{div}\,v,\{x_1,\cdots,x_n\}\cup\gamma\setminus \Lambda\rangle$, the second gives $\langle v\nabla\phi,\{x_1,\cdots,x_n\}\cup\Gamma\setminus\Lambda)$ from the last two we obtain $\sum_{\{z,z'\}\subset \{x_1,\cdots,x_n\}\cup \gamma\setminus\Lambda} \nabla\phi(z-z')(v(z)-v(z'))$. Therefore, the above expression equals
$$
-\int_{\Lambda^n} F(\{x_1,\cdots,x_n\})\,e^{-U_\phi^{\gamma,\Lambda}(\{x_1,\cdots,x_n\},\gamma)} B_{v}^{\phi,\mu}(\{x_1,\cdots,x_n\}\cup \gamma\setminus\Lambda)\,dx_1\cdots dx_n.
$$
Replacing the inner integral in \eqref{eqn:DLR} by this expression and noting that the sum over $n$ exists by integrability of $B_v^{\phi,\mu}$ w.r.t.~$\mu$, we obtain the assertion.
\end{proof}

\vspace{2ex}\noindent{\textbf{Acknowledgement:}} We would like to thank Yuri Kondratiev, Tobias Kuna, Eugene Lytvy\-nov, Michael R\"ockner and Sven Struckmeier for helpful discussions and remarks. Financial support by the DFG via project GR 1809/8-1 is gratefully acknowledged.

\vspace{2ex}

\end{document}